\newtheorem{theorem}{Theorem}[section]
\newtheorem{lemma}[theorem]{Lemma}
\newtheorem{proposition}[theorem]{Proposition}
\theoremstyle{definition}
\newtheorem{definition}[theorem]{Definition}
\newtheorem{example}[theorem]{Example}
\newtheorem{remark}[theorem]{Remark} 
\numberwithin{equation}{section}
\def\HD{H_{D}}
\def\HDop{\HD^{\oplus}}
\def\UD{U^{D}}
\def\UN{U^{N}}
\def\OmQ{\Omega}
\def\SQ{S}
\def\OmC{\Omega_{cl}}
\def\SC{S_{cl}}
\def\PP{\mathbb{\Pi}}
\def\BB{\mathbb{B}}
\newcommand{\pd}[2]{\frac{\partial {#1}}{\partial {#2}}}
\def\unit{\bm{1}}
\def\fv{\mathbf{f}}
\def\MC{\mathbb{S}}
\def\Hfree{H_{0}}
\def\Ufree{U^{0}}
\def\HK{H_K}
\def\Gc{\breve{G}}
\def\GG{\mathcal{G}}
\def\eps{\varepsilon}
\def\z*{\bar z}
\def\S{\mathcal S}
\def\C{\mathcal C}
\def\dom{\text{\rm dom}}
\def\ran{\text{\rm ran}}
\def\RE{\mathbb R}
\def\CO{{\mathbb C}}
\def\dd{\displaystyle}
\def\ph*{\phi_\star}
\def\be{\begin{equation}}
\def\ee{\end{equation}}
\def\min{{\rm min}}
\def\-{{\rm in}}
\def\+{{\rm ex}}
\newcommand{\sgn}{\operatorname{sgn}}
\def\Im{\operatorname{Im}}
\def\Re{\operatorname{Re}}
\def\sigmap{\breve{\sigma}}
\def\sigmaq{\sigma}
\def\sigmazero{\sigma_0}
\def\Szero{A}
\def\RD{R^{D}}
\def\LD{L_D}
\DeclareMathOperator*{\slim}{s-lim}
\title[The semiclassical limit on a star-graph with Kirchhoff conditions]{The semiclassical limit on a star-graph with Kirchhoff conditions}
\author{Claudio Cacciapuoti}
\address{DiSAT, Sezione di Matematica, Universit\`a dell'Insubria, via Valleggio 11, I-22100
Como, Italy}
\email{claudio.cacciapuoti@uninsubria.it}
\author{Davide Fermi}
\address{Classe di Scienze, Scuola Normale Superiore, Piazza dei Cavalieri 7, I-56126 Pisa, Italy
}
\email{fermidavide@gmail.com}
\author{Andrea Posilicano}
\address{DiSAT, Sezione di Matematica, Universit\`a dell'Insubria, via Valleggio 11, I-22100
Como, Italy}
\email{andrea.posilicano@uninsubria.it}
\thanks{The authors acknowledge the support of the National Group of Mathematical Physics (GNFM-INdAM)}
\begin{document}

\begin{abstract} 
We consider the dynamics of a quantum particle of mass $m$ on a $n$-edges  star-graph with Hamiltonian  $H_K=-(2m)^{-1}\hbar^2 \Delta$ and Kirchhoff conditions in the vertex. We describe  the semiclassical limit of the quantum evolution of an initial state supported on one of the edges and close to a Gaussian coherent state. We define the limiting  classical dynamics through a Liouville operator on the graph, obtained by means of Kre\u{\i}n's theory of singular perturbations of self-adjoint operators. For the same class of initial states, we study the semiclassical limit of the wave and scattering operators for the couple $(H_K,\HDop)$, where $\HDop$ is the free Hamiltonian with Dirichlet conditions in the vertex.  
\end{abstract}

\maketitle

\begin{footnotesize}
\emph{Keywords: Semiclassical dynamics; quantum graphs; coherent states; scattering theory.} 

\emph{MSC 2010:
81Q20; 
81Q35; 
47A40.
}  
\end{footnotesize}

\section{Introduction}
Aim of this work is to provide the semiclassical dynamics and scattering for an approximate coherent state  propagating freely on a star-graph, in presence of Kirchhoff conditions in the vertex.

Since the pioneering work of Kottos and Smilansky  \cite{KS-prl97}, having in mind applications to quantum chaos,  the semiclassical limit of quantum graphs is often understood as the study of the distribution of  eigenvalues (or resonances, see \cite{KS-prl00}) of self-adjoint realizations of $-(2m)^{-1}\hbar^2 \Delta$ on the graph.

To the best of our knowledge, a first study of the semiclassical limit for quantum dynamics on graphs is due to Barra and Gaspard \cite{BG-pre01_2} (see also \cite{BG-pre01_1}, where the limiting classical model is comprehensively discussed). In this case, the semiclassical limit  is understood in terms of the convergence of a  Wigner-like function for graphs when $\hbar$ (the reduced Planck constant) goes to zero.
 
 Inspired by the work of Hagedorn \cite{Hag1}, instead, we  look directly at the dynamics of the wave-function,  for a class of initial states which are close to Gaussian coherent states supported  on one of the edges of the graph.

Closely related to our work is a series of papers by  Chernyshev and Shafarevich \cite{C-psim10, CS-rjmp08, CS-ptrs14} in which the authors study the $\hbar\to0$ limit of Gaussian wave packets propagating on graphs. Their main interest is the asymptotic growth (for large times) of the number of wave packets propagating on the graph. The main tool for the analysis is the complex WKB method by Maslov (see \cite{Maslov}). We also point out the work \cite{CS-mz07}, by the same authors,  in which they study the small $\hbar$ asymptotics of the eigenvalues of Schr\"odinger operators on quantum graphs (with Kirchhoff conditions in the vertices and in the presence of potential terms).  

 In our previous work \cite{CFP19} we studied the semiclassical limit in the presence of a singular potential. Specifically, we considered the operator  $H_\alpha$,  which is the quantum Hamiltonian in $L^2(\RE)$ formally written as $H_\alpha = -\frac{\hbar^2}{2m}\,\Delta + \alpha \delta_0$, where $m$ is the mass of the particle,  $\delta_0$ is the Dirac-delta distribution centered in $x=0$, and $\alpha$ is a real constant measuring the strength of the potential. Given a Gaussian coherent state on the real line of the form 
\begin{equation}\label{initial}
\psi^{\hbar}_{\sigma,\xi}(x):=\frac{1}{(2\pi\hbar)^{1/4}\sqrt{\sigma}}\ \exp\left({-\frac{1}{4\hbar \sigmazero\sigma}\,(x-q)^{2}+\frac{i}{\hbar}\,p(x-q)}\right) \qquad \mbox{for\, $x\in\RE$}\,,
\end{equation}
with $\sigma\in\CO$, $\Re \sigma = \sigma_0 > 0$ and $\xi\equiv(q,p)\in\RE^2$,  we studied  the limit  $\hbar\to 0$  of $e^{-i\frac{t}{\hbar}H_{\alpha}}\,\psi^{\hbar}_{\sigmazero,\xi}$.

To this aim we reasoned as follows. For fixed $x \in \RE$, consider the classical wave function defined by
\begin{equation*}
\phi^{\hbar}_{\sigma,x}:\RE^{2}\to\CO\,,\quad
\phi^{\hbar}_{\sigma,x}(\xi):=\psi^{\hbar}_{\sigma,\xi}(x)\,. 
\end{equation*}
Consider the vector field $X_{0}(q,p)=(p/m,0)$ associated to the free classical Hamiltonian $h_{0}(q,p)= p^2/(2m)$ ($q$ is  the position  and $p$ the  momentum of the classical particle of mass $m$), and the Liouville operator
\begin{equation*}
\dom(L):= \C_0^\infty(\RE^2),\qquad L:=-\,i\,X_{0} \cdot \nabla =-\,i\,{p \over m}\,\frac{\partial  }{\partial q}\;.
\end{equation*}
We set 
\begin{equation*}
\xi_t:=\left(q+\frac{pt}{m},p\right)\,,\qquad 
\Szero_{t}:=\frac{p^{2}t}{2m}\,,\qquad 
\sigmaq_t := \sigma_0+\frac{i t}{2m\sigma_0}\,,
\end{equation*}
where $\xi_t $ is the solution of the free Hamilton equations, $\Szero_{t}$ is the (free) classical action, and $\sigmaq_t $ takes into account the spreading of the wave function.

If the dynamics is  free (i.e., $\alpha =0$), one has the identity
\begin{equation*}
\big(e^{-i\frac{t}{\hbar}H_{0}}\,\psi^{\hbar}_{\sigmazero,\xi}\big)(x) = e^{\frac{i}{\hbar}\Szero_{t}} \psi^{\hbar}_{\sigma_{t},\xi_t}(x)\,.
\end{equation*}
 The latter can be rewritten as   
\begin{equation}\label{exact}
\big(e^{-i\frac{t}{\hbar}H_{0}}\,\psi^{\hbar}_{\sigmazero,\xi}\big)(x) = e^{\frac{i}{\hbar}\Szero_{t}} \big(e^{itL_{0}} \phi^{\hbar}_{\sigma_{t},x}\big)(\xi)\,, 
\end{equation}
 where  $e^{itL_{0}}$ is the realization in $L^{\infty}(\RE^2)$ of the strongly continuous (in $L^{2}(\RE^{2})$) group of evolution generated by the self-adjoint operator $L_0 = \overline L$; explicitly, one has  $e^{itL_{0}}f(\xi) = f(\xi_t)$. 
 
Since $H_\alpha$ is a self-adjoint extension of $H^{\circ}_{0}:=H_{0}\!\!\upharpoonright\!\C^{\infty}_{c}(\RE\backslash\{0\})$, mimicking the identity \eqref{exact}, we compared $\big(e^{-i\frac{t}{\hbar}H_{\alpha}}\,\psi^{\hbar}_{\sigmazero,\xi}\big)(x)$ with $e^{\frac{i}{\hbar}\Szero_{t}} \big(e^{itL_{\beta}} \phi^{\hbar}_{\sigma_{t},x}\big)(\xi)$, with $L_{\beta}$ a self-adjoint extension   of $L^{\!\circ}_{0}:=L_{0}\!\!\upharpoonright\! \C^{\infty}_{c}({\mathscr M}_{0})$, ${\mathscr M}_{0}:=\RE^{2} \,\backslash\, \{(0,p) \,|\, p\!\in\!\RE\}$. Here, $\beta$ is a real constant which parameterizes  the self-adjoint extension, and it turns out that the optimal choice is  $\beta = 2\alpha/\hbar$ (see \cite{CFP19} for the details).

In the same spirit, in the present work, we study the small $\hbar$ asymptotic of $e^{-i\frac{t}{\hbar}H_{K}}\,\bm{\Xi}^{\hbar}_{\sigma_0,\xi} $ where $H_K$ is the quantum Hamiltonian obtained as a self-adjoint realization of $-(2m)^{-1}\hbar^2 \Delta$ on the star-graph with Kirchhoff conditions  in the vertex, and $\bm{\Xi}^{\hbar}_{\sigma,\xi} $ resembles a coherent state concentrated  on one  edge of the graph (see Section \ref{ss:truncated} below for the precise definition). \\

In the following sections of the introduction we give the main definitions and results. Section \ref{s:2} and \ref{s:semiclass} contain a detailed description of the quantum and semiclassical dynamics on the star-graph respectively. In Section \ref{s:4} we give the proofs of Theorems \ref{t:dynamics} and \ref{t:waveoperators}. Section \ref{s:5} contains  some additional remarks and comments. In the appendix we give a proof of a technical result, namely an explicit formula for the wave operators for the pair (Dirichlet Laplacian, Neumann Laplacian) on the half-line. 

\subsection{Quantum dynamics on the star-graph}
By star-graph we mean a non-compact graph, with $n$ edges (or leads) and one vertex. Each edge can be identified with a half-line, the origins of the half-lines coincide and identify the only vertex of the graph. 

We recall that the Hilbert space associated to the star-graph is  $L^2(\GG) \equiv \oplus_{\ell = 1}^{n} L^2(\RE_{+})$,  with the natural scalar product and norm; in particular, for the $L^2$-norm we use the notation
\[
\|\bm{\psi}\|_{L^2(\GG)} \equiv \left(\int_{\GG} dx\,|\bm{\psi}(x)|^2 \right)^{1/2} := \left(\sum_{\ell =1}^n\int_0^\infty dx\,|\psi_\ell(x)|^2 \right)^{1/2}.
\]
If  $\bm{\psi}\in L^2(\GG)$,  $\psi_\ell\in L^2(\RE_+)$ is its $\ell$-th component with respect to  the decomposition $\oplus_{\ell = 1}^{n} L^2(\RE_{+})$.  In a similar way one can define the associated  Sobolev spaces; in particular,  we set $H^2(\GG)\equiv  \oplus_{\ell = 1}^{n} H^2(\RE_{+})$, with the natural scalar product and norm. 

We are primarily interested in the semiclassical limit of the quantum dynamics generated by the Kirchhoff Laplacian on the star-graph, which is the operator
\begin{gather}
\dom( \HK ) := \Big\{\bm\psi \in H^2(\GG)\big|\; \psi_{1}(0) = \ldots = \psi_{n}(0)\,,\; \sum_{\ell = 1}^n \psi'_{\ell}(0) = 0 \Big\}\,, \label{HK1} \\
\HK\bm{\psi} := - \,\frac{\hbar^2}{2m}\, \bm{\psi}''  \,; \label{HK2}
\end{gather}
here $\bm{\psi}''$ denotes the element of $L^2(\GG)$ with components $\psi''_\ell$,  and $\bm{\psi}(0)$ (resp., $\bm{\psi}'(0)$) the vector in $\CO^n$ with components $\psi_\ell(0)$ (resp., $\psi'_\ell(0)$).
Functions in $\dom( \HK )$ are said to satisfy Kirchhoff (or Neumann, or standard, or natural) boundary conditions. 

In the analysis of the semiclassical limit of the wave and scattering operators, we will have to fix a reference  dynamics on the star-graph. To this aim we will consider the operator $\HDop$  (see also the equivalent definition in Eqs. \eqref{Hdirichlet1} - \eqref{Hdirichlet2} below)
\begin{gather*}
\dom( \HDop ) := \Big\{\bm\psi \in H^2(\GG)\big|\; \psi_{1}(0) = \ldots = \psi_{n}(0) = 0\Big\}\,,  \\
\HDop\,\bm{\psi} := - \,\frac{\hbar^2}{2m}\, \bm{\psi}'' \,;
\end{gather*}
we remark that $\HDop$ can be understood as the direct sum of $n$ copies of the Dirichlet Hamiltonian on the half-line (see Section \ref{ss:DKdynamics}  below for further details). 

We recall that the quantum wave operators and the corresponding scattering operator on $L^{2}(\GG)$, are defined by
\begin{gather}
\OmQ^{\pm} := \slim_{t\to\pm\infty}e^{i {t \over \hbar} \HK} e^{- i {t \over \hbar} \HDop}\,, \label{Wqdef}\\
\SQ := (\OmQ^{+})^{*}\, \OmQ^{-}\,. \label{Sqdef}
\end{gather}

These operators can be computed explicitly (see Proposition \ref{p:QWO} and Remark \ref{r:QSO} below), and component-wisely for $\ell = 1,\dots, n$ they read as follows:
\[
\big(\OmQ^{\pm} \bm{\psi}\big)_\ell = \sum_{\ell'=1}^n \Big( \delta_{\ell,\ell'} - \frac1n (1\mp \mathcal F_c^* \mathcal F_s)\Big)\psi_{\ell'}\,,
 \]
where $ \mathcal F_s$ and $\mathcal F_c$ are the Fourier-sine and  Fourier-cosine  transforms respectively (see Eqs. \eqref{F-sine} and \eqref{F-cosine});
\begin{equation}\label{whatis1}
(\SQ\,\bm{\psi})_\ell = \sum_{\ell'=1}^n \Big(  \delta_{\ell,\ell'} - \frac{2}{n}\Big) \psi_{\ell'}\,.
\end{equation}

\subsection{Semiclassical dynamics on a star-graph\label{ss:1.2}}
The generator of the semiclassical dynamics on the star-graph is obtained as a self-adjoint realization of the differential operator $-\,i\,{p \over m}\,\frac{\partial }{\partial q}$ in $\oplus_{\ell = 1}^{n} L^{2}(\RE_{+}\!\times\RE)$, $(q,p) \in \RE_{+} \!\times \RE$. To recover it  we will make use of the method  to  classify the singular perturbations of self-adjoint operators developed by one of us in \cite{P01} (see also \cite{P08}). 

To do so, the first step is to identify a simple dynamics on the star-graph, more precisely its generator. We shall consider classical particles moving on the edges of the graph with elastic collision at the vertex. 

We start by considering  the dynamics of a classical particle on the half-line with elastic collision at the origin. We obtain its generator as a limiting case from our previous work \cite{CFP19} and denote it by $\LD$. We postpone the precise definition of $\LD$ to Section \ref{ss:classhalfline}. Here we just note few facts. 

$\LD: \dom(\LD)\subset L^2(\RE_+\!\times\RE)\to L^2(\RE_+\!\times\RE)$ is  self-adjoint and acts on elements of its domain as 
\[
 (\LD f)(q,p)  = -\,i\,{p \over m}\,\frac{\partial f }{\partial q}(q,p) \qquad  \text{for\; $(q,p) \in \RE_{+} \!\times \RE$}\,.
\]

For all $t \in \RE$, the action of the unitary evolution group associated to it is explicitly given by 
\begin{equation}\label{group}
\big(e^{i t \LD} f\big)(q,p) =
\left\{ 
\begin{aligned} 
&f\big(q+\frac{pt}m,p\big) \qquad &\text{if }\,q+\frac{pt}m >0\,, \\ 
& - f\big(-q-\frac{pt}m,-p\big) \qquad &\text{if }\,q+\frac{pt}m <0 \,.
\end{aligned}
\right. 
\end{equation}

The (trivial) classical dynamics of a particle on the star-graph with elastic collision at the vertex  can be defined in the following way. Denote by $\fv$ a function of the form 
\[
\fv \in \oplus_{\ell = 1}^{n} L^{2}(\RE_{+}\!\times\RE)\,, \qquad \fv(q,p) \equiv \begin{pmatrix} f_1(q,p) \\ \vdots \\ f_n(q,p) \end{pmatrix}.
\]
If $\|\fv\|_{\oplus_{\ell = 1}^{n} L^{2}(\RE_{+}\!\times\RE)} =1$, $|f_{\ell}(q,p)|^2dqdp$ can be interpreted as the probability of finding a particle on the $\ell$-th edge of the graph, with position  in the interval $[q,q+dq]$ and  momentum in the interval $[p,p+dp]$.

Define the operator  $\LD^{\oplus}:= \oplus_{\ell = 1}^{n} \LD$;  the  associated dynamics is generated by the unitary group $e^{i \LD^{\oplus} t} = \oplus_{\ell = 1}^{n} e^{i\LD t}$, and it is trivial in the sense that it can be fully understood in terms of the dynamics on the half-line described above. 

We consider the map 
\[
(\gamma^{\oplus}_{+} \fv)_\ell(p) : = \lim_{q\to 0^+} f_\ell(q,p)  \qquad \mbox{for\, $\ell = 1,\dots, n$}\,, 
\]
defined on sufficiently smooth functions (we refer to Section \ref{s:semiclass} for the details). This map can be extended to a continuous one on $\dom(\LD^{\oplus})$. The operator $\LD^{\oplus}\!\upharpoonright \ker(\gamma^{\oplus}_+)$ is symmetric; in Theorem \ref{t:rescl}, by using the approach developed in \cite{P01, P08} we identify a family of self-adjoint extensions. Among those we select the one that turns out to be useful to study the semiclassical limit of  $\exp(-i H_K t/ \hbar)$ and denote it by $L_K$.

 We postpone the precise definition of $L_K$ to Section \ref{ss:LK}, see, in particular, Remark \ref{r:LK}. Here we just give component-wisely the formula for the associated unitary group, for $\ell = 1,\dots,n$ and for all $t\in\RE$: 
\begin{equation}\label{expLKgroup}
\big(e^{i t L_K} \fv\big)_\ell(q,p) =
\left\{ 
\begin{aligned} 
&f_\ell\big(q+\frac{pt}m,p\big) \qquad &\text{if }\,q+\frac{pt}m >0\,, \\ 
& \sum_{\ell'=1}^n\Big(\frac{2}{n} - \delta_{\ell,\ell'}\Big) f_{\ell'}\big(-q-\frac{pt}m,-p\big) \qquad &\text{if }\,q+\frac{pt}m <0\,. 
\end{aligned}
\right. 
\end{equation}

We define the classical wave operators and the corresponding scattering operator
on $\oplus_{\ell=1}^{n} L^{2}(\RE_{+}\!\times\RE)$ by
\begin{equation}
\OmC^{\pm} := \lim_{t\to\pm\infty}e^{i t \LD^{\oplus}}e^{- i t L_{K}} \label{Wcldef}
\end{equation}
and
\begin{equation}
\SC := (\OmC^{+})^{*}\, \OmC^{-}\,. \label{Scldef}
\end{equation}

These operators can be computed explicitly (see Proposition \ref{prop:WScl} below), and component-wisely they read as follows for $\ell = 1,\dots,n$ ($\theta$ is the Heaviside step function):
\begin{gather}
\big(\OmC^{\pm}\, \bm{f}\big)_\ell = \sum_{\ell'=1}^n \Big( \delta_{\ell,\ell'} - \frac2n \theta(\mp p)\Big)\,f_{\ell'}\,; \nonumber \\
(\SC\,\fv)_\ell = \sum_{\ell'=1}^n\Big(  \delta_{\ell,\ell'} - \frac{2}{n}\Big)\, f_{\ell'}\,. \label{whatis2}
\end{gather}

\subsection{Truncated coherent states on the star-graph\label{ss:truncated}}
In general, there is no natural definition of a coherent state on a star-graph, neither there is a unique way to extend coherent states through the vertex. Since we are interested in initial states concentrated on one edge  of the graph, we introduce the following  class of initial states. 
We denote by $\widetilde{\psi}^{\hbar}_{\sigma,\xi}$ the non-normalized restriction of $\psi^{\hbar}_{\sigma,\xi}$ (see Eq. \eqref{initial}) to $\RE_{+}$, namely,
\begin{equation}
\widetilde{\psi}^{\hbar}_{\sigma,\xi} \in L^2(\RE_{+}) \,, \qquad\quad
\widetilde{\psi}^{\hbar}_{\sigma,\xi}(x) = \psi^{\hbar}_{\sigma,\xi}(x) \quad \mbox{for\, $x > 0$}\,. \label{psitil}
\end{equation}

On the graph we consider the quantum states defined as 
\begin{definition}[Quantum states]\label{d:quantumstates} Let $\sigma \in \CO$, with $\Re\sigma = \sigma_0>0$,  and $\xi = (q,p) \in \RE_{+}\!\times \RE$; consider any normalized function $\Xi^{\hbar}_{\sigma,\xi} \in L^2(\RE_{+})$, such that
\begin{equation}
\big\|\Xi^{\hbar}_{\sigma,\xi} - \widetilde{\psi}^{\hbar}_{\sigma,\xi}\big\|_{L^2(\RE_{+})} \leq C_0\, e^{- \,\eps\,{q^2 \over \hbar |\sigma|^2}} \qquad \mbox{for some $C_0,\eps > 0$}\,. \label{COS}
\end{equation}
We are primarily interested in quantum states on the star-graph of the form 
\begin{equation*}
\bm{\Xi}^{\hbar}_{\sigma,\xi} \in L^{2}(\GG)\,, \qquad \bm{\Xi}^{\hbar}_{\sigma,\xi} \equiv \left(\!\!\begin{array}{c} \Xi^{\hbar}_{\sigma,\xi} \vspace{0.05cm}\\ 0 \vspace{-0.15cm}\\ \vdots \\ 0 \end{array}\!\!\right) . 
\end{equation*}
\end{definition}

Correspondingly, we will consider the family of classical states 
\begin{definition}[Classical states]\label{d:classicalstates} For any $\sigma$, $\xi$, and $\Xi^{\hbar}_{\sigma,\xi}$ as in Definition \ref{d:quantumstates}, consider the function $\Sigma^{\hbar}_{\sigma,x} : \RE_{+}\!\times\RE \to \CO$ defined by
\begin{equation}\label{CS1}
\Sigma^{\hbar}_{\sigma,x}(\xi) := \Xi^{\hbar}_{\sigma,\xi}(x)\,.
\end{equation}
We will make use of the family of  classical states on the star-graph  given by 
\begin{equation*}
\bm{\Sigma}^{\hbar}_{\sigma,x} \in \oplus_{\ell = 1}^{n} L^{2}(\RE_{+}\!\times\RE)\,, \qquad \bm{\Sigma}^{\hbar}_{\sigma,x} \equiv \left(\!\!\begin{array}{c} \Sigma^{\hbar}_{\sigma,x} \vspace{0.05cm}\\ 0 \vspace{-0.15cm}\\ \vdots \\ 0 \end{array}\!\!\right) .
\end{equation*}
\end{definition}

\subsection{Main results}
Our first result concerns the semiclassical limit of the dynamics. 
\begin{theorem}\label{t:dynamics} Let $\sigma_0 > 0$, $\xi = (q,p) \in \RE_{+}\!\times\RE$ and consider any initial state of the form $\bm{\Xi}^{\hbar}_{\sigma_0,\xi}$, together with its classical analogue $\bm{\Sigma}^{\hbar}_{\sigma_0,x}$. Then, for all $t \in \RE$ there holds
\begin{equation}\label{th1}
\left(\int_{\GG} dx\,\left|e^{-i {t \over \hbar} \HK} \bm{\Xi}^{\hbar}_{\sigma_0,\xi}(x) - e^{\frac{i}{\hbar}\Szero_{t}} \big( e^{it L_{K}} \bm{\Sigma}^{\hbar}_{\sigma_t,x}\big)(\xi)\right|^{2}\right)^{1/2} \!\leq C_0\, e^{- \,\eps\,{q^2 \over \hbar \sigma_0^2}} + 2 C_0\, e^{- \,\eps\,{(q + pt/m)^{2} \over \hbar |\sigma_t|^2}}\! + \sqrt{2} \, e^{-\frac{q^{2}}{4\hbar \sigma_{0}^{2}}} .
\end{equation}
\end{theorem}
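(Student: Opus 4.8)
The plan is to deduce the estimate by reducing, on both sides, to the exact restricted Gaussian $\widetilde\psi^\hbar_{\sigma_0,\xi}$ on the first edge, and then to propagate this Gaussian explicitly. The key structural input is that, under the orthogonal decomposition of $L^2(\GG)$ into the totally symmetric subspace $\{(\psi,\dots,\psi)\}$ and its orthogonal complement, the Kirchhoff Laplacian splits as $\HK\cong H_{\mathrm N}\oplus\HD^{\oplus(n-1)}$: on the symmetric sector the Kirchhoff conditions force $\psi'(0)=0$ (Neumann), while on the complement they force $\psi(0)=0$ (Dirichlet). Writing $\bm{\widetilde\psi}:=(\widetilde\psi^\hbar_{\sigma_0,\xi},0,\dots,0)$ and projecting the edge-$1$ datum onto the two sectors, one obtains, for $\psi:=\widetilde\psi^\hbar_{\sigma_0,\xi}$, the exact formula
\[
\big(e^{-i\frac t\hbar\HK}\bm{\widetilde\psi}\big)_1 = \tfrac1n\,e^{-i\frac t\hbar H_{\mathrm N}}\psi + \tfrac{n-1}n\,e^{-i\frac t\hbar\HD}\psi, \qquad \big(e^{-i\frac t\hbar\HK}\bm{\widetilde\psi}\big)_\ell = \tfrac1n\big(e^{-i\frac t\hbar H_{\mathrm N}}-e^{-i\frac t\hbar\HD}\big)\psi \quad(\ell\neq1).
\]
A first application of the triangle inequality isolates the replacement $\bm{\Xi}^\hbar_{\sigma_0,\xi}\!\to\!\bm{\widetilde\psi}$ inside $e^{-i\frac t\hbar\HK}$; by unitarity and \eqref{COS} (with $|\sigma_0|=\sigma_0$) this produces exactly the first term $C_0\,e^{-\eps q^2/(\hbar\sigma_0^2)}$.

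For the core comparison I would evolve the half-line groups by the method of images: viewing $\psi$ as $\mathbf 1_{\RE_+}\psi^\hbar_{\sigma_0,\xi}\in L^2(\RE)$ and letting $\mathsf R$ be the reflection $x\mapsto-x$, one has the exact identities $e^{-i\frac t\hbar H_{\mathrm N}}\psi=(\Ufree(t)\psi+\mathsf R\,\Ufree(t)\psi)|_{\RE_+}$ and $e^{-i\frac t\hbar\HD}\psi=(\Ufree(t)\psi-\mathsf R\,\Ufree(t)\psi)|_{\RE_+}$, with $\Ufree(t)$ the free line propagator. Inserting the known free evolution $\Ufree(t)\psi^\hbar_{\sigma_0,\xi}=e^{\frac i\hbar\Szero_t}\psi^\hbar_{\sigma_t,\xi_t}$ and writing $\hat\xi_t:=(-(q+pt/m),-p)$ so that $\mathsf R\,\psi^\hbar_{\sigma_t,\xi_t}=\psi^\hbar_{\sigma_t,\hat\xi_t}$, the two combinations above collapse, on edge $\ell$, to a transmitted Gaussian $\psi^\hbar_{\sigma_t,\xi_t}$ (on edge $1$ only) plus the reflected Gaussian $\psi^\hbar_{\sigma_t,\hat\xi_t}$ weighted by the single coefficient $\tfrac2n-\delta_{\ell1}$. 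This is precisely the weight appearing in the vertex scattering matrix, so matching against the two branches of \eqref{expLKgroup} — selected by the sign of $q+pt/m$ — reproduces $e^{\frac i\hbar\Szero_t}(e^{it\LK}\bm{\Sigma}^\hbar_{\sigma_t,x})(\xi)$ up to two genuinely small remainders: the Gaussian centered on the ``wrong'' side of the vertex (namely $\psi^\hbar_{\sigma_t,\hat\xi_t}$ when $q+pt/m>0$, or $\psi^\hbar_{\sigma_t,\xi_t}$ when $q+pt/m<0$), and the truncation defect $\Ufree(t)(\mathbf 1_{\RE_-}\psi^\hbar_{\sigma_0,\xi})$ arising because $\psi\neq\psi^\hbar_{\sigma_0,\xi}$.

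The remaining work is the quantitative control of these two remainders, which I expect to be the only delicate point. The wrong-side Gaussian, restricted to $\RE_+$, is bounded by the standard tail estimate $\tfrac1{\sqrt2}\,e^{-(q+pt/m)^2/(4\hbar|\sigma_t|^2)}$; combined with the replacement of $\bm{\Xi}^\hbar$ by $\bm{\widetilde\psi}$ inside the classical evolution $e^{it\LK}$ (again controlled by \eqref{COS}, now evaluated at the transported center and using $\sum_\ell(\tfrac2n-\delta_{\ell1})^2=1$), it yields the second term $2C_0\,e^{-\eps(q+pt/m)^2/(\hbar|\sigma_t|^2)}$. The truncation defect has $L^2(\RE)$-norm $\|\mathbf 1_{\RE_-}\psi^\hbar_{\sigma_0,\xi}\|\le\tfrac1{\sqrt2}\,e^{-q^2/(4\hbar\sigma_0^2)}$, and assembling its contribution across the $n$ edges with the coefficients $\tfrac2n-\delta_{\ell1}$ (whose squares again sum to $1$) costs a factor at most $2$, giving the third term $\sqrt2\,e^{-q^2/(4\hbar\sigma_0^2)}$. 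The main obstacle is thus the bookkeeping: one must verify that the scattering weights produced by the $H_{\mathrm N}\oplus\HD$ splitting coincide exactly with those in \eqref{expLKgroup} on each branch, and track the numerical constants (for which it is harmless to assume $\eps\le1/4$ and $C_0\ge1/\sqrt2$, since \eqref{COS} persists under shrinking $\eps$ and enlarging $C_0$) so that the three Gaussian-tail contributions assemble into the stated right-hand side.
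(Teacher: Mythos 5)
Your proposal is correct and, in its overall architecture, coincides with the paper's proof: both insert the bare truncated Gaussian $\bm{\Psi}^{\hbar}_{\sigma_0,\xi}=(\widetilde{\psi}^{\hbar}_{\sigma_0,\xi},0,\dots,0)$ as an intermediate object via the triangle inequality (first term of \eqref{th1}), propagate it exactly by the method of images, and control the mismatch with $e^{\frac{i}{\hbar}\Szero_t}(e^{itL_K}\bm{\Sigma}^{\hbar}_{\sigma_t,x})(\xi)$ through Gaussian tail estimates (third term) and a second application of \eqref{COS} at the transported center $\pm\xi_t$ (second term, using $\sum_{\ell}|\MC_{\ell 1}|^2=1$). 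The one genuinely different ingredient is how you obtain the quantum propagation formula: the paper imports $e^{-i\frac{t}{\hbar}\HK}=\oplus_{\ell}\,\mathcal{U}^{-}_{t}-\MC\oplus_{\ell}\mathcal{U}^{+}_{t}$ (Eq.\,\eqref{dynHF}) from \cite{ACFN11}, whereas you rederive it from the orthogonal splitting of $L^2(\GG)$ into the totally symmetric sector (where Kirchhoff reduces to Neumann) and its complement (where it reduces to Dirichlet); a quick check with $\mathcal{U}^{\pm}_t=\frac12(\UN_t\mp(-1)\,\cdot)$, i.e.\ $\mathcal{U}^-_t=\frac12(\UN_t+\UD_t)$ and $\mathcal{U}^+_t=\frac12(\UN_t-\UD_t)$, confirms that your two formulas for $(e^{-i\frac t\hbar\HK}\bm{\widetilde\psi})_\ell$ reproduce the coefficients $\delta_{\ell 1}$ and $\frac2n-\delta_{\ell 1}$ exactly, so this buys a self-contained derivation at no cost. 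Your error bookkeeping is also slightly finer than the paper's: because the classical group \eqref{expLKgroup} selects a single branch according to $\sgn(q+pt/m)$ while the quantum evolution always carries both the transmitted and the reflected Gaussian, you correctly isolate the ``wrong-side'' Gaussian as a separate remainder and absorb it into the $2C_0$ term via $\eps\le 1/4$ and $C_0\ge 1/\sqrt2$ (both harmless, and the first already consistent with Lemma \ref{lem:COS}); the paper instead compares $\mathcal{U}^{\pm}_t\widetilde\psi^{\hbar}_{\sigma_0,\xi}$ directly with $e^{\frac i\hbar\Szero_t}\widetilde\psi^{\hbar}_{\sigma_t,\mp\xi_t}$ without the $\theta(\pm(q+pt/m))$ selection, so this term never surfaces there. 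The numerology you sketch ($(a+b)^2\le 2a^2+2b^2$ across the two image contributions, the unit sum of the squared reflection weights, and the tail bound $\frac1{\sqrt2}e^{-q^2/(4\hbar\sigma_0^2)}$ for a half-Gaussian) assembles into the stated right-hand side, so no gap remains.
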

\begin{remark}Let $t_{coll}(\xi):= - {mq/p}$ be the classical collision time. Whenever $|t - t_{coll}|\leq \eta m \sigma_0 \sqrt\hbar/|p| $  for some  positive constant $\eta$ the second term on the right-hand side of Eq. \eqref{th1} is larger than $2C_0\,e^{-\eps\eta^2}$. 
\end{remark}

In the second part of our analysis we study the semiclassical limit of the wave operators and of the scattering operator.

\begin{theorem}\label{t:waveoperators}
 Let $\sigma_0 > 0$, $\xi = (q,p) \in \RE_{+}\!\times\!(\RE\backslash \{0\})$ and consider any state of the form $\bm{\Xi}^{\hbar}_{\sigma_0,\xi}$, together with its classical analogue $\bm{\Sigma}^{\hbar}_{\sigma_0,x}$. Then, there hold 
\begin{equation}\label{th2wo}
\left(\int_{\GG} dx\,\left|\OmQ^{\pm}\, \bm{\Xi}^{\hbar}_{\sigma_0,\xi}(x) - (\OmC^{\pm}\bm{\Sigma}^{\hbar}_{\sigma_0,x})(\xi)\right|^{2}\right)^{\!1/2}
\leq \sqrt{\frac{2}{n}} \,\bigg(\sqrt{2}\,C_0\, e^{- \,\eps\,{q^2 \over \hbar \sigma_0^2}} + e^{-\,\frac{q^{2}}{4\hbar \sigma_0^2}} + e^{-\,\frac{\sigma_0^2 p^2}{\hbar}}\bigg) \,,
\end{equation}
and 
\begin{equation}\label{th2so}
\SQ\, \bm{\Xi}^{\hbar}_{\sigma_0,\xi}(x) = (\SC \,\bm{\Sigma}^{\hbar}_{\sigma_0,x})(\xi)\,.
\end{equation}
\end{theorem}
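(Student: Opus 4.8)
The plan is to read off both claims from the explicit expressions for the wave and scattering operators provided by Proposition~\ref{p:QWO} and Proposition~\ref{prop:WScl}, so that the genuinely analytic content is confined to a single semiclassical estimate on the half-line.

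\emph{The scattering identity \eqref{th2so} is immediate.} By \eqref{whatis1} and \eqref{whatis2} both $\SQ$ and $\SC$ act as one and the same constant $n\times n$ matrix $\delta_{\ell,\ell'}-\tfrac2n$, with \emph{no} operator acting on the $L^2(\RE_+)$ (resp.\ phase-space) variable. Since the initial states live on the first edge and $\Sigma^{\hbar}_{\sigma_0,x}(\xi)=\Xi^{\hbar}_{\sigma_0,\xi}(x)$ by \eqref{CS1}, applying this matrix to the two states yields the very same function, which proves \eqref{th2so} exactly, with no $\hbar\to0$ limit involved.

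\emph{Reduction of \eqref{th2wo}.} Inserting a first–edge state into the formulas of Proposition~\ref{p:QWO} and Proposition~\ref{prop:WScl}, the diagonal terms $\delta_{\ell,1}$ cancel and, using $\theta(\mp p)=\tfrac12(1\mp\sgn p)$, the difference on every edge $\ell$ equals $\pm\tfrac1n(\mathcal F_c^*\mathcal F_s-\sgn p)\,\Xi^{\hbar}_{\sigma_0,\xi}$, independently of $\ell$. Summing the $n$ identical contributions, the left-hand side of \eqref{th2wo} becomes
\[
\tfrac1{\sqrt n}\,\big\|(\mathcal F_c^*\mathcal F_s-\sgn p)\,\Xi^{\hbar}_{\sigma_0,\xi}\big\|_{L^2(\RE_+)} .
\]
Since $\mathcal F_c^*\mathcal F_s$ is a contraction and $\sgn p$ a scalar of modulus one, the operator $\mathcal F_c^*\mathcal F_s-\sgn p$ has norm at most $2$, so replacing $\Xi^{\hbar}_{\sigma_0,\xi}$ by the truncated Gaussian $\widetilde\psi^{\hbar}_{\sigma_0,\xi}$ costs at most $2C_0\,e^{-\eps q^2/(\hbar\sigma_0^2)}$ by \eqref{COS}; this produces the first term on the right of \eqref{th2wo}. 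It then remains to bound $\|(\mathcal F_c^*\mathcal F_s-\sgn p)\,\widetilde\psi^{\hbar}_{\sigma_0,\xi}\|_{L^2(\RE_+)}$ by $\sqrt2\,e^{-q^2/(4\hbar\sigma_0^2)}+\sqrt2\,e^{-\sigma_0^2p^2/\hbar}$.

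\emph{The core estimate, and the main obstacle.} Here I would use the explicit form of the half-line intertwiner established in the appendix: up to the chosen normalisation of the Fourier sine/cosine transforms, $\mathcal F_c^*\mathcal F_s$ is the restriction to $\RE_+$ of the full-line momentum-sign multiplier $\mathsf S$ acting on the odd extension, i.e.\ $\mathcal F_c^*\mathcal F_s\phi=(\mathsf S\,\phi_o)|_{\RE_+}$, where $\mathsf S$ is the Fourier multiplier with symbol $\sgn(\cdot)$ and $\phi_o$ is the odd extension of $\phi$. Because $\mathsf S$ anticommutes with parity it sends odd functions to even ones, so
\[
(\mathcal F_c^*\mathcal F_s-\sgn p)\,\widetilde\psi^{\hbar}_{\sigma_0,\xi}=\big(\mathsf S\,(\widetilde\psi^{\hbar}_{\sigma_0,\xi})_o-\sgn(p)\,(\widetilde\psi^{\hbar}_{\sigma_0,\xi})_e\big)\big|_{\RE_+},
\]
with $(\,\cdot\,)_e$ the even extension; the function in brackets is even, whence its $L^2(\RE_+)$–norm equals $1/\sqrt2$ times its $L^2(\RE)$–norm. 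This turns the problem into a purely full-line Gaussian computation: writing the odd/even extensions in terms of the full coherent state $\psi^{\hbar}_{\sigma_0,\xi}$ (momentum $+p$, centred at $+q$) and of its spatial reflection (momentum $-p$, centred at $-q$), and using that $\mathsf S$ commutes with reflection up to a sign, the estimate reduces to the two standard coherent-state bounds $\|(\mathsf S-\sgn p)\,\psi^{\hbar}_{\sigma_0,\xi}\|_{L^2(\RE)}\lesssim e^{-\sigma_0^2p^2/\hbar}$ — the mass of the momentum-space Gaussian, centred at $p$, lying on the wrong side of the origin — and $\|\psi^{\hbar}_{\sigma_0,\xi}\|_{L^2(\RE_-)}\lesssim e^{-q^2/(4\hbar\sigma_0^2)}$ — the spatial tail across the vertex. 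These yield the second and third terms of \eqref{th2wo}. I expect the main difficulty to be quantitative rather than structural: one must carry out these two Gaussian estimates with the sharp constants displayed in \eqref{th2wo} (in particular the factors $\sqrt2$) and verify that the parity reduction genuinely removes the non-local tail of the reflected bump, so that no localisation is lost when restricting $\mathsf S\,(\widetilde\psi^{\hbar}_{\sigma_0,\xi})_o$ to $\RE_+$. The hypothesis $p\neq0$ enters precisely here, since the momentum-tail bound $e^{-\sigma_0^2p^2/\hbar}$ is vacuous at $p=0$.
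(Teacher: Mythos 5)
Your proposal is correct and, in substance, follows the same route as the paper: both arguments start from the explicit formulas of Propositions \ref{p:QWO} and \ref{prop:WScl}, observe that the difference is the same on every edge so that the graph norm collapses to $\tfrac{1}{\sqrt n}$ times a single $L^2(\RE_+)$ norm, pay $2C_0\,e^{-\eps q^2/(\hbar\sigma_0^2)}$ to replace $\Xi^{\hbar}_{\sigma_0,\xi}$ by $\widetilde\psi^{\hbar}_{\sigma_0,\xi}$ via \eqref{COS} and the operator bound $\|\mathcal F_c^*\mathcal F_s-\sgn p\|\le 2$, and then reduce the remaining estimate to a momentum-space tail plus a spatial tail of the Gaussian; the identity \eqref{th2so} is read off from \eqref{whatis1}--\eqref{whatis2} exactly as in the paper. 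The only real divergence is in the packaging of the core estimate \eqref{tillerman}: the paper computes the kernel of $(1-2\theta(\mp p))\mathcal F_c\mp\mathcal F_s$ directly and notices that the combination $(1-2\theta(\mp p))\cos(kx)\pm i\sin(kx)$ collapses to $\pm e^{\mp ikx}$, so the quantity equals, exactly, twice the $L^2$ mass of $\widehat{\theta\psi^{\hbar}_{\sigma_0,\xi}}$ on the frequency half-line opposite to $\sgn p$; you instead invoke the identity $\mathcal F_c^*\mathcal F_s\phi=(\mathsf S\phi_o)|_{\RE_+}$ and a parity reduction. Your identity is correct, but note the one place where the constant is at risk: writing the bracket as $(1+P)\,g$ with $g=(\mathsf S-\sgn p)\,\theta\psi^{\hbar}_{\sigma_0,\xi}$ and using the crude bound $\|(1+P)g\|\le 2\|g\|$ together with the factor $1/\sqrt2$ from evenness would yield $\sqrt2\,\|g\|$ and hence an overall constant $2$ instead of the stated $\sqrt2$. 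To recover the sharp constant, observe that $\widehat g$ is supported in $\{\sgn k=-\sgn p\}$ while $\widehat{Pg}$ is supported in $\{\sgn k=\sgn p\}$, so $g\perp Pg$ and $\|(1+P)g\|=\sqrt2\,\|g\|$ exactly; at that point your quantity coincides with the paper's $\|g\|_{L^2(\RE)}=2\|\widehat{\theta\psi^{\hbar}_{\sigma_0,\xi}}\,\mathbf 1_{\{\sgn k=-\sgn p\}}\|$, and your two Gaussian tail bounds ($\le\tfrac1{\sqrt2}e^{-\sigma_0^2p^2/\hbar}$ and $\le\tfrac1{\sqrt2}e^{-q^2/(4\hbar\sigma_0^2)}$, each multiplied by $2$) reproduce \eqref{tillerman} and hence \eqref{th2wo} with the stated constants.
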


Identity \eqref{th2so} is an immediate consequence of Eqs. \eqref{whatis1} and \eqref{whatis2}, and of the definitions of $\bm{\Xi}^{\hbar}_{\sigma_0,\xi}$ and $\bm{\Sigma}^{\hbar}_{\sigma_0,x}$.

\begin{remark} Eq.\,\eqref{th2wo} makes evident that $\OmQ^{\pm}\, \bm{\Xi}^{\hbar}_{\sigma_0,\xi}$ and $(\OmC^{\pm}\,\bm{\Sigma}^{\hbar}_{\sigma_0,(\cdot)})(\xi)$ are exponentially close (with respect to the natural topology of $L^2(\GG)$) in the semiclassical limit $\hbar \sigma_0^2\,/q^2,\hbar/\sigma_0^2\, p^2 \!\to\! 0^+$ for any $\xi = (q,p)$ with $q > 0$ and $p \neq 0$. \\
As a matter of fact, it can be proved that the relation \eqref{th2wo} remains valid also for $p = 0$ if one puts $(\OmC^{\pm}\,\bm{\Sigma}^{\hbar}_{\sigma_0,x})(q,0) = \bm{\Sigma}^{\hbar}_{\sigma_0,x}(q,0)$; the latter position appears to be reasonable and is indeed compatible with the computations reported in the proof of Proposition \ref{prop:WScl}. Nonetheless, since $\exp(-\,\sigma_0^2\, p^2\!/\hbar) = 1$ in this case, the resulting upper bound is of limited interest for what concerns the semi-classical limit.
To say more, for $p = 0$ and $\hbar \sigma_0^2\,/q^2$ (or $C_0$) small enough, by a variation of the arguments described in the proof of Theorem \ref{t:waveoperators} one can derive the lower bound
\begin{equation*}
\left(\int_{\GG} dx\,\left|\OmQ^{\pm}\, \bm{\Xi}^{\hbar}_{\sigma_0,\xi}(x) - (\OmC^{\pm}\bm{\Sigma}^{\hbar}_{\sigma_0,x})(\xi)\right|^{2}\right)^{\!1/2}
\geq \sqrt{\frac{2}{n}} \,\bigg(1 - e^{-\,\frac{q^{2}}{4\hbar \sigma_0^2}} - \sqrt{2}\,C_0\, e^{- \,\eps\,{q^2 \over \hbar \sigma_0^2}}\bigg) \,.
\end{equation*}
This shows that, as might be expected, the classical scattering theory does not provide a good approximation for the quantum analogue when $p = 0$. On the contrary, notice that Eq.\,\eqref{th1} ensures a significant control of the error for the dynamics at any finite time $t \in \mathbb{R}$ even for $p = 0$.
\end{remark}

\section{The quantum theory\label{s:2}}
\subsection{Dirichlet dynamics on the half-line.}

Let us first consider the free quantum Hamiltonian for a quantum particle of mass $m$ on the whole real line, defined as usual by
$$ \Hfree : H^2(\RE) \subset L^2(\RE) \to L^2(\RE)\,, \qquad \Hfree\, \psi := - \,{\hbar^2 \over 2m}\,\psi''\,, $$
together with the associated  free unitary group $\Ufree_{t} := e^{-i {t \over \hbar} \Hfree}$ ($t \in \RE$). Correspondingly, let us recall that for any $\psi \in L^2(\RE)$ we have
\begin{gather}
(\Ufree_{t}\, \psi)(x) = \sqrt{{m \over 2\pi i \hbar t}}\, \int_{\RE} \!dy\;e^{i {m \over 2\hbar} {(x-y)^2 \over t}}\,\psi(y)\,.\label{Ufree}
\end{gather}

Let us further introduce the Dirichlet Hamiltonian on the half-line $\RE_{+}$, defined as usual by
\begin{gather*}
\dom(\HD) := H^{1}_{0}(\RE_{+}) \cap H^{2}(\RE_{+})\,, \qquad \HD \psi := - \,{\hbar^2 \over 2m}\,\psi''\,,
\end{gather*}
and refer to the associated unitary group $\UD_{t} := e^{- i {t \over \hbar} \HD}$ ($t \in \RE$). As well known, the latter operator can be  expressed as
\begin{gather}
\UD_{t} =\, \mathcal{U}^{-}_{t} -\, \mathcal{U}^{+}_{t} \,, \label{Up}
\end{gather}
where, in view of the identity \eqref{Ufree}, we introduced the bounded operators on $L^2(\RE_{+})$ defined as follows for $\psi \in L^2(\RE_{+})$ and $x \in \RE_{+}$:
\begin{gather}
(\mathcal{U}^{\pm}_{t}\, \psi)(x) := \sqrt{{m \over 2\pi i \hbar t}}\, \int_{0}^{\infty} \!\!dy\;e^{i {m \over 2\hbar} {(x \pm y)^2 \over t}}\,\psi(y)\,. \label{UUpm}
\end{gather}

\begin{remark}\label{rem:RUpm}
 Let  us consider the bounded operator
\[
\Theta : L^2(\RE_{+}) \to L^2(\RE)\,, \qquad
 (\Theta\, \psi)(x) = \left\{ \begin{aligned}
 			     &\psi(x) \qquad & \mbox{if\, $x>0$}\,, \\ 
			     &0              & \mbox{if\, $x<0$}\,,
			     \end{aligned}\right.
\]
together with its adjoint 
$$ 
\Theta^{*} : L^2(\RE) \to L^2(\RE_{+})\,, \quad\qquad (\Theta^{*} \psi)(x) = \psi(x) \quad (x \in \RE_{+})\,.
$$
Namely, $\Theta$ gives the extension by zero to the whole real line $\RE$ of any function on $\RE_{+}$, while $\Theta^{*}$ is the restriction to $\RE_{+}$ of any function on $\RE$. Note that $\Theta$ is an isometry. In fact, $\Theta^{*}\, \Theta$ is the identity on $L^2(\RE_{+})$ and $\Theta\, \Theta^{*}$ is an orthogonal projector (but not the identity) on $L^2(\RE)$; more precisely, we have ($\theta$ is the Heaviside step function)
\begin{gather*}
\Theta^{*}\, \Theta = \unit \quad \mbox{on\, $L^2(\RE_{+})$}\,, \\
\Theta\, \Theta^{*} : L^2(\RE) \to L^2(\RE)\,, \quad\qquad (\Theta\, \Theta^{*} \psi)(x) = \theta(x)\, \psi(x) \quad (x \in \RE)\,.
\end{gather*}

To proceed let us consider the parity operator
\begin{equation*}
 P : L^2(\RE) \to L^2(\RE)\,, \qquad\quad (P\, \psi)(x) = \psi(-x) \quad (x \in \RE)\,.
 \end{equation*}
Of course $P$ is a unitary, self-adjoint involution which commutes with the free Hamiltonian $\Hfree$, i.e., 
$$ \Hfree\, P = P\, \Hfree\,. $$
Furthermore it can be checked by direct inspection that
\begin{equation*}
\ran(P\,\Theta) = \ker(\Theta^{*})
\end{equation*}

Using the bounded linear maps introduced above, one can express the operators defined in Eq. \eqref{UUpm} as follows:
\begin{gather}
\mathcal{U}^{-}_{t} = \Theta^{*} \Ufree_{t}\,\Theta\,, \qquad\quad
\mathcal{U}^{+}_{t} = \Theta^{*} P\, \Ufree_{t}\,\Theta = \Theta^{*} \Ufree_{t}\,P\,\Theta\,. \label{UUpmop}
\end{gather}
Recalling that $(\Ufree_{t})^{*} = \Ufree_{-t}$, the above relation allow us to infer 
\begin{gather*}
(\mathcal{U}^{-}_{t})^{*} = \mathcal{U}^{-}_{-t}\,, \qquad\quad
(\mathcal{U}^{+}_{t})^{*} = \mathcal{U}^{+}_{-t}\,.
\end{gather*}
Let us finally point out that, on account of the obvious operator norms $\|\Theta\| = \|\Theta^{*}\| = 1$, $\|\Ufree_t\| = 1$ and $\|P\| = 1$, from Eq. \eqref{UUpmop} it readily follows
\begin{equation*}
\|\,\mathcal{U}^{\pm}_{t}\| \leq 1\,. 
\end{equation*}
\end{remark}

\subsection{Dirichlet and Kirchhoff dynamics on the star-graph.\label{ss:DKdynamics}}

Let us now introduce the quantum Hamiltonian on the graph $\GG$, corresponding to Dirichlet boundary conditions at the vertex. This coincides with the direct sum of $n$ copies of the Dirichlet Hamiltonian $\HD$ on the half-line $\RE_{+}$, namely:
\begin{gather}
\dom(\HDop) := \oplus_{\ell = 1}^{n}\, \dom(\HD) = \big(\!\oplus_{\ell = 1}^{n}\! H^{1}_{0}(\RE_{+})\big) \cap H^{2}(\GG)\,, \label{Hdirichlet1}\\
\HDop := \oplus_{\ell = 1}^{n} \HD : \dom(\HDop) \subset L^2(\GG) \to L^2(\GG)\,, \qquad \HDop\, \bm{\psi} := - \,{\hbar^2 \over 2m}\! \left(\!\begin{array}{c} \psi''_{1} \\ \vdots \\ \psi''_{n} \end{array}\! \right). \label{Hdirichlet2}
\end{gather}
In view of the identity \eqref{Up}, it can be readily inferred that the corresponding unitary group $e^{- i {t \over \hbar} \HDop}$ ($t \in \RE$) can be expressed as
\begin{gather}
e^{- i {t \over \hbar} \HDop} =\, \oplus_{\ell = 1}^{n}\, \mathcal{U}^{-}_{t} -\, \oplus_{\ell = 1}^{n} \,\mathcal{U}^{+}_{t} \,, \label{UpG}
\end{gather}
where $\mathcal{U}^{\pm}_{t}$ is defined as in Eq. \eqref{UUpm}.

To proceed let us consider the Kirchhoff Hamiltonian on the graph $\GG$. This is defined as in Eqs. \eqref{HK1} \eqref{HK2}. In what follows we denote by $\MC$ the $n\times n$ matrix with components
\begin{equation}\label{barefoot}
(\MC)_{\ell,\ell'} := \delta_{\ell,\ell'} - \frac2n\,, \qquad \mbox{for $\ell,\ell' = 1,\dots,n$}\,.
\end{equation}
By a slight abuse of notation we use the same symbol to denote the operator in $L^2(\GG)$ defined by
\[
(\MC\, \bm{\psi})_\ell := \sum_{\ell'=1}^n(\MC)_{\ell,\ell'}\psi_{\ell'} \qquad \bm{\psi} \in L^2(\GG)\,.
\]

By arguments similar to those given in the proof of \cite[Thm.\,2.1]{ACFN11} (cf. also \cite{GOO} and \cite[Eq.\,(7.1)]{KosSch}) we get
\begin{gather}
e^{-i {t \over \hbar} \HK} = \oplus_{\ell = 1}^{n}\,\mathcal{U}^{-}_{t} - {\MC} \oplus_{\ell = 1}^{n} \mathcal{U}^{+}_{t} . \label{dynHF}
\end{gather}

\subsection{The quantum wave operators and scattering operator.}
Let us consider the wave operators and the corresponding scattering operator on $L^{2}(\GG)$  respectively defined in Eqs.\,\eqref{Wqdef} and \eqref{Sqdef}. 

Since $\HK$ has purely absolutely continuous spectrum $\sigma(\HK) = [0,\infty)$, we have that $\OmQ^{\pm}$ are unitary on the whole Hilbert space $L^2(\GG)$, i.e.,
\begin{equation*}
(\OmQ^{\pm})^* \OmQ^{\pm} = \unit\,,
\end{equation*}
which in turn ensures\footnote{Of course, the same identity \eqref{Omnorm} can be derived straightforwardly from the fact that $\OmQ^{\pm}$ are defined as strong limits of unitary operators.}
\begin{equation}\label{Omnorm}
\|\OmQ^{\pm}\| = 1\,.
\end{equation}

Let us define the unitary operators $\mathcal F_s : L^2(\RE_+) \to  L^2(\RE_+) $  and $\mathcal F_c : L^2(\RE_+) \to  L^2(\RE_+) $:
\begin{gather}
(\mathcal F_s \psi)(k) := - \,\frac{2i}{\sqrt{2\pi}} \int_0^\infty\!\! dx\; \sin (kx)\, \psi (x)\qquad (k \in\RE_+)\,; \label{F-sine}\\
(\mathcal F_c \psi)(k) := \frac2{\sqrt{2\pi}} \int_0^\infty\!\! dx\; \cos (kx)\, \psi (x)  \qquad (k \in \RE_+)\,. \label{F-cosine}
\end{gather}

The wave operators can be computed explicitly. To this aim one could use the results  from Weder \cite{Weder15} (see also references therein), with some modifications, since in \cite{Weder15} the  reference dynamics is given by the Hamiltonian with Neumann boundary conditions.  For the sake of completeness, we prefer to give an explicit derivation of the result, obtained by taking the limit $t\to\pm\infty$ on the unitary groups. We remark that in \cite{Weder15}  the formulae are obtained by using the  Jost functions. 

We have the following explicit formulae for the wave operators:

\begin{proposition}\label{p:QWO}
The quantum wave operators can be expressed as
\begin{equation}\label{house}
\OmQ^{\pm} = \frac12 \oplus_{\ell = 1}^{n} (\unit \pm \mathcal F_c^* \mathcal F_s) + \frac12\, {\MC} \oplus_{\ell = 1}^{n} \!(\unit \mp \mathcal F_c^* \mathcal F_s) \,.  
\end{equation}
\end{proposition}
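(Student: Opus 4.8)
The plan is to compute the strong limits $t\to\pm\infty$ directly on the explicit unitary groups \eqref{UpG} and \eqref{dynHF}. Since the entries of $\MC$ are scalars (see \eqref{barefoot}), $\MC$ commutes with each $\oplus_{\ell=1}^{n}\mathcal{U}^{\pm}_{t}$ and is self-adjoint; combining this with $(\mathcal{U}^{\pm}_{t})^{*}=\mathcal{U}^{\pm}_{-t}$ from Remark \ref{rem:RUpm}, the adjoint of \eqref{dynHF} reads
\[
e^{i\frac{t}{\hbar}\HK}=\oplus_{\ell=1}^{n}\mathcal{U}^{-}_{-t}-\MC\oplus_{\ell=1}^{n}\mathcal{U}^{+}_{-t}\,.
\]
Multiplying by \eqref{UpG} and expanding, $e^{i\frac{t}{\hbar}\HK}e^{-i\frac{t}{\hbar}\HDop}$ is a sum of four terms, each of the form (possibly premultiplied by $\MC$) $\oplus_{\ell=1}^{n}(\mathcal{U}^{a}_{-t}\mathcal{U}^{b}_{t})$ with $a,b\in\{-,+\}$. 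As all factors are contractions that converge strongly, the strong limit of the product is the product of the strong limits; hence it suffices to evaluate the four scalar limits $\slim_{t\to\pm\infty}\mathcal{U}^{a}_{-t}\mathcal{U}^{b}_{t}$ on $L^{2}(\RE_{+})$ and then reassemble them with $\MC$ in the appropriate places.

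These four limits are the analytic heart of the proof. Using the factorizations $\mathcal{U}^{-}_{t}=\Theta^{*}\Ufree_{t}\Theta$ and $\mathcal{U}^{+}_{t}=\Theta^{*}P\,\Ufree_{t}\Theta=\Theta^{*}\Ufree_{t}P\Theta$ from \eqref{UUpmop}, together with $\Theta\Theta^{*}=\Pi$ (multiplication by the Heaviside function $\theta$, see Remark \ref{rem:RUpm}), the commutation $P\,\Ufree_{t}=\Ufree_{t}P$ and $P\,\Pi\,P=\unit-\Pi$, every product reduces to $\Theta^{*}$ times $\Ufree_{-t}\,\Pi\,\Ufree_{t}$ (or $\unit-\Ufree_{-t}\Pi\Ufree_{t}$), flanked by $\Theta$, $P\Theta$, or $P$-conjugates. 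The only nontrivial input is the asymptotic-localization statement for the free evolution,
\[
\slim_{t\to\pm\infty}\Ufree_{-t}\,\Pi\,\Ufree_{t}=P^{\pm}\,,
\]
where $P^{\pm}$ is the orthogonal projection onto positive/negative momenta on $L^{2}(\RE)$. This is the step I expect to be the \emph{main obstacle}: it is the place where real analysis, rather than algebra, is needed. I would prove it on the dense set of $\psi$ whose Fourier transform is smooth and compactly supported away from $k=0$, using that the Heisenberg position $\Ufree_{-t}\,\hat{x}\,\Ufree_{t}=\hat{x}+\tfrac{t}{m}\hat{p}$ forces $\Ufree_{-t}\theta(\hat x)\Ufree_{t}=\theta(\hat x+\tfrac{t}{m}\hat p)\to\theta(\pm\hat p)$ as $t\to\pm\infty$; equivalently, one may compute the composite oscillatory integral from \eqref{UUpm}, in which the intermediate variable enters the phase only linearly (the quadratic terms cancel because each reflection sign squares to one), producing a $\delta$ plus principal-value kernel whose limit is read off directly.

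Writing $P^{\pm}=\tfrac12(\unit\pm J)$ with $J:=\mathscr F^{-1}\sgn(k)\,\mathscr F$, the four limits are expressed through $\Theta^{*}J\Theta$ and $\Theta^{*}JP\Theta$, upon using $\Theta^{*}\Theta=\unit$, $\Theta^{*}P\Theta=0$, and $\Theta^{*}PJ\Theta=-\Theta^{*}JP\Theta$ (the last because $P$ commutes with $\mathscr F$ while anticommuting with $\sgn(k)$). To recast these in terms of $\mathcal F_{s},\mathcal F_{c}$ I would exploit the decompositions $\Theta=\tfrac12(E_{\mathrm{even}}+E_{\mathrm{odd}})$ and $P\Theta=\tfrac12(E_{\mathrm{even}}-E_{\mathrm{odd}})$, where $E_{\mathrm{even}},E_{\mathrm{odd}}$ denote the (unnormalized) even and odd extension maps $L^{2}(\RE_{+})\to L^{2}(\RE)$: by the very definitions \eqref{F-sine}, \eqref{F-cosine}, $\mathscr F$ sends the even extension of $\psi$ to the even extension of $\mathcal F_{c}\psi$ and the odd extension of $\psi$ to the odd extension of $\mathcal F_{s}\psi$; multiplication by $\sgn(k)$ interchanges even and odd extensions, and restriction to $\RE_{+}$ (using the analogous intertwining for $\mathscr F^{-1}$) then yields, after using $\mathcal F_{c}^{*}=\mathcal F_{c}$ and $\mathcal F_{s}^{*}=-\mathcal F_{s}$,
\[
\Theta^{*}J\,\Theta=\tfrac12\big(\mathcal F_{c}^{*}\mathcal F_{s}+(\mathcal F_{c}^{*}\mathcal F_{s})^{*}\big)\,,\qquad \Theta^{*}JP\,\Theta=-\tfrac12\big(\mathcal F_{c}^{*}\mathcal F_{s}-(\mathcal F_{c}^{*}\mathcal F_{s})^{*}\big)\,.
\]
Finally I would substitute the four limits into the expansion of the first paragraph. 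The anti-self-adjoint pieces $(\mathcal F_{c}^{*}\mathcal F_{s})^{*}$ cancel pairwise between the diagonal terms ($a=b$) and the cross terms ($a\neq b$), the terms free of $\MC$ collapse to $\tfrac12\oplus_{\ell=1}^{n}(\unit\pm\mathcal F_{c}^{*}\mathcal F_{s})$ and those carrying $\MC$ to $\tfrac12\MC\oplus_{\ell=1}^{n}(\unit\mp\mathcal F_{c}^{*}\mathcal F_{s})$, which is exactly \eqref{house}.
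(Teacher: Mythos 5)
Your proof is correct, and although it starts from the same expansion of $e^{i\frac{t}{\hbar}\HK}e^{-i\frac{t}{\hbar}\HDop}$ into the four products $\mathcal{U}^{a}_{-t}\mathcal{U}^{b}_{t}$ that the paper uses, it takes a genuinely different route at the analytic core. The paper combines the products pairwise into the \emph{exact} finite-$t$ identities $\mathcal{U}^{-}_{-t}\mathcal{U}^{-}_{t}-\mathcal{U}^{-}_{-t}\mathcal{U}^{+}_{t}=\tfrac12(\unit+\UN_{-t}\UD_{t})$ and $\mathcal{U}^{+}_{-t}\mathcal{U}^{+}_{t}-\mathcal{U}^{+}_{-t}\mathcal{U}^{-}_{t}=\tfrac12(\unit-\UN_{-t}\UD_{t})$ (using the same even/odd extension bookkeeping you use), and then invokes the appendix result (Proposition \ref{p:WND}) giving $\slim_{t\to\pm\infty}\UN_{-t}\UD_{t}=\pm\,\mathcal F_c^*\mathcal F_s$ via explicit integration-by-parts estimates, which also yields a quantitative rate $O(t^{-1/4})$. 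You instead evaluate each of the four limits separately by reducing everything to the asymptotic-localization statement $\slim_{t\to\pm\infty}\Ufree_{-t}\,\theta(\hat x)\,\Ufree_{t}=\theta(\pm\hat p)$ — a standard soft fact — and then convert the momentum-sign projection into $\mathcal F_c^*\mathcal F_s$ through the intertwining of $\mathscr F$ with even/odd extensions. I checked your algebra: $\Theta^{*}J\Theta=\tfrac12\big(\mathcal F_c^*\mathcal F_s+(\mathcal F_c^*\mathcal F_s)^*\big)$ and $\Theta^{*}JP\Theta=-\tfrac12\big(\mathcal F_c^*\mathcal F_s-(\mathcal F_c^*\mathcal F_s)^*\big)$ are both right, the adjoint parts do cancel between the $a=b$ and $a\neq b$ terms, and the four limits reassemble to \eqref{house}. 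What your route buys is that the half-line Neumann--Dirichlet wave operators never need to be computed, so the appendix becomes superfluous; what it loses is the explicit convergence rate. One sentence to tighten: the individual factors $\mathcal{U}^{a}_{-t}$ do \emph{not} converge strongly as $t\to\pm\infty$ (only weakly), so ``the strong limit of the product is the product of the strong limits'' is not literally applicable to $\mathcal{U}^{a}_{-t}\mathcal{U}^{b}_{t}$; this is harmless, though, because your subsequent reduction places all the $t$-dependence in the single middle factor $\Ufree_{-t}\,\Theta\Theta^{*}\,\Ufree_{t}$, sandwiched between fixed bounded operators, for which the argument is valid.
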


\begin{proof}
By Eqs. \eqref{UpG} and \eqref{dynHF} we easily obtain the identity 
\begin{equation}\label{Wqexp}
\OmQ^{\pm} = \slim_{t\to\pm\infty}\left(\oplus_{\ell = 1}^{n} (\mathcal{U}^{-}_{-t}\, \mathcal{U}^{-}_{t} - \mathcal{U}^{-}_{-t}\, \mathcal{U}^{+}_{t}) + {\MC} \oplus_{\ell = 1}^{n} \!(\mathcal{U}^{+}_{-t}\, \mathcal{U}^{+}_{t}  - \mathcal{U}^{+}_{-t}\, \mathcal{U}^{-}_{t})\right)\,.
\end{equation}

Let us find more convenient expressions for the operators on the  right-hand side. Let $\psi \in L^2(\RE_+)$ and define 
\[\psi_e(x) := \left\{\begin{aligned}&\psi(x) &\qquad \mbox{if\, $x>0$} \\
&\psi(-x) &\qquad \mbox{if\, $x<0$}
                                  \end{aligned}  
                          \right. \;;
\qquad\quad 
                         \psi_o(x) := \left\{\begin{aligned}&\psi(x) &\qquad \mbox{if\, $x>0$} \\
				-&\psi(-x) &\qquad \mbox{if\, $x<0$}
 				\end{aligned}
 			\right. \;.
 \] 
So, in view of Remark \eqref{rem:RUpm}, we have $\Theta \psi  = \frac{\psi_e +\psi_o }{2}$.  Hence, see Eq. \eqref{UUpmop}, 
\[\begin{aligned}
\mathcal{U}^{-}_{-t}\, \mathcal{U}^{-}_{t} \psi
& =
\Theta^{*} \Ufree_{-t}\,\Theta\, \Theta^{*} \Ufree_{t}\,\Theta \psi  \\ 
& = \frac14\; \Theta^{*} \left[  \Ufree_{-t} (\Theta^{*} \Ufree_{t}\psi_e)_e+\Ufree_{-t}(\Theta^{*} \Ufree_{t}\psi_o)_e+\Ufree_{-t}(\Theta^{*} \Ufree_{t}\psi_e)_o+\Ufree_{-t} (\Theta^{*} \Ufree_{t} \psi_o)_o \right] \\
&\hspace{-0.5cm} \textrm{(since $\Ufree_{t}\psi_e$ is even and $\Ufree_{t}\psi_o$ is odd, $(\Theta^{*} \Ufree_{t}\psi_e)_e =\Ufree_{t}\psi_e $ and $(\Theta^{*}\Ufree_{t}\psi_o)_o =\Ufree_{t}\psi_o $ we get)} \\
& =  \frac14\; \Theta^{*} \left[  \psi_e+\Ufree_{-t}(\Theta^{*} \Ufree_{t}\psi_o)_e+\Ufree_{-t}(\Theta^{*} \Ufree_{t}\psi_e)_o+ \psi_o \right] \\ 
& = \frac12\;   \psi
+\frac14\; \Theta^{*} \left[  \Ufree_{-t}(\Theta^{*} \Ufree_{t}\psi_o)_e+\Ufree_{-t}(\Theta^{*} \Ufree_{t}\psi_e)_o \right]. 
\end{aligned}\]

On the other hand, 
\[\begin{aligned}
\mathcal{U}^{-}_{-t}\, \mathcal{U}^{+}_{t} \psi
& =
\Theta^{*} \Ufree_{-t}\,\Theta\, \Theta^{*}\,\Ufree_{t}\,P\,\Theta\,\psi  \\ 
& \textrm{(Since $P\Theta\psi=P(\psi_e + \psi_o)/2 =(\psi_e - \psi_o)/2$ we get)} \\
& = \frac14\; \Theta^{*} \left[  \Ufree_{-t} (\Theta^{*} \Ufree_{t}\psi_e)_e-\Ufree_{-t}(\Theta^{*} \Ufree_{t}\psi_o)_e+\Ufree_{-t}(\Theta^{*} \Ufree_{t}\psi_e)_o-\Ufree_{-t} (\Theta^{*} \Ufree_{t} \psi_o)_o \right] \\
& = \frac14\; \Theta^{*} \left[  \psi_e-\Ufree_{-t}(\Theta^{*} \Ufree_{t}\psi_o)_e+\Ufree_{-t}(\Theta^{*} \Ufree_{t}\psi_e)_o- \psi_o \right] \\ 
& = \frac14\; \Theta^{*} \left[  - \Ufree_{-t}(\Theta^{*} \Ufree_{t}\psi_o)_e+\Ufree_{-t}(\Theta^{*} \Ufree_{t}\psi_e)_o \right]. 
\end{aligned}\]

Hence
\[
( \mathcal{U}^{-}_{-t}\, \mathcal{U}^{-}_{t} - \mathcal{U}^{-}_{-t}\, \mathcal{U}^{+}_{t})\, \psi = \frac12\;   \psi + \frac12\; \Theta^{*} \Ufree_{-t}(\Theta^{*} \Ufree_{t}\psi_o)_e \,.
 \]

Recall that  $\UD_t$ is the unitary group generated by the Dirichlet  Laplacian on the half-line; its integral kernel is given by 
 \begin{equation*}
 \UD_t(x,y) = \Ufree_{t}(x-y) -  \Ufree_{t}(x+y)\qquad \mbox{for\, $x,y \in\RE_+$}\,.
  \end{equation*}
Moreover, let $\UN_t$ be the  unitary group generated by the Neumann Laplacian on the half-line; its integral kernel is given by  
 \begin{equation*}
\UN_t(x,y) = \Ufree_{t}(x-y) +  \Ufree_{t}(x+y)\qquad \mbox{for\, $x,y \in\RE_+$}\,.
  \end{equation*}
  
 Note that, for $x \in \RE_{+}$,
\[
( \Theta^{*} \Ufree_{t}\,\psi_o) (x) 
= \int_\RE\! dy\;  \Ufree_{t}(x-y) \psi_o (y) 
= \int_0^\infty\!\!  dy\, \big(\Ufree_{t}(x-y) -  \Ufree_{t}(x+y)\big)\, \psi (y) = (\UD_t \psi)(x)\,,
\]
similarly
 \[
( \Theta^{*} \Ufree_{t}\,\psi_e) (x) = \int_0^\infty\!\!  dy\,  \big(\Ufree_{t}(x-y) +  \Ufree_{t}(x+y)\big)\, \psi (y)  = (\UN_t \psi)(x)\,.
  \]
 Hence, 
 \[
\Theta^{*}  \Ufree_{-t}(\Theta^{*} \Ufree_{t}\,\psi_o)_e  = \Theta^{*}  \Ufree_{-t}( \UD_t \psi)_e  = \UN_{-t} \,\UD_t\, \psi\,, 
 \]
and 
\begin{equation} \label{cow1}
( \mathcal{U}^{-}_{-t}\, \mathcal{U}^{-}_{t} - \mathcal{U}^{-}_{-t}\, \mathcal{U}^{+}_{t}) \psi = \frac12\;   \psi + \frac12\; \UN_{-t}\, \UD_t\, \psi\,.
\end{equation}

A similar computation gives 
\[\begin{aligned}
\mathcal{U}^{+}_{-t}\, \mathcal{U}^{+}_{t} \psi
& = \Theta^{*} \Ufree_{-t}\,P\,\Theta\, \Theta^{*} \Ufree_{t}\,P\Theta \psi  \\ 
& = \frac14\; \Theta^{*} \left[  \Ufree_{-t} (\Theta^{*} \Ufree_{t}\psi_e)_e-\Ufree_{-t}(\Theta^{*} \Ufree_{t}\psi_o)_e- \Ufree_{-t}(\Theta^{*} \Ufree_{t}\psi_e)_o+\Ufree_{-t} (\Theta^{*} \Ufree_{t} \psi_o)_o \right] \\
& = \frac12\; \psi
-\frac14\; \Theta^{*} \left[  \Ufree_{-t}(\Theta^{*} \Ufree_{t}\psi_o)_e+\Ufree_{-t}(\Theta^{*} \Ufree_{t}\psi_e)_o \right],
\end{aligned}\]
and 
\[\begin{aligned}
\mathcal{U}^{+}_{-t}\, \mathcal{U}^{-}_{t} \psi
& =
\Theta^{*} \Ufree_{-t}\,P\, \Theta\, \Theta^{*}\,\Ufree_{t}\,\Theta\,\psi  \\
& = \frac14\; \Theta^{*} \left[  \Ufree_{-t} (\Theta^{*} \Ufree_{t}\psi_e)_e+\Ufree_{-t}(\Theta^{*} \Ufree_{t}\psi_o)_e-\Ufree_{-t}(\Theta^{*} \Ufree_{t}\psi_e)_o-\Ufree_{-t} (\Theta^{*} \Ufree_{t} \psi_o)_o \right] \\
& = \frac14\; \Theta^{*} \left[   \Ufree_{-t}(\Theta^{*} \Ufree_{t}\psi_o)_e- \Ufree_{-t}(\Theta^{*} \Ufree_{t}\psi_e)_o \right]. 
\end{aligned}\]

Hence
\begin{equation}\label{cow2}
( \mathcal{U}^{+}_{-t}\, \mathcal{U}^{+}_{t} - \mathcal{U}^{+}_{-t}\, \mathcal{U}^{-}_{t})\, \psi = \frac12\; \psi - \frac12\; \Theta^{*} \Ufree_{-t}(\Theta^{*} \Ufree_{t}\psi_o)_e =  \frac12 \;  \psi - \frac12\; \UN_{-t} \,\UD_t \psi\,.
\end{equation}
 
To compute the wave operator we have to evaluate the limits $\slim_{t \to \pm \infty} \UN_{-t}\, \UD_t$; the latter give the wave operators $\Omega_{ND}^\pm$ for the pair (Dirichlet Laplacian, Neumann Laplacian) on the half-line, which are computed in Proposition \ref{p:WND} and equal $ \pm\, \mathcal F_c^* \mathcal F_s$. This, together with Eqs. \eqref{Wqexp}, \eqref{cow1}, and \eqref{cow2} concludes the proof of identity \eqref{house}.
\end{proof} 

\begin{remark}\label{r:QSO}
Note that the wave operators do not depend on $\hbar$. Moreover the scattering operator is given by\footnote{The last identity in Eq.\,\eqref{SQMC} can be derived by a simple computation starting from Eq.\,\eqref{house}, recalling that $ \mathcal F_s$ and $\mathcal F_c$ are unitary operators, and noting that $\MC^{*} \!= \MC$, $\MC^2 = \unit$.}
\begin{equation}\label{SQMC}
\SQ = (\OmQ^{+})^{*}\, \OmQ^{-} = \MC\,,
\end{equation}
the same formula is written component-wisely in Eq.\,\eqref{whatis1}. The matrix  $\MC$ given here equals $-\sigma^{(v)}$, where $\sigma^{(v)}$ is the scattering matrix at the vertex, as defined in \cite[Def.\,2.1.1.]{BK} (see also \cite[Ex.\,2.1.7, p.\,41]{BK} and \cite[Ex.\,2.4]{KS-jpa99}). The minus sign is due to the fact that as reference Hamiltonian  we chose Dirichlet boundary conditions, instead of Neumann boundary conditions, see also \cite{KS-jpa99}.
\end{remark}

\section{The semiclassical theory \label{s:semiclass}}

\subsection{Classical dynamics on the half-line with elastic collision at the origin. \label{ss:classhalfline}}
We start by recalling  some basic definitions and results from \cite{CFP19}. Let $X_{0}(q,p)=(p/m,0)$ be the vector field associated to the free (classical) Hamiltonian of a particle of mass $m$ and consider the differential operator 
$$
L:\S'(\RE^{2})\to \S'(\RE^{2})\,,\qquad Lf:= - \,i\,X_{0}\cdot\nabla f
$$
in the space of tempered distributions $\S'(\RE^{2})$.

We denote by 
\begin{gather*}
L_{0}:\dom(L_{0})\subseteq L^{2}(\RE^{2})\to L^{2}(\RE^{2})\,,\qquad (L_{0}f)(q,p)  = -\,i\,{p \over m}\,\frac{\partial f }{\partial q}(q,p)\,, \\
\dom( L_{0}) := \big\{ f \in L^2(\RE^2) \;\big|\; Lf\in L^2(\RE^2) \big\} \,, 
\end{gather*}
the maximal realization of $L$ in $L^{2}(\RE^{2})$. Posing $R^{0}_z := (L_{0} -z)^{-1}$ for $z\in\CO\backslash\RE$ one has 
\begin{equation}\label{res}
(R^{0}_z f)(q,p) =\sgn(\Im z)\;\frac{i\,m}{|p|} \int_{\RE}\!  dq'\,  \theta((q-q')\,p \Im z)\,{e^{i m z (q-q')/p}}\, f(q',p)\,. 
\end{equation}
Moreover  the action of the  (free) unitary group $e^{ i t L_{0}}$ ($t \in \RE$) is given by 
\begin{equation*}
\big(e^{ i t L_{0}} f\big)(q,p) = f\Big(q + {p t \over m} , p\Big)\,.
\end{equation*}

For any $f \in \S(\RE^{2})$ we define the map  $(\gamma f)(p) := f(0,p)$. For a comparison with the results in \cite[Sec. 2]{CFP19}, recall that the map $\gamma$ can be equivalently defined as  $(\gamma f)(p) = {1 \over \sqrt{2\pi}}\int_{\RE}\! dk\; \tilde f(k,p)$ where $\tilde f(k,p)$ is the Fourier transform of $f(q,p)$ in the variable $q$. By \cite[Lem. 2.1]{CFP19}, the map $\gamma$ extends to a bounded operator $\gamma :\dom(L_{0}) \to  L^2(\RE, |p|\,dp)$, where $\dom(L_{0}) \subset L^2(\RE)$ is endowed with the graph norm. For any  $z\in\CO\backslash\RE$, we define the bounded linear map
$$
G_{z}:L^{2}(\RE,|p|^{-1}dp)\to L^{2}(\RE^{2})\,,\qquad G_{z}:=(\gamma\, R^0_{\bar z})^{*}
$$
(here $L^{2}(\RE,|p|^{-1}dp)$ and $L^{2}(\RE,|p|dp)$ are considered as a dual couple with respect to the duality induced by the scalar product in $L^{2}(\RE)$). An explicit calculation gives 
$$
(G_{z}u)(q,p) =  \theta(q\,p \Im z)\;\sgn(\Im z)\;\frac{i\,m}{|p|}\,{e^{i m z q/p}}\,u(p)\,.
$$

Next we consider  the  classical motion of a point particle of mass $m$ on the whole real line, with elastic collision at the origin. The generator of the dynamics, denoted by $L_\infty$,  is obtained as a limiting case, for $\beta\to\infty$,  of the operator $L_\beta$ defined in \cite{CFP19}. To this aim we set 
\begin{equation*}
\Lambda^{\infty}_{z}:L^{2}(\RE,|p|dp)\to L^{2}(\RE,|p|^{-1}dp) \,,\quad
(\Lambda^{\infty}_{z}u)(p):=i \sgn(\Im z)\, \frac{2|p|}{m}\, u(p)\,.
\end{equation*}
In addition, let us consider the projector on even functions (here either $\rho(p)=|p|$ or $\rho(p)=|p|^{-1}$)
\begin{equation*}
\Pi_{ev} : L^2(\RE,\rho dp) \to L^2(\RE,\rho dp)\,, \qquad (\Pi_{ev} f)(p) := {1 \over 2}\,\big( f(p) + f(-p) \big)\,.
\end{equation*}
By \cite[Thm. 2.2]{CFP19}, here employed with $\beta \to \infty$, the operator $L_\infty$ is defined by 
\begin{gather*}
\dom(L_{\infty}):= \big\{f\in L^{2}(\RE^{2}) \;\big|\; f=f_{z}+G_{z}\,\Lambda^{\infty}_{z}\,\Pi_{ev}\gamma f_{z}\,,\; f_{z}\in\dom(L_{0})\big\}\,, \nonumber \\
(L_{\infty}-z)f=(L_{0}-z)f_{z}\,, \label{Linf}
\end{gather*}
for all $z\in\CO\backslash\RE$.
The associated resolvent operator $R^{\infty}_z := (L_{\infty} - z)^{-1}$ ($z \in \CO \backslash \RE$) can be expressed as follows, in terms of the free resolvent $R^{0}_z $ and of the trace operator $\gamma$:
\begin{equation}
(R^{\infty}_z f)(q,p) = (R^{0}_z f)(q,p) - \,\theta(q p \Im z)\,e^{i m z q/ p}\,\big[(\gamma R^{0}_z f)(p) + (\gamma R^{0}_z f)(-p)\big]\,. \label{ResDir}
\end{equation}
More explicitly, we have
\begin{align}
& (R^{\infty}_{z} f)(q,p) 
= \sgn(\Im z)\,\frac{i\,m}{|p|} \int_{\RE}\!\!dq' \Big[\theta\big((q-q')\,p \Im z\big)\;{e^{i m z (q-q')/p}}\, f(q',p) \nonumber \\
& \qquad - {\theta(q p \Im z) } \Big(
\theta(-q' p \Im z)\, e^{i m z (q-q')/p}\, f(q',p) 
+ \theta(q' p \Im z)\,e^{i m z (q+q')/p}\, f(q',-p) \Big)\Big]\,. \label{ResDir2}
\end{align}

Correspondingly, let us recall that \cite[Prop. 2.4]{CFP19} gives, for all $t \in \RE$,
\begin{equation}
\big(e^{i t L_{\infty}} f\big)(q,p) = \big(e^{i t L_{0}} f\big)(q,p) - \theta(-t q p)\,\theta\!\left({|p t| \over m} - |q|\right)\!\Big[\big(e^{i t L_{0}} f\big)(q,p) + \big(e^{i t L_{0}} f\big)(-q,-p) \Big] \label{dinLinf}
\end{equation}
(here, for a comparison with \cite{CFP19},  we used $\big(e^{i t L_{0}} f(-\,\cdot\,,-\,\cdot\,)\big)(q,p) = \big(e^{ i t L_{0}} f\big)(-q,-p)$\,). 

To proceed, let us introduce the lateral traces defined by
\begin{align}
\big(\gamma_{+}(R^{\infty}_{z} f)\big)(p)
& :=  (\gamma R^{0}_z f)(p) - \,\theta(p \Im z)\,\big[(\gamma R^{0}_z f)(p) + (\gamma R^{0}_z f)(-p)\big] \nonumber \\
& \,= \theta(-p \Im z)\, (\gamma R^{0}_z f)(p) - \theta(p \Im z)\, (\gamma R^{0}_z f)(-p)\,, \label{trp}\\
\big(\gamma_{-}(R^{\infty}_{z} f)\big)(p)
& :=  (\gamma R^{0}_z f)(p) - \,\theta(-p \Im z)\,\big[(\gamma R^{0}_z f)(p) + (\gamma R^{0}_z f)(-p)\big] \nonumber \\
& \,= \theta(p \Im z)\, (\gamma R^{0}_z f)(p) - \theta(-p \Im z)\, (\gamma R^{0}_z f)(-p) \label{trm}
\end{align}
(here we used the trivial identity $\theta(-s) = 1- \theta(s)$\,). Clearly,  $\gamma_{\pm}(R^{\infty}_{z} f)$ are odd functions and, using again \cite[Lem. 2.1]{CFP19},
\begin{equation}
\gamma_{\pm} : \dom(L_{\infty}) \to L^2_{odd}(\RE,|p|dp) \label{trlatdef}
\end{equation}
is a bounded operator. We remark that the action of  $\gamma_\pm$ can be understood as $\big(\gamma_{\pm}(R^{\infty}_{z} f)\big)(p)=\lim_{q\to0^\pm} (R^{\infty}_{z} f)(q,p)$.

For the subsequent developments it is convenient to express the free resolvent $R^{0}_z$ in terms of $R^{\infty}_z$. More precisely, we have the following explicit characterization.

\begin{lemma}\label{lem:ResDir} For any $z \in \CO \backslash \RE$ and for any $f \in L^2(\RE^2)$, there holds
\[
(R^{0}_z f)(q,p) = 
(R^{\infty}_z f)(q,p) - \theta(q p \Im z)\,\sgn(q)\,e^{i m z q/p} \, \Big[ (\gamma_{+}R^{\infty}_z f)(p) - (\gamma_{-}R^{\infty}_z f)(p)\Big]\,. 
\]
\end{lemma}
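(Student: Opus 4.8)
The plan is to obtain the claimed formula simply by inverting the relation \eqref{ResDir}, which already expresses $R^{\infty}_z$ in terms of $R^{0}_z$ and the trace $\gamma R^{0}_z f$. First I would solve \eqref{ResDir} for $R^{0}_z f$, moving the correction term to the other side; this gives
\[
(R^{0}_z f)(q,p) = (R^{\infty}_z f)(q,p) + \theta(q p \Im z)\,e^{i m z q/p}\,\big[(\gamma R^{0}_z f)(p) + (\gamma R^{0}_z f)(-p)\big]\,.
\]
Thus the entire task reduces to rewriting the symmetric combination $(\gamma R^{0}_z f)(p) + (\gamma R^{0}_z f)(-p)$ in terms of the lateral traces $\gamma_{\pm}R^{\infty}_z f$, since these are the quantities that appear in the target identity.

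The key computation is to subtract \eqref{trm} from \eqref{trp}. Using the identity $\theta(-s) = 1 - \theta(s)$, one finds that the coefficients of $(\gamma R^{0}_z f)(p)$ and of $(\gamma R^{0}_z f)(-p)$ coincide and both equal $\theta(-p \Im z) - \theta(p \Im z) = -\sgn(p \Im z)$, so that
\[
\big(\gamma_{+}R^{\infty}_z f\big)(p) - \big(\gamma_{-}R^{\infty}_z f\big)(p) = -\,\sgn(p \Im z)\,\big[(\gamma R^{0}_z f)(p) + (\gamma R^{0}_z f)(-p)\big]\,.
\]
Since $\sgn(p \Im z)^2 = 1$ for $z \notin \RE$ and $p \neq 0$, I can invert this to express the symmetric combination as $-\sgn(p \Im z)$ times the lateral trace difference, and then substitute it into the displayed expression for $R^{0}_z f$.

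Finally, I would absorb the factor $\sgn(p \Im z)$ into $\sgn(q)$ using the support constraint imposed by the Heaviside factor. On $\supp\,\theta(q p \Im z)$ one has $q p \Im z > 0$, hence $\sgn(q) = \sgn(p \Im z)$, and therefore $\theta(q p \Im z)\,\sgn(p \Im z) = \theta(q p \Im z)\,\sgn(q)$. Making this replacement yields exactly the asserted formula. The only delicate point in the whole argument is the careful bookkeeping of the sign and Heaviside functions together with this support identity; there is no analytic obstacle, since the result is ultimately an algebraic identity among the explicit kernels in \eqref{res}, \eqref{ResDir}, \eqref{trp} and \eqref{trm}, valid for every $f \in L^2(\RE^2)$.
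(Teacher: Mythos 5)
Your argument is correct and follows essentially the same route as the paper: both proofs invert Eq.\,\eqref{ResDir} and use the lateral-trace relations \eqref{trp}--\eqref{trm} to re-express the symmetric combination $(\gamma R^{0}_z f)(p)+(\gamma R^{0}_z f)(-p)$, finishing with sign bookkeeping on the support of $\theta(qp\Im z)$. The only (harmless) organizational difference is that you extract the sum directly from the difference $\gamma_{+}-\gamma_{-}$ via the factor $-\sgn(p\Im z)$, whereas the paper first solves for $(\gamma R^{0}_z f)(p)$ and $(\gamma R^{0}_z f)(-p)$ separately, using the oddness of $\gamma_{\pm}R^{\infty}_z f$, and then decomposes $\theta(qp\Im z)$ and $\sgn(q)\theta(qp\Im z)$ in terms of $\theta(\pm q)\theta(\pm p\Im z)$.
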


\begin{proof}
From Eqs. \eqref{trp} and \eqref{trm} we readily infer that
\begin{align*}
\theta(- p \Im z)\, (\gamma_{+} R^{\infty}_z f)(p) 
& = \theta(- p \Im z)\, (\gamma R^{0}_z f)(p) \,, \\
\theta(p \Im z)\, (\gamma_{-} R^{\infty}_z f)(p) 
& = \theta(p \Im z)\, (\gamma R^{0}_z f)(p)
\,.
\end{align*}
The above relations imply, in turn,
\begin{equation*}
(\gamma R^{0}_z f)(p)
 = \theta(- p \Im z)\, (\gamma_{+} R^{\infty}_z f)(p)
+ \theta(p \Im z)\, (\gamma_{-} R^{\infty}_z f)(p) \,,\label{gamma1}
\end{equation*}
and, since $\gamma_{\pm} R^{\infty}_z f$ are  odd functions, 
\begin{equation*}
(\gamma R^{0}_z f)(-p) 
 = -\,\theta(p \Im z)\, \big(\gamma_{+} R^{\infty}_z f\big)(p)
-\, \theta(-p \Im z)\, \big(\gamma_{-} R^{\infty}_z f\big)(p)\,. \label{gamma2}
\end{equation*}
Substituting the latter identities into Eq. \eqref{ResDir} and noting that
$$ \theta(q p \Im z) = \big(\theta(q) + \theta(-q)\big)\,\theta(q p \Im z) = \theta(q)\theta(p \Im z) + \theta(-q)\theta(- p \Im z)\,, $$
and
$$ \sgn(q)\theta(q p \Im z)  = \theta(q)\theta(p \Im z) - \theta(-q)\theta(- p \Im z)\,, $$
we obtain
\begin{align*}
& (R^{\infty}_z f)(q,p) \\
& = (R^{0}_z f)(q,p) - e^{i m z q/p} \Big[
\theta(q)\,\theta(p \Im z)\,(\gamma_{-} R^{\infty}_z f)(p)
- \theta(q)\,\theta(p \Im z)\,(\gamma_{+} R^{\infty}_z f)(p)
\\
& \hspace{4cm}+ 
\theta(-q)\,\theta(-p \Im z)\,(\gamma_{+} R^{\infty}_z f)(p)
- \theta(-q)\,\theta(- p \Im z)\,(\gamma_{-} R^{\infty}_z f)(p)
\Big] \\
& = (R^{0}_z f)(q,p) +\,\theta(q p \Im z)\,\sgn(q)\,e^{i m z q/p} \, \Big[
(\gamma_{+} R^{\infty}_z f)(p) - (\gamma_{-} R^{\infty}_z f)(p)\Big]\,,
\end{align*}
which suffices to infer the thesis.
\end{proof}

Similarly, for the unitary operator describing the dynamics we have

\begin{lemma}\label{lem:UDir} For any $t \in \RE$ and for any $f \in L^2(\RE^2)$, there holds
\[
\big(e^{i t L_{0}} f\big)(q,p) = \big(e^{i t L_{\infty}} f\big)(q,p) - \theta(-t q p)\,\theta\!\left({|p t| \over m} - |q|\right)\!\Big[\big(e^{i t L_{\infty}} f\big)(q,p) + \big(e^{i t L_{\infty}} f\big)(-q,-p) \Big]\,. 
\]
\end{lemma}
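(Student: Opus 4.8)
The plan is to invert the explicit relation \eqref{dinLinf} by a purely algebraic manipulation. Since \eqref{dinLinf} already expresses $e^{itL_{\infty}}f$ in terms of $e^{itL_{0}}f$, the task is simply to solve for $e^{itL_{0}}f$; this is the dynamical counterpart of Lemma \ref{lem:ResDir}, but lighter, because no lateral traces are needed.

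First I would abbreviate the characteristic factor appearing in \eqref{dinLinf} as
\[
\chi(q,p) := \theta(-tqp)\,\theta\!\left({|pt| \over m} - |q|\right)\,,
\]
and write $P$ for the reflection $(Pg)(q,p) := g(-q,-p)$. The whole argument rests on two elementary properties of $\chi$: (i) it is invariant under $(q,p)\mapsto(-q,-p)$, since $-t(-q)(-p)=-tqp$ and $|(-p)t|/m-|-q|=|pt|/m-|q|$, so that multiplication by $\chi$ commutes with $P$; and (ii) it takes values in $\{0,1\}$ almost everywhere, whence $\chi^{2}=\chi$ a.e.

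Setting $A:=e^{itL_{0}}f$ and $B:=e^{itL_{\infty}}f$, relation \eqref{dinLinf} reads $B=(1-\chi)A-\chi\,PA$. Applying $P$, using $P^{2}=1$ together with property (i), gives $PB=(1-\chi)PA-\chi A$, hence $B+PB=(1-2\chi)(A+PA)$. I would then substitute these into the right-hand side of the claimed identity and collect terms:
\[
B-\chi\,(B+PB) = \big[(1-\chi)-\chi(1-2\chi)\big]A + \big[-\chi-\chi(1-2\chi)\big]PA\,.
\]
By property (ii) the coefficient of $A$ equals $1-2\chi+2\chi^{2}=1$ and the coefficient of $PA$ equals $-2\chi+2\chi^{2}=0$, so the right-hand side collapses to $A=e^{itL_{0}}f$, which is the thesis.

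There is no serious obstacle here; the only points requiring care are the commutation of $P$ with the multiplication operator $\chi$ (exactly property (i)) and the idempotency $\chi^{2}=\chi$, valid away from the measure-zero collision boundary $|q|=|pt|/m$. One could alternatively verify the identity by inserting the explicit formula $\big(e^{itL_{0}}f\big)(q,p)=f(q+pt/m,p)$ and splitting into cases according to the signs of $q$, $p$ and $q+pt/m$, or by mimicking the trace-based derivation of Lemma \ref{lem:ResDir}; both routes are more laborious than the inversion above, and I would retain them only as consistency checks.
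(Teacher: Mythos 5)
Your proposal is correct and follows essentially the same route as the paper: both proofs apply the reflection $(q,p)\mapsto(-q,-p)$ to \eqref{dinLinf}, use the invariance of the factor $\theta(-tqp)\,\theta(|pt|/m-|q|)$ under that reflection to obtain $B+PB=(1-2\chi)(A+PA)$, and then exploit the idempotency $\chi^2=\chi$ (the paper's equivalent step is $\chi(1-2\chi)=-\chi$) to invert the relation. The only difference is presentational — you expand the right-hand side and collect coefficients, while the paper substitutes $\chi(B+PB)=-\chi(A+PA)$ back into \eqref{dinLinf} — so there is nothing to add.
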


\begin{proof} First note that the identity in Eq. \eqref{dinLinf} entails
$$
\big(e^{i t L_{\infty}} f\big)(-q,-p) = \big(e^{i t L_{0}} f\big)(-q,-p) - \theta(-t q p)\,\theta\!\left({|p t| \over m} - |q|\right)\!\Big[\big(e^{i t L_{0}} f\big)(q,p) + \big(e^{i t L_{0}} f\big)(-q,-p) \Big]\,. 
$$
From the above relation and the previously cited equation, we derive
\begin{align*}
& \big(e^{i t L_{\infty}} f\big)(q,p) + \big(e^{i t L_{\infty}} f\big)(-q,-p) \\
& = \left(1 - 2\,\theta(-t q p)\,\theta\!\left({|p t| \over m} - |q|\right)\!\right)\!\Big[\big(e^{i t L_{0}} f\big)(q,p) + \big(e^{i t L_{0}} f\big)(-q,-p) \Big]\,,
\end{align*}
which in turn implies 
\begin{align*}
& \theta(-t q p)\,\theta\!\left({|p t| \over m} - |q|\right)\! \Big[\big(e^{i t L_{\infty}} f\big)(q,p) + \big(e^{i t L_{\infty}} f\big)(-q,-p) \Big] \\
& = - \,\theta(-t q p)\,\theta\!\left({|p t| \over m} - |q|\right)\!\Big[\big(e^{i t L_{0}} f\big)(q,p) + \big(e^{i t L_{0}} f\big)(-q,-p) \Big]\,.
\end{align*}
The thesis follows upon substitution of the above identity into Eq.\,\eqref{dinLinf}.
\end{proof}

Let us now consider the natural decomposition\footnote{All the equalities in Eq.\,\eqref{L2R2} must be understood as isomorphisms of Hilbert spaces (see, e.g., \cite[p. 85]{HS2012}).}
\begin{align}
& L^2(\RE^2) \equiv L^2\big((\RE_{-} \cup \RE_{+})\!\times\! \RE,dqdp\big) = \big(L^2(\RE_{-},dq) \oplus L^2(\RE_{+},dq) \big) \otimes L^2(\RE,dp) \label{L2R2}\\
& = \big(L^2(\RE_{-},dq) \otimes L^2(\RE,dp)\big) \oplus \big(L^2(\RE_{+},dq) \otimes L^2(\RE,dp)\big)
= L^2(\RE_{-}\! \times \RE) \oplus L^2(\RE_{+}\! \times \RE)\,, \nonumber
\end{align}
and notice that both the subspaces $L^2(\RE_{-}\! \times \RE) \equiv L^2(\RE_{-}\! \times \RE) \oplus \{0\}$ and $L^2(\RE_{+}\! \times \RE) \equiv \{0\} \oplus L^2(\RE_{+}\! \times \RE)$ are left invariant by the resolvent $R^{\infty}_z$, this is evident from Eq. \eqref{ResDir2}. Taking this into account, we introduce the bounded operator 
\begin{equation}\label{Rplusz}
\RD_{z} : L^{2}(\RE_{+}\!\times\RE)\to L^{2}(\RE_{+}\!\times\RE)\,, \qquad
(\RD_{z} f)(q,p) := \big(R^{\infty}_{z}(0\oplus f)\big)(q,p)\,.
\end{equation}
By direct computations, from Eq.\,\eqref{ResDir2} (here employed with $q > 0$) we get
\begin{align*}
& \big(\RD_{z} f\big)(q,p) \\
& =
\sgn(\Im z)\,\frac{i\,m}{|p|} \int_{0}^{\infty}\!\!\!dq' \Big[\theta\big((q-q')p \Im z\big)\,{e^{i m z (q-q')/p}}\, f(q',p) - \theta(p \Im z)\,e^{i m z (q+q')/p}\, f(q',-p)\Big] \\
& = \sgn(\Im z)\,\frac{i\,m}{|p|}\,e^{imzq/p}\Bigg[ \theta(p\Im z)\bigg(\int_{0}^{q}\!dq' \,{e^{-i m zq'/p}}\, f(q',p)  -\int_{0}^{\infty}\!\!\! dq'\, e^{i m z q'/p}\, f(q',-p)\bigg) \\
&\hspace{9cm} + \theta(-p\Im z)\int_{q}^{\infty}\!\!\!dq'\, {e^{-i m zq'/p}}\, f(q',p)\Bigg]\,.
\end{align*}

We denote by $\LD$ the self-adjoint operator in $L^{2}(\RE_{+}\!\times\RE)$ having $\RD_{z}$ as resolvent, so that 
\begin{equation}
\dom(\LD)=\big\{f \in L^{2}(\RE_{+}\!\times\RE)\,\big|\,(0\oplus f)\in \dom(L_{\infty})\big\}\,,\qquad 
\LD f :=L_{\infty}\,(0\oplus f)\,. \label{Lpdef}
\end{equation}
For all $(q,p) \!\in\! \RE_{+}\!\times\! \RE$, $t \!\in\! \RE$ and $f \!\in\! L^2(\RE_{+}\!\times\! \RE)$, from the above definition and from Eq.\,\eqref{dinLinf} we get\footnote{Note that for $q > 0$ we have $ \theta(-t q p)\,\theta\left({|pt| \over m} - |q|\right) = \theta(-t p)\,\theta\left(-{pt \over m} - q\right) = \theta\left(-q -{pt \over m}\right) = 1 - \theta\left(q +{pt \over m}\right)$.}
\begin{align}
\big(e^{i t \LD} f\big)(q,p) 
& = \big(e^{i t L_{\infty}} (0 \oplus f)\big)(q,p) \nonumber \\
& = \theta\Big(q +{pt \over m}\Big) \big(e^{i t L_{0}} (0\oplus f)\big)(q,p) - \theta\Big(\!-q -{pt \over m}\Big) \big(e^{i t L_{0}} (0\oplus f)\big)(-q,-p)\,, \label{dynClaDir}
\end{align}
which describes the motion of a classical particle on the half-line $\RE_{+}$ with elastic collision at $q = 0$.
Let us also mention that, in view of the basic identity
\begin{gather}
\big(e^{it L_{0}} (0\oplus f)\big)(q,p) = \theta\!\left(\!q + {p t \over m}\!\right)\! \big(e^{it L_{0}} (0 \oplus f)\big)(q,p)\,, \label{idfree1}
\end{gather}
the above relation \eqref{dynClaDir} is equivalent to
\begin{align}
\big(e^{i t \LD} f\big)(q,p) 
& = \big(e^{i t L_{0}} (0\oplus f)\big)(q,p) - \big(e^{i t L_{0}} (0\oplus f)\big)(-q,-p)\,. \label{dynClaDir2}
\end{align}
Another equivalent (and more explicit) formula for the action of the unitary group $e^{i t \LD}$ is the one given in Eq. \eqref{group}. 

Finally, from Lemmata \ref{lem:ResDir} and \ref{lem:UDir} (here employed with $q > 0$) we derive, respectively,
\begin{gather}
\big(R^{0}_z (0\oplus f)\big)(q,p) 
= (\RD_{z} f)(q,p) - \theta(p \Im z)\,e^{i m z q/p} \,(\gamma_{+}\RD_{z} f)(p) \,, \nonumber 
\\
\big(e^{i t L_{0}} (0\oplus f)\big)(q,p) 
= \theta\!\left(q + {p t \over m}\right)\! \big(e^{i t \LD} f\big)(q,p) - \theta\!\left(- q -\,{p t \over m}\right)\! \big(e^{i t \LD} f\big)(-q,-p)\,. \label{eitL0}
\end{gather}

\subsection{Classical dynamics on the graph with total reflection at the vertex.}
Let us now consider the ``classical'' Hilbert space $\oplus_{\ell = 1}^n L^2(\RE_{+}\! \times \RE)$ and indicate any of its elements with the vector notation
$$
\fv \in \oplus_{\ell = 1}^{n} L^{2}(\RE_{+}\!\times\RE)\,, \qquad \fv(q,p) \equiv \left(\!\!\begin{array}{c} f_1(q,p) \\ \vdots \\ f_n(q,p) \end{array}\!\!\right).
$$

Let $\LD$ be defined according to Eq.\,\eqref{Lpdef}, and consider the classical dynamics on the star-graph $\GG$ with total elastic collision in the vertex; this is described by the self-adjoint operator 
$$ \LD^{\oplus}:= \oplus_{\ell = 1}^{n} \LD \,. $$
The associated resolvent and time evolution operators are respectively given by
\begin{gather*}
R^{D\oplus}_{z} := (\LD^{\oplus} - z)^{-1} = \oplus_{\ell = 1}^{n} \RD_{z} \,, \label{Rpdef} \\
e^{i t \LD^{\oplus}} = \oplus_{\ell = 1}^{n} e^{i t \LD} \,. \nonumber
\end{gather*}

Explicitly, for $\ell=1,\dots,n$ and $t\in\RE$, from Eq. \eqref{group} we derive
\[
\big(e^{i t \LD^{\oplus}} \fv\big)_\ell(q,p) =
\left\{ 
\begin{aligned} 
&f_\ell\big(q+\frac{pt}m,p\big) \qquad &\text{if }\,q+\frac{pt}m >0\,, \\ 
& - f_\ell\big(-q-\frac{pt}m,-p\big) \qquad &\text{if }\,q+\frac{pt}m <0 \,.
\end{aligned}
\right. 
\]

\subsection{Singular perturbations of the classical dynamics on the graph. \label{ss:LK}}
Let us consider the restriction to $\dom(\LD)$ of the trace map $\gamma_{+}$ introduced in Eq.\,\eqref{trlatdef}; this defines a bounded operator
$$ \gamma_{+} : \dom(\LD) \to L^2_{odd}(\RE,|p|dp)\,. $$
We use the above map to define a trace operator on the graph:
$$ \gamma^{\oplus}_{+} := \oplus_{\ell = 1}^{n} \gamma_{+} : \oplus_{\ell = 1}^n \dom(\LD) \to \oplus_{\ell=1}^n L^2_{odd}(\RE,|p|dp)\,. $$

In what follows we use the technique developed by one of us in  \cite{P01, P08} to characterize all the self-adjoint extensions of the symmetric operator $\LD^{\oplus}\!\!\upharpoonright\!\ker(\gamma^{\oplus}_+)$ (see Theorem \ref{t:rescl} below). Among those we select the one that turns out to be useful to study the semiclassical limit of  $\exp(-i H_K t/ \hbar)$, see Remark \ref{r:LK}. 

To proceed, we introduce the operator 
$$ 
\Gc^{\oplus}_z := \gamma^{\oplus}_{+} R^{D\oplus}_z \equiv \oplus_{\ell = 1}^n \Gc^{+}_z : \oplus_{\ell}^n L^2(\RE_{+}\!\times\RE) \to \oplus_{\ell=1}^n L^2_{odd}(\RE,|p|dp)\,, \qquad \Gc^{+}_z := \gamma_{+} \RD_z\,,
$$
and its adjoint with complex conjugate parameter, 
$$ 
G^{\oplus}_z := \big(\Gc^{\oplus}_{\bar{z}}\big)^* \equiv \oplus_{\ell = 1}^n G^{+}_z :  \oplus_{\ell}^n L^2_{odd}(\RE,|p|dp) \to \oplus_{\ell=1}^n L^2(\RE_{+}\!\times\RE)\,, \qquad G^{+}_z := \big(\Gc^{+}_{\bar{z}}\big)^*\,.
$$
Note that by Eq. \eqref{res} one has 
\[
\big(\gamma R^{0}_z (0 \oplus f)\big)(p)  = \sgn(\Im z) \frac{i\,m}{|p|}\int_{0}^\infty \!  dq'\,  \theta(-p \Im z)\,{e^{ - i m z q'/p}}\, f(q',p)\,. 
\]
From the latter identity, together with Eqs. \eqref{Rplusz} and \eqref{trp}, we derive
\begin{align}
& (\Gc^{+}_z f)(p) = \big(\gamma_{+} R^{\infty}_z (0 \oplus f)\big)(p) \label{Gcexp}\\
& = \theta(-p \Im z)\, \big(\gamma R^{0}_z (0 \oplus f)\big)(p) - \theta(p \Im z)\,\big(\gamma R^{0}_z (0 \oplus f)\big)(-p)\,, \nonumber \\
& = \sgn(Im z)\,{i m \over |p|} \int_{0}^{\infty}\! dq' \left(\theta(-p \Im z)\, e^{-i m z q'/p} \, f(q',p) - \theta(p \Im z)\, e^{i m z q'/p} \, f(q',-p)\right). \nonumber
\end{align}
In view of the latter expression, for all $f \in L^2(\RE_{+}\!\times\RE)$ and any $\phi \in L^2_{odd}(\RE,|p|dp)$ we have
\begin{align*}
& \int_{\RE_{+}\!\times\RE}\!\!\!\!dqdp\, \overline{(G^{+}_z \phi)(q,p)}\,f(q,p)
= \int_{\RE}\!\!dp\, \overline{\phi(p)}\, (\Gc^{+}_{\bar{z}}f)(p) \\
& = \int_{\RE}\!\!dp\, \overline{\phi(p)} \left(\sgn(Im \bar{z})\,{i m \over |p|} \int_{0}^{\infty}\! dq \left(\theta(-p \Im \bar{z})\, e^{-i m \bar{z} q/p} \, f(q,p) - \theta(p \Im \bar{z})\, e^{i m \bar{z} q/p} \, f(q,-p)\right)\right) \\
& = \int_{\RE_{+}\!\times \RE}\!\!\!\!dqdp\, \overline{\big(\phi(p) - \phi(-p)\big)} \;\theta(p Im z)\,\sgn(Im z)\left(-\,{i m \over |p|}\right) e^{-i m \bar{z} q/p}\, f(q,p) \\
& = \int_{\RE_{+}\!\times \RE}\!\!\!\!dqdp\, \overline{\theta(p Im z)\,\sgn(Im z)\,{2i m \over |p|}\,e^{i m z q/p}\, \phi(p)} \; f(q,p)\,,
\end{align*}
which proves that
\begin{equation}
(G^{+}_z \phi)(q,p) = 2\,g_z(q,p)\, \phi(p)\,, \qquad
g_z(q,p) := \theta(p Im z)\,\sgn(Im z)\,{i m \over |p|}\,e^{i m z q/p}\,. \label{gzdef}
\end{equation}
On account of the identities 
$$
\big(\gamma_{+}(g_{z}-g_{w})\big)(p)=\frac{i\,m}{2\,|p|}\,\big(\sgn(\Im z)-\sgn(\Im w)\big)\,,
$$
which can be easily checked by a direct calculation (see also \cite[p. 7]{CFP19}), 
and
$$
\gamma_{+}(G^{+}_{z}-G^{+}_{w})=(z-w)\,(G^{+}_{\bar w})^{*}G^{+}_{z}\,,
$$
which is consequence of the first resolvent identity (see \cite[Lem. 2.1]{P01}, paying attention to the different sign convention in the definition of the resolvent),
we have that the linear map
\begin{gather}
\dom(M^{+}_{z}):=L_{odd}^{2}(\RE,|p|^{-1}dp) \cap L_{odd}^{2}(\RE,|p|dp)\,, \nonumber \\
M^{+}_{z} : \dom(M^{+}_{z})\subset L_{odd}^{2}(\RE,|p|^{-1}dp) \to L_{odd}^{2}(\RE,|p|dp)\,, \nonumber \\
(M^{+}_{z}\phi)(p) := 2\, m^{\infty}_{z}(p)\,\phi(p)\,, \qquad 
m^{\infty}_{z}(p) := -\,\sgn(\Im z)\,\frac{i\,m}{2|p|}\,, \label{mzdef}
\end{gather}
satisfies the identities 
$$
(M_{z}^{+})^{*}=M_{\bar z}^{+}\,,\qquad  M_{z}^{+}-M_{w}^{+}=(w-z)(G^{+}_{\bar w})^{*}G^{+}_{z}\,.
$$
Hence, setting 
$$
M^{\oplus}_{z}:\oplus_{\ell=1}^{n}\dom(M^{+}_{z})\subset \oplus_{\ell=1}^{n}L_{odd}^{2}(\RE,|p|^{-1}dp) \to\oplus_{\ell=1}^{n} L_{odd}^{2}(\RE,|p|dp)\,,\quad M^{\oplus}_{z}:=\oplus_{\ell=1}^{n}M_{z}^{+}\,,
$$
one gets the identities
$$
(M_{z}^{\oplus})^{*}=M_{\bar z}^{\oplus}\,,\qquad  M_{z}^{\oplus}-M_{w}^{\oplus}=(w-z)(G^{\oplus}_{\bar w})^{*}G^{\oplus}_{z}\,.
$$
To proceed, let us consider any orthogonal projector $\PP:\CO^{n}\to\CO^{n}$ and any self-adjoint operator  $\BB:\CO^{n}\to\CO^{n}$,  represented by the matrices with components $(\PP)_{\ell,\ell'}$ and $(\BB)_{\ell,\ell'}$ respectively. By a slight abuse of notation we use the same symbols to denote the corresponding operators on vector valued functions; e.g., for $\fv = \oplus_{\ell=1}^n f_\ell\in \oplus_{\ell=1}^{n} L^{2}(\RE_{+}\!\times\RE)$, one has $\PP \,\fv  \in \oplus_{\ell=1}^{n} L^{2}(\RE_{+}\!\times\RE)$ with components ($\ell = 1,\dots,n$)
\[
(\PP\, \fv)_\ell  = \sum_{\ell'=1}^{n} (\PP)_{\ell \ell'}f_{\ell'}\,, \]
or, for $\oplus_{\ell=1}^{n}\phi_{\ell} \in \oplus_{\ell=1}^{n}L_{odd}^{2}(\RE,\rho dp)$, one has $\PP(\oplus_{\ell=1}^{n}\phi_{\ell}):=\oplus_{\ell=1}^{n}\Big(\sum_{j=1}^{n} (\PP)_{\ell \ell'}\phi_{\ell'}\Big)$, and similarly for $\BB$. 
Then, by \cite[Thm. 2.1]{P01} here employed with $\tau:=\PP \gamma^{\oplus}_{+}$, we obtain the following

\begin{theorem}\label{t:rescl}
Let $z\in\CO\backslash \RE$. Assume that $\PP \, M_z^\oplus = M_z^\oplus \, \PP$ and  $\PP \, \BB = \BB \, \PP$.  Then, the linear bounded operator
\begin{equation}
R^{\PP,\BB}_{z}:= R^{D\oplus}_{z} \!+ G^{\oplus}_{z}\,\PP \,\big(\BB + M^{\oplus}_{z}\big)^{-1} \PP\, \Gc^{\oplus}_{z} \label{ResClass}
\end{equation} 
is the resolvent of a self-adjoint extension $L_{\PP,\BB}$ of the densely defined, closed symmetric operator $\LD^{\oplus} \!\upharpoonright \ker(\gamma^{\oplus}_{+})$. Such an extension acts on its domain
$$
\dom(L_{\PP, \BB}):= \big\{\fv \in \oplus_{\ell = 1}^{n} L^{2}(\RE_{+}\!\times\RE) \;\big|\; \fv=\fv_{z} + G^{\oplus}_{z}\,\PP\,\big(\BB + M^{\oplus}_{z}\big)^{-1} \PP\,\gamma^{\oplus}_{+} \fv_{z}\,,\; \fv_{z}\!\in\!\dom\big(\LD^{\oplus}\big)\big\}
$$
according to
\begin{equation}\label{Lac}
\big(L_{\PP,\BB}-z\big)\fv = \big(\LD^{\oplus}-z\big)\fv_{z}\,.
\end{equation}
\end{theorem}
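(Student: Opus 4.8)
The plan is to recognize the statement as a direct application of the abstract extension theory of \cite[Thm.\,2.1]{P01}, with the structural ingredients already assembled in the paragraphs preceding the statement. The dictionary is: the reference self-adjoint operator is $\LD^{\oplus}$ with resolvent $R^{D\oplus}_z$, the auxiliary Hilbert space is $\oplus_{\ell=1}^n L^2_{odd}(\RE,|p|dp)$, the boundary (trace) map is $\tau := \PP\,\gamma^{\oplus}_+$, and the abstract self-adjoint parameter of \cite{P01} is supplied by $\BB$ acting on $\ran(\PP)$. Under these identifications the candidate resolvent in \eqref{ResClass} is exactly the Kre\u{\i}n-type formula produced by the abstract theorem, and the domain description \eqref{Lac} is its standard readout.

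First I would verify that all the algebraic hypotheses required by \cite[Thm.\,2.1]{P01} are in place; these are precisely the identities established just above the statement. By definition one has $\Gc^{\oplus}_z = \gamma^{\oplus}_+ R^{D\oplus}_z$ and $G^{\oplus}_z = (\Gc^{\oplus}_{\bar z})^*$, while the relevant Weyl-type operator satisfies the symmetry relation $(M^{\oplus}_z)^* = M^{\oplus}_{\bar z}$ and the resolvent-type identity $M^{\oplus}_z - M^{\oplus}_w = (w-z)\,(G^{\oplus}_{\bar w})^* G^{\oplus}_z$. Together with the boundedness of $\gamma^{\oplus}_+$ on $\dom(\LD^{\oplus})$ (inherited componentwise from the scalar bound for $\gamma_+$) and with the fact that $\LD^{\oplus}\!\upharpoonright\!\ker(\gamma^{\oplus}_+)$ is densely defined, closed and symmetric, these are exactly the standing assumptions under which the abstract theorem applies.

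The only genuinely new point, and the step for which I expect the commutation hypotheses $\PP\, M^{\oplus}_z = M^{\oplus}_z\, \PP$ and $\PP\,\BB = \BB\,\PP$ to be indispensable, is the well-posedness of the sandwiched inverse $\PP\,(\BB + M^{\oplus}_z)^{-1}\,\PP$. The subtlety is that $\BB$ and $\PP$ act on the finite index $\ell \in \{1,\dots,n\}$ whereas $M^{\oplus}_z$ acts as multiplication by the purely imaginary, $\ell$-independent symbol $2\,m^{\infty}_z(p) = -\,\sgn(\Im z)\,i\,m/|p|$ in the variable $p$, cf.\ \eqref{mzdef}. The commutation assumptions make $\BB + M^{\oplus}_z$ leave $\ran(\PP)$ invariant, reducing matters to a fibered, scalar problem. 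Diagonalizing $\BB$ on $\ran(\PP)$ and working fibrewise in $p$, one must invert $b + 2\,m^{\infty}_z(p)$ for each real eigenvalue $b$ of $\BB$; this is nonzero since $b$ is real and $2\,m^{\infty}_z(p)$ is purely imaginary and nonvanishing for $p \neq 0$, $\Im z \neq 0$. The essential check is then the weighted boundedness of the resulting multiplier $h(p) = (b + 2\,m^{\infty}_z(p))^{-1}$ as a map $L^2_{odd}(\RE,|p|dp) \to L^2_{odd}(\RE,|p|^{-1}dp)$, which amounts to the pointwise estimate $|h(p)| \leq C\,|p|$; the definite sign of $\Im(M^{\oplus}_z)$ guarantees $|b + 2\,m^{\infty}_z(p)| \geq m/|p|$ near $p=0$ and $\geq |b|$ at infinity, from which this bound follows.

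Once the inverse is shown to be well-defined and bounded, the conclusion is immediate: one invokes \cite[Thm.\,2.1]{P01} with $\tau = \PP\,\gamma^{\oplus}_+$ to assert that $R^{\PP,\BB}_z$ is the resolvent of a self-adjoint extension $L_{\PP,\BB}$ of $\LD^{\oplus}\!\upharpoonright\!\ker(\gamma^{\oplus}_+)$, and reads off the domain and the action \eqref{Lac} directly from the abstract statement. The main obstacle is therefore not the abstract machinery but this verification that the projected, $p$-fibered operator $\BB + M^{\oplus}_z$ is boundedly invertible between the two weighted spaces on $\ran(\PP)$; everything else is a transcription of the already-proven identities into the hypotheses of \cite{P01}.
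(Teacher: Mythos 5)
Your proposal is correct and follows essentially the same route as the paper: the paper gives no separate proof of this theorem, but obtains it directly by invoking \cite[Thm.\,2.1]{P01} with $\tau:=\PP\,\gamma^{\oplus}_{+}$, after having established in the preceding paragraphs exactly the identities you list ($G^{\oplus}_z=(\Gc^{\oplus}_{\bar z})^*$, $(M^{\oplus}_z)^*=M^{\oplus}_{\bar z}$, $M^{\oplus}_z-M^{\oplus}_w=(w-z)(G^{\oplus}_{\bar w})^*G^{\oplus}_z$, and the boundedness of the trace map). Your additional verification that $\PP(\BB+M^{\oplus}_z)^{-1}\PP$ is a well-defined bounded map from $\oplus_{\ell}L^2_{odd}(\RE,|p|dp)$ to $\oplus_{\ell}L^2_{odd}(\RE,|p|^{-1}dp)$, via the fibrewise estimate $|b+2m^{\infty}_z(p)|\geq m/|p|$, is a point the paper leaves implicit and is a welcome supplement rather than a deviation.
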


\begin{remark}\label{r:LK}
We use the notation $L_{K}$ (where $K$ stands for Kirchhoff) to denote the self-adjoint extension corresponding to the choices
\begin{equation}
\BB = 0\,,\qquad \PP 
=\frac1{n}\left(\begin{matrix}
1&1&\cdots&1\\
1&1&\cdots&1\\
\vdots&\vdots&\ddots&\vdots\\1&1&\cdots&1\end{matrix}
\right)\,. \label{KirBPi}
\end{equation}
We denote the associated resolvent operator with $R^{K}_z := (L_{K}-z)^{-1}$.
\end{remark}

In the sequel, we proceed to determine the unitary evolution associated to the above choices by means of arguments analogous to those described in the proof of \cite[Prop. 2.4]{CFP19}.

\begin{proposition} For all $\fv \in \oplus_{\ell=1}^{n} L^{2}(\RE_{+}\!\times\RE)$ and for all $t \in \RE$ there holds
\begin{equation}\label{semiclassdyn}
(e^{it L_{K}} \fv)(q,p) = \left(\!\!\begin{array}{c} \big(e^{it L_{0}} (0\!\oplus\! f_{1})\big)(q,p) \\ \vdots \\ \big(e^{it L_{0}} (0\!\oplus\! f_{n})\big)(q,p)\end{array}\!\!\right) 
- {\MC} \!\left(\!\!\begin{array}{c} \big(e^{it L_{0}} (0\!\oplus\! f_{1})\big)(-q,-p) \\ \vdots \\ \big(e^{it L_{0}} (0\!\oplus\! f_{n})\big)(-q,-p)\end{array}\!\!\right)
\end{equation}
where ${\MC} := \unit - 2\, {\PP}$ was already defined in relation with the quantum scattering operator, see Eq.\,\eqref{barefoot}. 
\end{proposition}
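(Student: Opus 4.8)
The plan is to introduce, for each $t\in\RE$, the operator $U_t$ defined by the right-hand side of \eqref{semiclassdyn}, and to prove $U_t = e^{itL_K}$. Since $L_K$ is self-adjoint by Theorem \ref{t:rescl} (with the data \eqref{KirBPi}), the group $e^{itL_K}$ admits, for $\Im z<0$, the Laplace representation $R^K_z = -i\int_0^\infty e^{-izt}e^{itL_K}\,dt$. Hence it suffices to check that $t\mapsto U_t$ is strongly continuous and uniformly bounded on $[0,\infty)$ and that $-i\int_0^\infty e^{-izt}U_t\,dt = R^K_z$ for every $z$ with $\Im z<0$: by uniqueness of the Laplace transform of a bounded, strongly continuous operator family this forces $U_t = e^{itL_K}$ for $t\ge 0$, and the case $t\le 0$ follows by the same argument on $(-\infty,0]$ with $\Im z>0$ (or by taking adjoints). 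Using \eqref{idfree1} and \eqref{eitL0}, the statement \eqref{semiclassdyn} is equivalent to the piecewise formula \eqref{expLKgroup}, on which boundedness and strong continuity of $U_t$ are transparent: unitarity follows from the measure-preserving change of variables $(q,p)\mapsto(-q-pt/m,-p)$ together with $\MC^*=\MC$ and $\MC^2=\unit$, and strong continuity from dominated convergence after approximating $\fv$ by smooth compactly supported functions.

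The computation of the Laplace transform rests on a convenient splitting of $U_t$. Writing the free evolution componentwise through \eqref{dynClaDir2} one has $e^{it\LD^{\oplus}}\fv = \vec A - \unit\,\vec A^{\,\vee}$, where $(\vec A)_\ell(q,p) := \big(e^{itL_0}(0\oplus f_\ell)\big)(q,p)$ and $\vec A^{\,\vee}(q,p) := \vec A(-q,-p)$. Comparing with \eqref{semiclassdyn} and using $\unit-\MC = 2\,\PP$ gives the decomposition
\[
U_t = e^{it\LD^{\oplus}} + 2\,\PP\,T_t\,, \qquad (T_t\fv)_\ell(q,p) := \big(e^{itL_0}(0\oplus f_\ell)\big)(-q,-p)\,.
\]
Since $R^{D\oplus}_z = \oplus_{\ell}\RD_z$ is by definition the Laplace transform $-i\int_0^\infty e^{-izt}e^{it\LD^{\oplus}}\,dt$, it only remains to identify the Laplace transform of the reflection term $2\,\PP\,T_t$ with the singular term $G^{\oplus}_z\,\PP\,(M^{\oplus}_z)^{-1}\PP\,\Gc^{\oplus}_z$ appearing in \eqref{ResClass} (with $\BB=0$).

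For $\Im z<0$ and $\fv$ fixed, I would compute $-i\int_0^\infty e^{-izt}\,2\,(T_t\fv)_\ell(q,p)\,dt$ componentwise. The factor $\big(e^{itL_0}(0\oplus f_\ell)\big)(-q,-p) = \theta(-q-pt/m)\,f_\ell(-q-pt/m,-p)$ is nonzero (for $q>0$, $t>0$) only when $p<0$, and the substitution $s = -q-pt/m$ (so that $t = m(q+s)/|p|$ and $dt = (m/|p|)\,ds$) turns the integral into $-2i\,\theta(-p)\,(m/|p|)\,e^{imzq/p}\int_0^\infty e^{imzq'/p}\, f_\ell(q',-p)\,dq'$. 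Comparing with the explicit kernels \eqref{gzdef} for $G^{+}_z$, \eqref{mzdef} for $M^{+}_z$, and \eqref{Gcexp} for $\Gc^{+}_z$ (all evaluated at $\sgn(\Im z)=-1$, so that $\theta(-p\,\Im z)=\theta(p)$ and $\theta(p\,\Im z)=\theta(-p)$), this is exactly $\big(G^{+}_z(M^{+}_z)^{-1}\Gc^{+}_z f_\ell\big)(q,p)$; the numerical constant $2$ is precisely what reconciles the factors appearing in \eqref{gzdef} and \eqref{mzdef}. Thus $-i\int_0^\infty e^{-izt}\,2\,T_t\,dt$ equals the edge-diagonal operator $G^{\oplus}_z(M^{\oplus}_z)^{-1}\Gc^{\oplus}_z$, and pulling out the constant matrix $\PP$, which commutes with the edge-diagonal operators $G^{\oplus}_z$, $M^{\oplus}_z$, $\Gc^{\oplus}_z$, and using $\PP^2=\PP$, one rewrites $\PP\,G^{\oplus}_z(M^{\oplus}_z)^{-1}\Gc^{\oplus}_z = G^{\oplus}_z\,\PP\,(M^{\oplus}_z)^{-1}\PP\,\Gc^{\oplus}_z$, which is the desired singular term. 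The main obstacle I anticipate is the careful bookkeeping of the sign and Heaviside factors ($\sgn(\Im z)$, $\theta(\pm p\,\Im z)$) across the three explicit kernels and the verification that every numerical constant matches; a secondary point is the rigorous justification of the Fubini interchange in the Laplace integral and of the concluding uniqueness step.
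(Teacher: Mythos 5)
Your proposal is correct, and it runs the paper's argument in the opposite direction. The paper starts from the explicit resolvent $R^{K}_{z}$ of Theorem \ref{t:rescl} (with the Kirchhoff data \eqref{KirBPi}) and recovers the group by the \emph{inverse} Laplace transform, i.e.\ the contour integral $\frac{1}{2\pi i}\lim_{r\to\infty}\int_{-r+ic}^{r+ic}dz\,e^{-izt}R^{K}_{z}\fv$ for $\fv\in\dom(L_K)$, which it evaluates via a Fourier-inversion computation; the free part reproduces $e^{-it\LD}$ and the singular part produces exactly the reflection term with coefficient $2/n$. You instead take the candidate family $U_t$ defined by \eqref{semiclassdyn} (equivalently \eqref{expLKgroup}), check it is uniformly bounded and strongly continuous, compute its \emph{forward} Laplace transform by the elementary substitution $s=-q-pt/m$, match the result against the two summands of \eqref{ResClass}, and conclude by injectivity of the Laplace transform of bounded strongly continuous families. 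I verified your key identity: for $\Im z<0$ the transform of $2(T_t\fv)_\ell$ gives $-2i\,\theta(-p)\,\frac{m}{|p|}\,e^{imzq/p}\int_0^\infty e^{imzq'/p}f_\ell(q',-p)\,dq'$, which coincides with $G^{+}_z(M^{+}_z)^{-1}\Gc^{+}_z f_\ell$ once the factors $2g_z$, $(2m^{\infty}_z)^{-1}$ and the kernel of $\Gc^{+}_z$ are multiplied out, and the commutation of the constant matrix $\PP$ with the edge-diagonal operators together with $\PP^2=\PP$ reassembles the singular term of \eqref{ResClass}. What your route buys is the avoidance of the inverse Laplace transform machinery (the citation of Engel--Nagel and the Fourier-inversion footnote in the paper), at the modest cost of having to establish a priori the boundedness and strong continuity of $U_t$ and to invoke the uniqueness step; what the paper's route buys is that it \emph{derives} the formula rather than merely verifying it. Both proofs ultimately rest on the same explicit resolvent and the same change of variables, so the computational core is shared.
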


\begin{proof}
Throughout the whole proof we work component-wisely, denoting with $\ell \in \{1,...,n\}$ a fixed index.
Let us first remark that the resolvent \eqref{ResClass}, with $\BB,\PP$ as in Eq. \eqref{KirBPi}, acts on any element $\fv \in \oplus_{\ell=1}^{n} L^{2}(\RE_{+}\!\times\RE)$ according to
\begin{align*}
\big(R^{K}_{z} \fv\big)_{\ell}(q,p) 
& = (\RD_{z}f_\ell)(q,p) + \big(2\,g_z(q,p)\big) \sum_{j' = 1}^n {\hat \pi_{\ell j'} \over 2 \,m^{\infty}_{z}(p)} \sum_{j = 1}^n \hat \pi_{j' j}\, (\gamma_{+}\RD_{z} f_{j})(p) \\
& = (\RD_{z}f_\ell)(q,p) + {g_z(q,p) \over m^{\infty}_{z}(p)} \frac1n \sum_{j = 1}^n  (\gamma_{+}\RD_{z} f_j)(p) \,.
\end{align*}
From the above relation we derive the following, recalling the explicit expressions for $g_z(q,p)$ and $m^{\infty}_{z}(p)$ given in Eqs. \eqref{gzdef} and  \eqref{mzdef}, as well as Eq. \eqref{Gcexp} for $\gamma_{+}\RD_{z} f = \Gc^{+}_{z} f$:
\begin{align*}
\big(R^{K}_{z} \fv\big)_{\ell}(q,p) = (\RD_{z}f_\ell)(q,p) + \theta(p Im z)\,\sgn(Im z)\,{2i m \over n\,|p|} \sum_{j = 1}^n \int_{0}^{\infty}\! dq'\, e^{i m z (q+q')/p} \, f_j(q',-p)\,. 
\end{align*}

We now proceed to compute the unitary operator $e^{-i t L_{K}}$ ($t \in \RE$) by inverse Laplace transform, using the above representation for the resolvent $R^K_{z}$. Let us first assume $t > 0$; then, for any $c > 0$ and $\fv\!\in\! \dom(L_{K})$, we get (see \cite[Ch. III, Cor. 5.15]{EN})
$$ 
\big(e^{-it L_{K}}\fv\big)_{\ell} = {1 \over 2\pi i} \lim_{r \to \infty}\int_{- r + i c}^{r + i c}\! dz\; e^{-i z t}\,\big(R^{K}_{z} \fv\big)_{\ell} = {e^{c t} \over 2\pi i} \lim_{r \to \infty}\int_{-r}^{r}\! dk\; e^{- i t k}\,\big(R^{K}_{k + i c}\, \fv\big)_{\ell}\;.
$$ 
On the one hand, recalling Eq.\,\eqref{dynClaDir2} we have
\begin{align*}
 {e^{c t} \over 2\pi i} \lim_{r \to \infty}\int_{-r}^{r}\! dk\; e^{- i t k}\,(\RD_{k + i c}f_\ell)(q,p) 
& =  (e^{-it \LD} f_{\ell})(q,p) \\
& =  \big(e^{-i t L_{0}} (0\oplus f_{\ell})\big)(q,p) - \big(e^{-i t L_{0}} (0\oplus f_{\ell})\big)(-q,-p)\,. 
\end{align*}
On the other hand, noting that $\theta\big(p\Im (k\!+\!ic)\big) = \theta(p c) = \theta(p)$ and $\sgn\!\big(\Im (k\!+\!ic)\big) = \sgn(c) = +1$ for $c > 0$, by computations similar to those reported in the proof of \cite[Prop. 2.4]{CFP19} we get\footnote{Especially, recall the following basic identity regarding the unitary Fourier transform $\mathfrak{F}$ and its inverse $\mathfrak{F}^{-1}$:
$$
\mathfrak{F}^{-1}\Big(\mathfrak{F}h\big(a(\ast + ic)\big)\Big)(q) = {e^{c q} \over |a|}\,h(q/a) \qquad \mbox{for $a \in \RE \backslash \{0\}$, $c>0$, $q \in \RE$}\,,
$$
which holds true whenever ${e^{c \cdot} \over |a|}\,h(\cdot /a) \in L^2(\RE)$. In addition, keep in mind the relation written in Eq. \eqref{idfree1}.}
\begin{align*}
& {e^{c t} \over 2\pi i} \lim_{r \to \infty}\int_{-r}^{r}\!\! dk\; e^{- i t k}\,
\theta\big(p\Im (k\!+\!ic)\big) \sgn\!\big(\Im (k\!+\!ic)\big)\,{2 i m \over n |p|} \sum_{j = 1}^n \int_{0}^{\infty}\! dq'\, e^{i m (k+ic) (q+q')/p}\,f_{j}(q',-p) \\
& = \theta(p)\,{2 m \over n |p|}\, {e^{c (t - m q/p)} \over 2\pi} \sum_{j = 1}^n \lim_{r\to \infty}\int_{-r}^{r}\!\! dk\; e^{i (m q/p- t) k} \int_{\RE}\! dq'\, e^{i m (k + ic) q'/p}\; \theta(q')\,f_{j}(q',-p) \\
& = \theta(p)\,{2 m \over n |p|}\, e^{c (t - m q/p)} \sum_{j = 1}^n 
\mathfrak{F}^{-1}\Big(\mathfrak{F}\big(\theta(\,\cdot\,)\,f_{j}(\,\cdot\,,-p)\big)\big(-m (\ast + ic)/p\big)\Big)(m q/p- t) \\
& = \theta(p)\, {2 \over n}\, \sum_{j = 1}^n \theta\!\left(\!-\, q \!+\! {p\,t \over m}\right) f_{j}\!\left(- q + {p\,t \over m},-p\right) \\
& = {2 \over n}\, \sum_{j = 1}^n \big(e^{- i t L_0}(0 \oplus f_j)\big)(-q,-p)\,.
\end{align*}
Summing up, the above relations imply
\begin{equation}
\big(e^{-it L_{K}}\fv\big)_{\ell} = \big(e^{-it L_{0}} (0 \oplus f_{\ell})\big)(q,p) - \big(e^{-it L_{0}} (0 \oplus f_{\ell})\big)(-q,-p) + {2 \over n} \sum_{j = 1}^n\! \big(e^{- i t L_0}(0 \oplus f_j)\big)(-q,-p)\,. \label{eitLn}
\end{equation}
 
For $t < 0$ one can perform similar computations, starting from the following identity where $c > 0$:
$$ 
\big(e^{-it L_{K}}\fv\big)_{\ell} = -\,{1 \over 2\pi i} \lim_{r \to \infty}\int_{- r -i c}^{r -i c}\! dz\; e^{-i z t}\,\big(R^{K}_{z} \fv\big)_{\ell} = -\,{e^{-c t} \over 2\pi i} \lim_{r \to \infty}\int_{-r}^{r}\! dk\; e^{-i t k}\,\big(R^{K}_{k - i c}\, \fv\big)_{\ell}\;.
$$ 
We omit the related details for brevity. In the end, one obtains exactly the same expression as in Eq.\,\eqref{eitLn}, which with the trivial replacement $t \to -t$ proves Eq. \eqref{semiclassdyn}.
\end{proof}

\begin{remark}By Eq. \eqref{idfree1} we infer that the action of the unitary group $e^{it L_{K}}$ is explicitly given, component-wisely,  by Eq. \eqref{expLKgroup}. 
\end{remark}

\begin{remark} Recalling the explicit form of $\PP$ (see Eq. \eqref{KirBPi}) we obtain
\begin{equation}\label{MCexp}
\MC = \frac{1}{n}\!\left(\begin{matrix}
n - 2	&	-2		&	\cdots	&	-2	\\
-2		&	n - 2	&	\cdots	&	-2	\\
\vdots	&	\vdots	&	\ddots	&	\vdots\\
-2		&	-2		&	\cdots	&	n - 2
\end{matrix}\right), \quad
1 - \PP =\frac{1}{n}\!\left(\begin{matrix}
n - 1	&	-1		&	\cdots	&	-1	\\
-	1	&	n - 1	&	\cdots	&	-1	\\
\vdots	&	\vdots	&	\ddots	&\vdots	\\
-1		&	-1		&	\cdots	&	n-1
\end{matrix}\right). 
\end{equation}
In particular, for a star-graph with three edges ($n = 3$) we have
$$ 
\MC = {1 \over 3}\!\left(\begin{matrix}
1	&	-2	&	-2	\\
-2	&	1	&	-2	\\
-2	&	-2	&	1
\end{matrix}\right),
$$
whence $\MC = -\,\mathbb{M}$ with respect to the notation used in \cite{ACFN11}.
\end{remark}

\subsection{The semiclassical wave operators and scattering operator.}
Let us consider the wave operators and the corresponding scattering operator
on $\oplus_{\ell=1}^{n} L^{2}(\RE_{+}\!\times\RE)$, respectively defined by Eqs. \eqref{Wcldef} and \eqref{Scldef}. The following proposition provides explicit expressions for these operators.

\begin{proposition}\label{prop:WScl} The limits in Eq. \eqref{Wcldef} exist point-wisely for any $\xi\equiv(q,p)\!\in\! \RE_{+}\!\times (\RE \backslash \{0\})$ and in $L^{2}(\RE_{+}\!\times\RE)$ for any $\fv \in \oplus_{\ell=1}^{n} L^{2}(\RE_{+}\!\times\RE)$; moreover, there holds
\begin{equation}\label{Wclexp}
\big(\OmC^{\pm}\, \fv\big)(q,p)
= \big[\unit - \,\theta(\mp p)\; 2\,\PP\big]\, \fv(q,p)
= \big[\theta(\pm p)\, \unit + \,\theta(\mp p)\,\MC\big]\, \fv(q,p)\,.
\end{equation}
Furthermore, the scattering operator is given by
\begin{equation}\label{Sclexp}
\SC = \unit - 2\,{\PP} = {\MC}\,.
\end{equation}
\end{proposition}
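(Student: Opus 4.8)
The plan is to compute the strong limits in \eqref{Wcldef} by composing the two \emph{explicit} propagator formulas \eqref{group} and \eqref{expLKgroup} and carefully tracking the nested Heaviside functions; once $\OmC^{\pm}$ are known in closed form, \eqref{Sclexp} follows by a one-line algebraic manipulation. The whole computation rests on a single algebraic observation: in the reflected branch of \eqref{expLKgroup} the coefficient $\frac2n - \delta_{\ell,\ell'}$ equals $-(\MC)_{\ell,\ell'}$, so that branch reads $-(\MC\fv)_\ell(-q-pt/m,-p)$, with $\MC = \unit - 2\PP$ as in \eqref{barefoot}.

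First I would fix $\xi = (q,p)$ with $q>0$ and $p\neq0$ and set $W_t := e^{it\LD^{\oplus}} e^{-it L_K}$. Applying \eqref{expLKgroup} (with $t\to-t$) and then \eqref{group} componentwise, I would examine the branch conditions as $\pm t \to +\infty$. The key qualitative point is that, for $|t|$ large, the sign of $q+pt/m$ is governed solely by $\sgn(p)$, and the inner branch conditions stabilize as well; the two successive shifts by $\pm pt/m$ then cancel exactly, so $(W_t\fv)(q,p)$ becomes independent of $t$. Concretely, for $t\to+\infty$ one finds $(W_t\fv)(q,p) = \fv(q,p)$ when $p>0$ (both maps select their ``transmitted'' branch) and $(W_t\fv)(q,p) = (\MC\fv)(q,p)$ when $p<0$ (both maps select their ``reflected'' branch: the inner map contributes the factor $-\MC$, the outer map an overall minus sign, composing to $\MC$). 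This yields the pointwise limit $\OmC^{+}\fv = [\theta(p)\unit + \theta(-p)\MC]\fv$, and using $\MC = \unit - 2\PP$ gives the equivalent form $\unit - 2\theta(-p)\PP$ of \eqref{Wclexp}. The case $t\to-\infty$ is entirely analogous, with the roles of $p>0$ and $p<0$ interchanged, producing $\OmC^{-}\fv = [\theta(-p)\unit + \theta(p)\MC]\fv = \unit - 2\theta(p)\PP$.

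To upgrade the pointwise a.e.\ convergence to convergence in $L^2(\RE_{+}\!\times\RE)$ I would argue by norms. Each $W_t$ is unitary (a product of two unitary groups), so $\|W_t\fv\| = \|\fv\|$ for every $t$; on the other hand the limit candidate $\OmC^{\pm}$ given by \eqref{Wclexp} is an isometry, since on the mutually orthogonal subspaces $\{p>0\}$ and $\{p<0\}$ it acts fiberwise as either $\unit$ or the self-adjoint unitary reflection $\MC = \unit-2\PP$ (note $\MC^*\MC = \MC^2 = \unit$). Hence $\|W_t\fv\| = \|\fv\| = \|\OmC^{\pm}\fv\|$, and pointwise a.e.\ convergence together with convergence of the norms forces convergence in $L^2$ by the standard Riesz/Scheff\'e criterion. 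The main obstacle throughout is precisely this Heaviside bookkeeping: the branch conditions for the outer evolution $e^{it\LD^{\oplus}}$ must be matched against those of the already-shifted inner evolution $e^{-itL_K}$, and a sign error in the momentum arguments would flip $\MC$ into $-\MC$.

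Finally, for the scattering operator I would read off from \eqref{Wclexp} that $\OmC^{+} = \unit - 2\theta(-p)\PP$ is self-adjoint (a real scalar times the orthogonal projector $\PP$), so $(\OmC^{+})^* = \OmC^{+}$. Then, by the definition \eqref{Scldef}, $\SC = (\unit - 2\theta(-p)\PP)(\unit - 2\theta(p)\PP)$; expanding and invoking $\PP^2 = \PP$ together with the a.e.\ identities $\theta(p)\theta(-p) = 0$ and $\theta(p)+\theta(-p)=1$, all cross terms collapse and I obtain $\SC = \unit - 2\PP = \MC$, which is \eqref{Sclexp}. This last step is immediate once \eqref{Wclexp} is established.
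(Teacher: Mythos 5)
Your proposal is correct and follows essentially the same route as the paper: both compose the explicit formulas \eqref{group} and \eqref{expLKgroup}, observe that the shifts by $\pm pt/m$ cancel so that $\big(e^{it\LD^{\oplus}}e^{-itL_K}\fv\big)(q,p)=\fv(q,p)-\theta(-q-pt/m)\,2\PP\fv(q,p)$, let $\theta(-q-pt/m)\to\theta(\mp p)$, and then obtain $\SC=\unit-2\PP$ from the self-adjointness of $\unit-2\theta(\mp p)\PP$ together with $\PP^2=\PP$ and $\theta(p)\theta(-p)=0$. The only (minor) divergence is in upgrading to $L^2$-convergence, where you invoke unitarity of $W_t$, isometry of the limit and the Riesz--Scheff\'e criterion, while the paper relies implicitly on dominated convergence applied to the explicit $t$-dependent formula; both are valid.
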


\begin{proof} 
First of all let us point out that, for all $t \in \RE$ and any $\ell \in \{1,...,n\}$, Eqs. \eqref{dynClaDir2} and \eqref{semiclassdyn} give, respectively,
\begin{align*}
\big(e^{i t \LD} f\big)(q,p) 
& = \big(e^{i t L_{0}} (0\oplus f)\big)(q,p) - \big(e^{i t L_{0}} (0\oplus f)\big)(-q,-p) \\
& = (0\oplus f)\!\left(q + {pt \over m},p\right) - (0\oplus f)\!\left(-q - {pt \over m},-p\right) \\
& = \theta\!\left(q + {pt \over m}\right) f\!\left(q + {pt \over m},p\right) - \theta\!\left(-q - {pt \over m}\right) f\!\left(-q - {pt \over m},-p\right);
\end{align*}
\begin{align*}
(e^{-it L_{K}} \fv)_{\ell}(q,p) 
& = \big(e^{-it L_{0}} (0 \oplus f_{\ell})\big)(q,p) + \sum_{j = 1}^n\! \big(2 {\PP} \!-\! \unit\big)_{\ell j}\, \big(e^{-it L_{0}} (0 \oplus  f_{j})\big)(-q,-p) \\
& = (0 \oplus f_{\ell})\!\left(q - {p t \over m},p\right) + \sum_{j = 1}^n\! \big(2 {\PP} \!-\! \unit\big)_{\ell j}\, (0\oplus  f_{j})\!\left(-q + {pt \over m},-p\right) \\
& = \theta\!\left(q - {p t \over m}\right) f_{\ell}\!\left(q - {p t \over m},p\right) + \sum_{j = 1}^n\! \big(2 {\PP} \!-\! \unit\big)_{\ell j}\, \theta\!\left(-q + {pt \over m}\right) f_{j}\!\left(-q + {pt \over m},-p\right) .
\end{align*}
In view of the above relations, recalling that $q > 0$, by direct computations we obtain
\begin{align*}
& \big(e^{i t \LD^{\oplus}}e^{- i t L_{K}}\, \fv\big)_{\ell}(q,p) = \big(e^{i t \LD} (e^{- i t L_{K}}\, \fv)_{\ell}\big)(q,p) \\
& = \theta\!\left(q + {pt \over m}\right) (e^{- i t L_{K}}\, \fv)_{\ell}\!\left(q + {pt \over m}\,,p\right) - \theta\!\left(-q - {pt \over m}\right) (e^{- i t L_{K}}\, \fv)_{\ell}\!\left(-q - {pt \over m}\,,-p\right) \\
& = \theta\!\left(q + {pt \over m}\right) f_{\ell}(q,p) - \theta\!\left(-q - {pt \over m}\right) \sum_{j = 1}^n\! \big(2 {\PP} \!-\! \unit\big)_{\ell j}\, f_{j}(q,p) \\
& = f_{\ell}(q,p) - \theta\!\left(-q - {pt \over m}\right) \sum_{j = 1}^n\! 2\,\pi_{\ell j}\, f_{j}(q,p)\,,
\end{align*}
which gives
\begin{align*}
\big(e^{i t \LD^{\oplus}}e^{- i t L_{K}}\, \fv\big)(q,p) = \fv(q,p) - \theta\!\left(-q - {pt \over m}\right) \big(2\,{\PP}\, \fv\big)(q,p)\,.
\end{align*}
Then Eq. \eqref{Wclexp} follows noting that $\theta\!\left(-q - {pt \over m}\right) \to \theta(\mp p)$ for $t \to \pm\infty$.

Next, since $\PP$ is symmetric, by elementary arguments we get
\begin{equation*}
\big((\OmC^{\pm})^{*}\, \fv\big)(q,p) = \big[\unit - \,\theta(\mp p)\; 2\,\PP\big]\, \fv(q,p)\,.
\end{equation*}
On account of the above identity we obtain
\begin{align*}
\big(\SC\, \fv\big)(q,p) & = \big((\OmC^{+})^{*}\, \OmC^{-}\, \fv\big)(q,p) = \big[\unit - \,\theta(- p)\; 2\,\PP\big]\, \big(\OmC^{-}\, \fv\big)(q,p) \\
& = \big[\unit - \,\theta(- p)\; 2\,\PP\big]\, \big[\unit - \,\theta(p)\; 2\,\PP\big]\, \fv(q,p)
= \big[\unit - 2\,\PP\big]\, \fv(q,p)\,,
\end{align*}
which proves Eq. \eqref{Sclexp}.
\end{proof}

\section{Comparison of the semiclassical and quantum theories\label{s:4}}
Recall the definition of coherent states in $L^2(\RE)$ given in Eq. \eqref{initial}, and Eq. \eqref{exact} describing their free evolution. 

For later reference let us point out the following auxiliary result, regarding the functions $\widetilde{\psi}^{\hbar}_{\sigma,\xi}$ defined in Eq. \eqref{psitil}, and the operators $\mathcal{U}^{\pm}_{t}$ defined in Eq. \eqref{UUpm}.

\begin{lemma}\label{lem:Upm} For all $\xi = (q,p)\in \RE_{+}\!\times \RE$ and for any $t \in \RE$ there holds
$$
\big\|\,\mathcal{U}^{\pm}_{t}\, \widetilde{\psi}^{\hbar}_{\sigma_0,\xi} - e^{\frac{i}{\hbar}\Szero_{t}}\,\widetilde{\psi}^{\hbar}_{\sigma_t,\mp\xi_t}\big\|_{L^2(\RE_{+})} \leq {1 \over \sqrt{2}}\,e^{-\frac{q^{2}}{4\hbar \sigma_0^2}}\,.
$$
\end{lemma}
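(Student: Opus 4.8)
The plan is to reduce the whole estimate to a single Gaussian tail integral, by exploiting the operator factorizations $\mathcal{U}^{-}_{t} = \Theta^{*} \Ufree_{t}\,\Theta$ and $\mathcal{U}^{+}_{t} = \Theta^{*} \Ufree_{t}\,P\,\Theta$ from Remark~\ref{rem:RUpm}, together with the exact free evolution of a coherent state on the whole line, namely $\Ufree_{t}\,\psih_{\sigma_0,\xi} = e^{\frac{i}{\hbar}\Szero_{t}}\,\psih_{\sigma_t,\xi_t}$, which is the identity preceding Eq.\,\eqref{exact}. The key observation is that $\Theta\,\widetilde{\psi}^{\hbar}_{\sigma_0,\xi}$ equals $\theta\,\psih_{\sigma_0,\xi}$, and therefore differs from the full coherent state $\psih_{\sigma_0,\xi}\in L^2(\RE)$ only by the piece $\theta(-\,\cdot)\,\psih_{\sigma_0,\xi}$ supported on $\RE_{-}$.

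First I would treat the minus operator. Applying $\Ufree_{t}$ to $\psih_{\sigma_0,\xi}$ and restricting back to $\RE_{+}$ through $\Theta^{*}$ gives exactly $e^{\frac{i}{\hbar}\Szero_{t}}\,\widetilde{\psi}^{\hbar}_{\sigma_t,\xi_t}$, so that the difference in the statement equals $-\,\Theta^{*}\Ufree_{t}\big[\theta(-\,\cdot)\,\psih_{\sigma_0,\xi}\big]$. Since $\|\Theta^{*}\| \leq 1$ and $\Ufree_{t}$ is unitary on $L^2(\RE)$, its $L^2(\RE_{+})$ norm is bounded by $\|\psih_{\sigma_0,\xi}\|_{L^2(\RE_{-})}$. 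For the plus operator I would first record the elementary identity $P\,\psih_{\sigma_0,\xi} = \psih_{\sigma_0,-\xi}$, a direct check from Eq.\,\eqref{initial}, so that $P\,\Theta\,\widetilde{\psi}^{\hbar}_{\sigma_0,\xi}$ is $\psih_{\sigma_0,-\xi}$ restricted to $\RE_{-}$. Noting that the free flow of $-\xi=(-q,-p)$ is $-\xi_t$ and that the classical action $\Szero_{t} = p^2 t/(2m)$ is invariant under $p\mapsto -p$, the same manipulation shows that the difference equals $-\,\Theta^{*}\Ufree_{t}\big[\theta(\cdot)\,\psih_{\sigma_0,-\xi}\big]$, whose norm is bounded by $\|\psih_{\sigma_0,-\xi}\|_{L^2(\RE_{+})}$; this accounts for the sign $\mp\xi_t$ in the statement.

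Both cases thus reduce to the same Gaussian tail. Using $|\psih_{\sigma_0,\xi}(x)|^2 = (2\pi\hbar)^{-1/2}\sigma_0^{-1}\exp\!\big(-(x-q)^2/(2\hbar\sigma_0^2)\big)$ and the substitution $u=(x-q)/(\sqrt{\hbar}\,\sigma_0)$, I would compute $\|\psih_{\sigma_0,\xi}\|^2_{L^2(\RE_{-})} = (2\pi)^{-1/2}\!\int_{q/(\sqrt{\hbar}\sigma_0)}^{\infty}\! e^{-u^2/2}\,du$, and the analogous identity for $\|\psih_{\sigma_0,-\xi}\|^2_{L^2(\RE_{+})}$ (whose Gaussian is centered at $-q$) returns the very same integral. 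Closing with the standard tail estimate $(2\pi)^{-1/2}\!\int_{a}^{\infty} e^{-u^2/2}\,du \leq \tfrac12\,e^{-a^2/2}$ for $a \geq 0$, with $a = q/(\sqrt{\hbar}\sigma_0)$, yields the bound $\tfrac12\,e^{-q^2/(2\hbar\sigma_0^2)}$ and hence $\tfrac{1}{\sqrt2}\,e^{-q^2/(4\hbar\sigma_0^2)}$ after taking the square root, as claimed. I do not anticipate a serious obstacle: the only points deserving care are the bookkeeping of parity and free flow in the plus case, and the justification of the Gaussian tail inequality (which follows, e.g., by comparing derivatives or by a Chernoff-type argument).
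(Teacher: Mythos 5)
Your proposal is correct and follows essentially the same route as the paper: both factor $\mathcal{U}^{\pm}_{t}$ through $\Theta$, $P$ and $\Ufree_t$, use the exact free evolution of the full-line coherent state together with $\psi^{\hbar}_{\sigma,\xi}(-x)=\psi^{\hbar}_{\sigma,-\xi}(x)$, identify the error as the (unitarily evolved) piece of $\psi^{\hbar}_{\sigma_0,\xi}$ supported on $\RE_{-}$, and bound it by the same Gaussian tail $\tfrac12 e^{-q^2/(2\hbar\sigma_0^2)}$. The only cosmetic difference is that the paper closes the tail estimate with the pointwise inequality $e^{-\eta(a+b)^2}\le e^{-\eta(a^2+b^2)}$ rather than the standard Gaussian tail bound, yielding the identical constant.
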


\begin{proof} Let us first remark that, on account of the considerations reported in Remark \ref{rem:RUpm}, $ \widetilde{\psi}^{\hbar}_{\sigma_0,\xi} = \Theta^* {\psi}^{\hbar}_{\sigma_0,\xi}$ and  for any $x > 0$ we have the chain of identities
\begin{align*}
\big(\mathcal{U}^{\pm}_{t}\, \widetilde{\psi}^{\hbar}_{\sigma_0,\xi}\big)(x) 
& = \big(\Ufree_{t}\, \theta\, \psi^{\hbar}_{\sigma_0,\xi}\big)(\mp x)
= \big(\Ufree_{t}\, \psi^{\hbar}_{\sigma_0,\xi}\big)(\mp x) + E^\hbar_t(\mp x) 
= e^{\frac{i}{\hbar}\Szero_{t}}\,\psi^{\hbar}_{\sigma_t,\xi_t}(\mp x) + E^\hbar_t(\mp x)\\ 
& = e^{\frac{i}{\hbar}\Szero_{t}}\,\psi^{\hbar}_{\sigma_t,\mp \xi_t}(x) + E^\hbar_t(\mp x)
= e^{\frac{i}{\hbar}\Szero_{t}}\,\widetilde{\psi}^{\hbar}_{\sigma_t,\mp \xi_t}(x) + E^\hbar_t(\mp x)\,,
\end{align*}
where we put $E^\hbar_t(x) := - \,\big(\Ufree_{t}\,(1-\theta)\,\psi^{\hbar}_{\sigma_0,\xi}\big)(x)$ for convenience of notation and used the trivial identity ${\psi}^{\hbar}_{\sigma,\xi}(- x) ={\psi}^{\hbar}_{\sigma, - \xi}(x) $.\\
On the other hand, via an explicit calculation involving the trivial inequality $e^{-\eta (a+b)^2} \leq e^{-\eta (a^2 + b^2)}$ for $\eta, a, b \geq 0$ we obtain
\begin{align}
\big\|E^\hbar_t (\mp \,\cdot\, )\big\|^{2}_{L^2(\RE_+)} 
& \leq \big\|E^\hbar_t\big\|^{2}_{L^2(\RE)} = \big\|\Ufree_{t}\, (1-\theta)\, \psi^{\hbar}_{\sigma_0,\xi}\big\|^{2}_{L^2(\RE)} = \big\| (1-\theta)\, \psi^{\hbar}_{\sigma_0,\xi}\big\|^{2}_{L^2(\RE)}  \nonumber \\
& = \int_{-\infty}^0\!\! dx \; |\psi^{\hbar}_{\sigma_0,\xi}(x)|^2  
 = \frac{1}{\sqrt{2\pi\hbar}\; \sigma_0}\int_0^\infty\! dx\; e^{-\frac{(x+q)^{2}}{2\hbar \sigma_0^2}} 
\leq {1 \over 2}\, e^{-\frac{q^2}{2\hbar \sigma_0^2}},
\label{gallery}
\end{align}
which proves the thesis in view of the previous arguments.
\end{proof}

To proceed let us point out the  forthcoming lemma which  characterizes a large class of functions satisfying the condition in Definition \ref{d:quantumstates}.

\begin{lemma}\label{lem:COS} Let $\eta > 0$, $\xi = (q,p) \in \RE_{+}\!\times \RE$ and consider a family of functions $\chi_{q,\eta} \in L^{\infty}(\RE_{+})$, uniformly bounded in $L^{\infty}(\RE_{+})$ with respect to $\eta,q$ and such that 
\begin{equation}\label{hypchi}
\chi_{q,\eta}(x) = 1 \qquad \mbox{for\; $|x-q| < \eta\, q$}\,.
\end{equation}
Then, the functions $\Xi^{\hbar}_{\sigma,\xi} \in L^2(\RE_{+})$ defined by
\begin{equation}
\Xi^{\hbar}_{\sigma,\xi}(x) := {1 \over \|\chi_{q,\eta}\,\widetilde{\psi}^{\hbar}_{\sigma,\xi}\|_{L^2(\RE_{+})}}\;\,\chi_{q,\eta}(x)\, \widetilde{\psi}^{\hbar}_{\sigma,\xi}(x) \qquad (x > 0) \label{COSlem}
\end{equation}
fulfill the condition \eqref{COS} with $\eps < \min\{1/4,\eta^2/8\}$ and   $C_0$ depending only on $\|\chi_{q,\eta}\|_{L^{\infty}(\RE_{+})}$.
\end{lemma}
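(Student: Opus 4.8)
The plan is to reduce everything to elementary Gaussian tail estimates and then combine them through the algebraic identity relating $\Xi^{\hbar}_{\sigma,\xi}$ to $\widetilde{\psi}^{\hbar}_{\sigma,\xi}$. First I would record the pointwise modulus $|\psi^{\hbar}_{\sigma,\xi}(x)|^2 = \tfrac{1}{\sqrt{2\pi\hbar}\,|\sigma|}\,e^{-(x-q)^2/(2\hbar|\sigma|^2)}$, which follows from $\Re(4\hbar\sigma_0\sigma)^{-1} = (4\hbar|\sigma|^2)^{-1}$; this is the only place the hypothesis $\Re\sigma=\sigma_0$ enters, and it is what turns the exponents into the $|\sigma|^2$ appearing in \eqref{COS}. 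Writing $u^2 := q^2/(\hbar|\sigma|^2)$, $s^2 := \hbar|\sigma|^2$, $N := \|\chi_{q,\eta}\,\widetilde{\psi}^{\hbar}_{\sigma,\xi}\|_{L^2(\RE_{+})}$ and $c_\chi := \|\chi_{q,\eta}\|_{L^\infty(\RE_{+})}+1$, and using the elementary bound $\int_a^\infty \tfrac{1}{\sqrt{2\pi}\,s}\,e^{-y^2/(2s^2)}\,dy \le \tfrac12 e^{-a^2/(2s^2)}$ ($a\ge 0$), I would derive three estimates: (i) since $1-\chi_{q,\eta}$ is supported in $\{x>0:|x-q|\ge\eta q\}$ with $|1-\chi_{q,\eta}|\le c_\chi$ there, $\|(1-\chi_{q,\eta})\widetilde{\psi}^{\hbar}_{\sigma,\xi}\|_{L^2(\RE_{+})}\le c_\chi\,e^{-\eta^2 u^2/4}$; (ii) restricting the norm to the interval where $\chi_{q,\eta}=1$ gives $N^2 \ge 1-e^{-\eta^2 u^2/2}$; (iii) the tail bound of \eqref{gallery}, carried out with $|\sigma|$ in place of $\sigma_0$, yields $1-\|\widetilde{\psi}^{\hbar}_{\sigma,\xi}\|^2 \le \tfrac12 e^{-u^2/2}$, hence (with $\|\widetilde{\psi}^{\hbar}_{\sigma,\xi}\|\le 1$) also $|1-\|\widetilde{\psi}^{\hbar}_{\sigma,\xi}\|| \le \tfrac12 e^{-u^2/2}$.

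Next I would use the decomposition
\[
\Xi^{\hbar}_{\sigma,\xi} - \widetilde{\psi}^{\hbar}_{\sigma,\xi} = \frac{1}{N}\,(\chi_{q,\eta}-1)\,\widetilde{\psi}^{\hbar}_{\sigma,\xi} + \frac{1-N}{N}\,\widetilde{\psi}^{\hbar}_{\sigma,\xi}\,,
\]
so that, by the triangle inequality together with $|1-N| \le |1-\|\widetilde{\psi}^{\hbar}_{\sigma,\xi}\|| + \|(1-\chi_{q,\eta})\widetilde{\psi}^{\hbar}_{\sigma,\xi}\|$ (reverse triangle inequality) and $\|\widetilde{\psi}^{\hbar}_{\sigma,\xi}\|\le 1$,
\[
\big\|\Xi^{\hbar}_{\sigma,\xi} - \widetilde{\psi}^{\hbar}_{\sigma,\xi}\big\|_{L^2(\RE_{+})} \le \frac{1}{N}\Big(\tfrac12 e^{-u^2/2} + 2\,c_\chi\,e^{-\eta^2 u^2/4}\Big)\,.
\]
Both exponents here ($1/2$ and $\eta^2/4$) strictly exceed any $\eps < \min\{1/4,\eta^2/8\}$, so the whole content of the estimate lies in controlling the factor $1/N$.

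The main obstacle is precisely that $N$ is not bounded below uniformly: when $q \ll \sqrt{\hbar}\,|\sigma|$ the interval $(q(1-\eta),q(1+\eta))$ captures only a vanishing fraction of the broad Gaussian, and a worst-case cutoff forces $N\to 0$. I would therefore split at the threshold $T := 2\ln 2/\eta^2$. For $u^2>T$, estimate (ii) gives $N^2 \ge 1-e^{-\eta^2 T/2} = 1/2$, hence $1/N \le \sqrt2$, and the displayed inequality yields $\|\Xi^{\hbar}_{\sigma,\xi}-\widetilde{\psi}^{\hbar}_{\sigma,\xi}\| \le \sqrt2\,(\tfrac12 + 2c_\chi)\,e^{-\eps u^2}$ (using $\eps < \min\{1/2,\eta^2/4\}$ to dominate both exponentials). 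For $u^2 \le T$ I would instead use the universal bound $\|\Xi^{\hbar}_{\sigma,\xi}-\widetilde{\psi}^{\hbar}_{\sigma,\xi}\| \le \|\Xi^{\hbar}_{\sigma,\xi}\|+\|\widetilde{\psi}^{\hbar}_{\sigma,\xi}\| \le 2$ and write $2 \le 2\,e^{\eps T}\,e^{-\eps u^2}$.

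The point that makes the argument close — and, I believe, the reason the statement carries $\eta^2/8$ rather than $\eta^2/4$ — is that $\eps T < (\eta^2/8)(2\ln2/\eta^2) = (\ln 2)/4$, so $e^{\eps T} < 2^{1/4}$ is bounded \emph{independently of} $\eta$. Thus, taking $C_0 := \max\{2^{5/4},\ \sqrt2\,(\tfrac12 + 2c_\chi)\}$ — which depends only on $\|\chi_{q,\eta}\|_{L^\infty(\RE_{+})}$ — and any $\eps < \min\{1/4,\eta^2/8\}$ establishes \eqref{COS} in both regimes. Minor points to verify along the way are that $N>0$, so that $\Xi^{\hbar}_{\sigma,\xi}$ is well defined (true since $\widetilde{\psi}^{\hbar}_{\sigma,\xi}$ is strictly positive and $\chi_{q,\eta}=1$ on a set of positive measure for every $q>0$), and that estimate (ii) persists when $\eta\ge 1$, the relevant interval being $(0,q(1+\eta))$, which only increases the captured mass.
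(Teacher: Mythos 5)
Your argument is correct, but it takes a genuinely different route from the paper's at the one point where the proof has real content. The paper never divides by $N:=\|\chi_{q,\eta}\,\widetilde{\psi}^{\hbar}_{\sigma,\xi}\|_{L^2(\RE_{+})}$: it writes
\[
\Xi^{\hbar}_{\sigma,\xi}-\widetilde{\psi}^{\hbar}_{\sigma,\xi}
=\big(\Xi^{\hbar}_{\sigma,\xi}-\chi_{q,\eta}\widetilde{\psi}^{\hbar}_{\sigma,\xi}\big)+\big(\chi_{q,\eta}-1\big)\widetilde{\psi}^{\hbar}_{\sigma,\xi}
=(1-N)\,\Xi^{\hbar}_{\sigma,\xi}+(\chi_{q,\eta}-1)\,\widetilde{\psi}^{\hbar}_{\sigma,\xi}\,,
\]
uses $\|\Xi^{\hbar}_{\sigma,\xi}\|_{L^2(\RE_{+})}=1$ so that the first term contributes exactly $|1-N|$, and then bounds $|1-N|^2\leq|1-N^2|=\big|\|\psi^{\hbar}_{\sigma,\xi}\|_{L^2(\RE)}^2-N^2\big|$ directly by Gaussian tail integrals. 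This gives a single uniform estimate with no case distinction, and the exponent $\eta^2/8$ there arises merely from taking a square root. You instead use the decomposition carrying the prefactor $1/N$, which forces you to confront the genuine degeneracy $N\to 0$ when $q^2\ll\hbar|\sigma|^2$; your threshold split at $u^2=2\ln 2/\eta^2$, together with the observation that $\eps<\eta^2/8$ makes $e^{\eps T}\leq 2^{1/4}$ uniformly in $\eta$, is a correct and rather neat way to absorb that regime into $C_0$, at the price of a two-case argument and a different provenance for the $\eta^2/8$. One small inaccuracy to fix: for $\eta\geq 1$ your estimate (ii) does not literally persist as $N^2\geq 1-e^{-\eta^2u^2/2}$, since the set $\{x>0:|x-q|<\eta q\}=(0,(1+\eta)q)$ misses the mass of $\psi^{\hbar}_{\sigma,\xi}$ on the negative half-line; the correct bound is $N^2\geq 1-\tfrac12 e^{-\eta^2u^2/2}-\tfrac12 e^{-u^2/2}$, which for $u^2>T$ still yields $N^2>1/4$, so only your numerical constant changes. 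With that repair both proofs deliver the lemma as stated.
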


\begin{proof} Let us first remark that the states $\Xi^{\hbar}_{\sigma,\xi}$ defined in Eq. \eqref{COSlem} have unit norm in $L^2(\RE_+)$ by construction; moreover, again from Eq. \eqref{COSlem} it follows that $\chi_{q,\eta}\, \widetilde{\psi}^{\hbar}_{\sigma,\xi} = \|\chi_{q,\eta}\,\widetilde{\psi}^{\hbar}_{\sigma,\xi}\|_{L^2(\RE_{+})}\, \Xi^{\hbar}_{\sigma,\xi}$. Taking these facts into account, we have
\begin{align*}
\big\|\Xi^{\hbar}_{\sigma,\xi} - \widetilde{\psi}^{\hbar}_{\sigma,\xi}\big\|_{L^2(\RE_{+})} 
& \leq \big\|\Xi^{\hbar}_{\sigma,\xi} - \chi_{q,\eta}\,\widetilde{\psi}^{\hbar}_{\sigma,\xi}\big\|_{L^2(\RE_{+})} + \big\|(\chi_{q,\eta}-1)\,\widetilde{\psi}^{\hbar}_{\sigma,\xi}\big\|_{L^2(\RE_{+})} \\
& = \big|1 - \|\chi_{q,\eta}\,\widetilde{\psi}^{\hbar}_{\sigma,\xi}\|_{L^2(\RE_{+})}\big|\; \big\|\Xi^{\hbar}_{\sigma,\xi}\big\|_{L^2(\RE_{+})} + \big\|(\chi_{q,\eta} - 1)\,\widetilde{\psi}^{\hbar}_{\sigma,\xi}\big\|_{L^2(\RE_{+})} \\
& = \big|1 - \|\chi_{q,\eta}\,\widetilde{\psi}^{\hbar}_{\sigma,\xi}\|_{L^2(\RE_{+})}\big| + \big\|(\chi_{q,\eta}-1)\,\widetilde{\psi}^{\hbar}_{\sigma,\xi}\big\|_{L^2(\RE_{+})}\,.
\end{align*}

On one hand, recalling the definition \eqref{psitil} of $\widetilde{\psi}^{\hbar}_{\sigma,\xi}$ and that $\|\psi^{\hbar}_{\sigma,\xi}\|_{L^2(\RE)} = 1$, using the basic inequality $(a-b)^2 \leq |a^2 - b^2|$ for $a,b> 0$ we get
\begin{align*}
& \big|1 - \|\chi_{q,\eta}\,\widetilde{\psi}^{\hbar}_{\sigma,\xi}\|_{L^2(\RE_{+})}\big|^2 
\leq \big|\|\psi^{\hbar}_{\sigma,\xi}\|^2_{L^2(\RE)} - \|\chi_{q,\eta}\,\widetilde{\psi}^{\hbar}_{\sigma,\xi}\|_{L^2(\RE_{+})}^2 \big| \\
& = \left|\int_{\RE}\! dx\; |\psi^{\hbar}_{\sigma,\xi}(x)|^2 - \int_{0}^{\infty}\!\! dx\; \big|\chi_{q,\eta}(x)\,\psi^{\hbar}_{\sigma,\xi}(x)\big|^2 \right| 
\leq \int_{\RE}\!\! dx \left|1- \theta(x)\, \big|\chi_{q,\eta}(x)\big|^2\right|\,\big|\psi^{\hbar}_{\sigma,\xi}(x)\big|^2 \\
& \leq \int_{\RE}\!\! dx\; \big(1- \theta(x)\big)\;\big|\psi^{\hbar}_{\sigma,\xi}(x)\big|^2 + \int_{\RE}\!\! dx\;\theta(x) \left|1- \big|\chi_{q,\eta}(x)\big|^2\right|\;\big|\psi^{\hbar}_{\sigma,\xi}(x)\big|^2\,.
\end{align*}
Recalling the hypothesis \eqref{hypchi}, the explicit expression \eqref{initial} for $\psi^{\hbar}_{\sigma,\xi}$,  and that we are assuming $q > 0$, from the above results we derive
\begin{align*}
& \big|1 - \|\chi_{q,\eta}\,\widetilde{\psi}^{\hbar}_{\sigma,\xi}\|_{L^2(\RE_{+})}\big|^2 
\leq \int_{-\infty}^{0}\!\! dx\; \big|\psi^{\hbar}_{\sigma,\xi}(x)\big|^2 + \int_{\RE_{+}\, \cap\, \{|x-q| \,>\, \eta\,q\}}\!\!\!\! dx \left|1- \big|\chi_{q,\eta}(x)\big|^2\right|\;\big|\psi^{\hbar}_{\sigma,\xi}(x)\big|^2 \\
& \leq \frac{1}{\sqrt{2\pi\hbar}\;|\sigma|} \left[
\int_{0}^{\infty}\!\! dx\; e^{-\,\frac{(x+q)^{2}}{2\hbar |\sigma|^2}}\,
+ \big(1 + \|\chi_{q,\eta}\|_{L^{\infty}(\RE_{+})}^2\big)
\int_{\{|x-q| \,>\, \eta\,q\}}\!\!\!\! dx\; e^{-\,\frac{(x-q)^{2}}{2\hbar |\sigma|^2}}\right] \\
& \leq \frac{1}{\sqrt{2\pi\hbar}\;|\sigma|} \left[
e^{-\,\frac{q^2}{2\hbar |\sigma|^2}} \int_{0}^{\infty}\!\! dx\; e^{-\,\frac{x^2}{2\hbar |\sigma|^2}}\,
+ \big(1 + \|\chi_{q,\eta}\|_{L^{\infty}(\RE_{+})}\big)^2\,
e^{-\,\frac{\eta^2 q^2}{4\hbar |\sigma|^2}} \int_{\RE}\!\! dx\;  e^{-\,\frac{(x-q)^{2}}{4\hbar |\sigma|^2}}\right] \\
& \leq {1 \over 2}\;e^{-\,\frac{q^2}{2\hbar |\sigma|^2}} 
+ \sqrt{2}\;\big(1 + \|\chi_{q,\eta}\|_{L^{\infty}(\RE_{+})}\big)^2\; e^{-\,\frac{\eta^2 q^2}{4\hbar |\sigma|^2}}\,.
\end{align*}

On the other hand, by arguments similar to those employed above we get
\begin{align*}
& \big\|(\chi_{q,\eta}-1)\,\widetilde{\psi}^{\hbar}_{\sigma,\xi}\big\|_{L^2(\RE_{+})}^2 
= \int_{0}^{\infty}\!\!dx\;\big|\chi_{q,\eta}(x)-1\big|^2\;\big|\psi^{\hbar}_{\sigma,\xi}(x)\big|^2 \\
& \leq \big(1+\|\chi_{q,\eta}\|_{L^{\infty}(\RE_{+})}\big)^2\; \frac{1}{\sqrt{2\pi\hbar}\;|\sigma|} \int_{\{|x-q| \,>\, \eta\,q\}}\!\!dx\;e^{-\,\frac{(x-q)^{2}}{2\hbar |\sigma|^2}} 
\leq \sqrt{2}\;\big(1+\|\chi_{q,\eta}\|_{L^{\infty}(\RE_{+})}\big)^2\, e^{-\,\frac{\eta^2 q^2}{4\hbar |\sigma|^2}}\,.
\end{align*}

Summing up, the previous results and the basic relation $\sqrt{a+b} \leq \sqrt{a} + \sqrt{b}$ for $a,b > 0$ imply
\begin{align*}
\big\|\Xi^{\hbar}_{\sigma,\xi} - \widetilde{\psi}^{\hbar}_{\sigma,\xi}\big\|_{L^2(\RE_{+})} 
& \leq {1 \over \sqrt{2}} \;e^{-\,\frac{q^2}{4\hbar |\sigma|^2}} + 2^{5/4} \big(1 + \|\chi_{q,\eta}\|_{L^{\infty}(\RE_{+})}\big)\; e^{-\,\frac{\eta^2 q^2}{8\hbar |\sigma|^2}}\\
& \leq \left({1 \over \sqrt{2}} + 2^{5/4} \big(1 + \|\chi_{q,\eta}\|_{L^{\infty}(\RE_{+})}\big)\right) e^{-\,\eps\frac{ q^2}{\hbar |\sigma|^2}}  \,,
\end{align*}
which suffices to infer the thesis on account of the uniform boundedness of $\chi_{q,\eta}$.
\end{proof}

\begin{example} For $\eta \in (0,1]$, consider the sharp cut-off functions
$$ \chi_{q,\eta}(x) = \left\{\!\begin{array}{ll}
\dd{0}	&	\quad\dd{\mbox{if\, $x \leq (1-\eta)\,q$}}\,, \vspace{0.1cm}\\
\dd{1}	&	\quad\dd{\mbox{if\, $x > (1-\eta)\,q$}}\,,
\end{array}\right. $$
which clearly satisfy the hypothesis of Lemma \ref{lem:COS}. The corresponding elements $\Xi^{\hbar}_{\sigma,\xi} \in L^2(\RE_{+})$ defined according to Eq. \eqref{COSlem} consist of normalized truncations of the coherent state $\psi^{\hbar}_{\sigma,\xi}$ and fulfill the condition \eqref{COS} as a consequence.\\
It is worth noting that for $\eta = 1$ we have $\chi_{q,\eta} \equiv 1$ on $\RE_{+}$, so that the associated function $\Xi^{\hbar}_{\sigma,\xi} $ is just the re-normalization of the bare truncation $\widetilde{\psi}^{\hbar}_{\sigma,\xi}$ introduced in Eq. \eqref{psitil}, i.e.,
$$
\Xi^{\hbar}_{\sigma,\xi} = \widetilde{\psi}^{\hbar}_{\sigma,\xi}\,\big/\,\|\widetilde{\psi}^{\hbar}_{\sigma,\xi}\|_{L^2(\RE_{+})}\;.
$$
\end{example}

\begin{example} For $\eta \in (0,1/2)$, consider the smooth functions on $\RE_{+}$ such that
$$ \chi_{q,\eta}(x) = \left\{\!\begin{array}{ll}
\dd{0}	&	\quad\dd{\mbox{for\, $|x - q| > (1-\eta)\,q$}\,,} \vspace{0.1cm}\\
\dd{1}	&	\quad\dd{\mbox{for\, $|x - q| < \eta\,q$}\,,}
\end{array}\right. \qquad
\big|\chi_{q,\eta}(x)\big| \leq 1\,.
$$
Again, the assumptions of Lemma \ref{lem:COS} are certainly verified and the related functions $\Xi^{\hbar}_{\sigma,\xi}$
have compact support in $\RE_{+}$, besides satisfying the bound \eqref{COS}.
\end{example}

In addition to states fulfilling the requirements of Definitions \ref{d:quantumstates} and \ref{d:classicalstates}, our arguments will often involve the non-normalized element
\begin{equation}
\bm{\Psi}^{\hbar}_{\sigma_0,\xi} \in L^{2}(\GG)\,, \qquad \bm{\Psi}^{\hbar}_{\sigma_0,\xi} \equiv \left(\!\!\begin{array}{c} \widetilde{\psi}^{\hbar}_{\sigma_0,\xi} \vspace{0.05cm}\\ 0 \vspace{-0.15cm}\\ \vdots \\ 0 \end{array}\!\!\right) , \label{PsiGraph}
\end{equation}
along with its classical counterpart
\begin{equation}
\bm{\Phi}^{\hbar}_{\sigma_0,x} \in L^{2}(\GG)\,, \qquad \bm{\Phi}^{\hbar}_{\sigma_0,x} \equiv \left(\!\!\begin{array}{c} \widetilde{\phi}^{\hbar}_{\sigma_0,x} \vspace{0.05cm}\\ 0 \vspace{-0.15cm}\\ \vdots \\ 0 \end{array}\!\!\right), \label{PsiGraphCl}
\end{equation}
with $\widetilde{\psi}^{\hbar}_{\sigma_0,\xi}$ defined as in Eq. \eqref{psitil} and
\begin{equation*}\label{PsiCl}
\widetilde{\phi}^{\hbar}_{\sigma_0,x}(\xi) := \widetilde{\psi}^{\hbar}_{\sigma_0,\xi}(x)\,.
\end{equation*}

\subsection{Comparing the dynamics. Proof of Theorem \ref{t:dynamics}} 
\begin{proof}[{\bf Proof of Theorem \ref{t:dynamics}}] Let $\bm{\Psi}^{\hbar}_{\sigma_0,\xi}$ and $\bm{\Phi}^{\hbar}_{\sigma_0,(\cdot)}(\xi)$ be, respectively, as in Eqs. \eqref{PsiGraph} and \eqref{PsiGraphCl}, and note that from the triangular inequality it follows
\begin{align}
& \big\| e^{-i {t \over \hbar} \HK} \bm{\Xi}^{\hbar}_{\sigma_0,\xi} - e^{\frac{i}{\hbar}\Szero_{t}} \big( e^{it L_{K}} \bm{\Sigma}^{\hbar}_{\sigma_t,(\cdot)}\big)(\xi)\big\|_{L^2(\GG)} \nonumber\\
& \leq \big\| e^{-i {t \over \hbar} \HK} \bm{\Xi}^{\hbar}_{\sigma_0,\xi} - e^{-i {t \over \hbar} \HK} \bm{\Psi}^{\hbar}_{\sigma_0,\xi}\big\|_{L^2(\GG)} 
+ \big\|e^{-i {t \over \hbar} \HK} \bm{\Psi}^{\hbar}_{\sigma_0,\xi} - e^{\frac{i}{\hbar}\Szero_{t}} \big( e^{it L_{K}} \bm{\Phi}^{\hbar}_{\sigma_t,(\cdot)}\big)(\xi)\big\|_{L^2(\GG)} \nonumber\\
& \quad + \big\| e^{\frac{i}{\hbar}\Szero_{t}} \big( e^{it L_{K}} \bm{\Phi}^{\hbar}_{\sigma_t,(\cdot)}\big)(\xi) - e^{\frac{i}{\hbar}\Szero_{t}} \big( e^{it L_{K}} \bm{\Sigma}^{\hbar}_{\sigma_t,(\cdot)}\big)(\xi)\big\|_{L^2(\GG)}\,. \label{proof1}
\end{align}
Regarding the first term on the right-hand side of Eq. \eqref{proof1}, by the unitarity of $e^{-i {t \over \hbar} \HK}$ and the condition \eqref{COS} we infer
\begin{align*}
\big\| e^{-i {t \over \hbar} \HK} \bm{\Xi}^{\hbar}_{\sigma_0,\xi} - e^{-i {t \over \hbar} \HK} \bm{\Psi}^{\hbar}_{\sigma_0,\xi}\big\|_{L^2(\GG)} 
= \big\|\bm{\Xi}^{\hbar}_{\sigma_0,\xi} - \bm{\Psi}^{\hbar}_{\sigma_0,\xi}\big\|_{L^2(\GG)} 
= \big\|\Xi^{\hbar}_{\sigma_0,\xi} - \widetilde{\psi}^{\hbar}_{\sigma_0,\xi}\big\|_{L^2(\RE_{+})} 
\leq C_0 e^{- \,\eps\,{q^2 \over \hbar \sigma_0^2}}\,.
\end{align*}
As for the second term in Eq. \eqref{proof1}, note that Eqs. \eqref{dynHF} and \eqref{semiclassdyn}  give
\begin{align*}
& e^{-i {t \over \hbar} \HK} \bm{\Psi}^{\hbar}_{\sigma_0,\xi} - e^{\frac{i}{\hbar}\Szero_{t}} \big( e^{it L_{K}} \bm{\Phi}^{\hbar}_{\sigma_t,(\cdot)}\big)(\xi) \\
& = 
\left(\!\!\begin{array}{c} \mathcal{U}^{-}_{t}\, \widetilde{\psi}^{\hbar}_{\sigma_0,\xi} - e^{\frac{i}{\hbar}\Szero_{t}} \big(e^{it L_{0}} (0\!\oplus\! \widetilde{\phi}^{\hbar}_{\sigma_t,(\cdot)})\big)(\xi) \vspace{0.05cm}\\ 0 \vspace{-0.15cm}\\ \vdots \\ 0 \end{array}\!\!\right) \!
- {\MC} \left(\!\!\begin{array}{c} \mathcal{U}^{+}_{t}\, \widetilde{\psi}^{\hbar}_{\sigma_0,\xi} - e^{\frac{i}{\hbar}\Szero_{t}} \big(e^{it L_{0}} (0\!\oplus\! \widetilde{\phi}^{\hbar}_{\sigma_t,(\cdot)})\big)(-\xi) \vspace{0.05cm}\\ 0 \vspace{-0.15cm}\\ \vdots \\ 0 \end{array}\!\!\right).
\end{align*}
Since $\big(e^{it L_{0}} (0\!\oplus\! \widetilde{\phi}^{\hbar}_{\sigma_t,x})\big)(\pm \xi) = \widetilde{\phi}^{\hbar}_{\sigma_t,x}(\pm \xi_t) = \widetilde{\psi}^{\hbar}_{\sigma_t,\pm \xi_t}(x)$ for $x \in \RE_{+}$, from the above identity and from Lemma \ref{lem:Upm} we deduce\footnote{Note also that, on account of Eq. \eqref{MCexp}, we have $$1 + \sum_{\ell = 1}^{n} |\MC_{\ell 1}|^2 = 1 + |\MC_{1 1}|^2 + \sum_{\ell = 2}^{n} |\MC_{\ell 1}|^2 = 1 + \left({n-2 \over n}\right)^{\!2} + (n-1)\,\left({2 \over n}\right)^{\!\!2} = 2\,. $$}
\begin{align*}
& \big\|e^{-i {t \over \hbar} \HK} \bm{\Psi}^{\hbar}_{\sigma_0,\xi} - e^{\frac{i}{\hbar}\Szero_{t}} \big( e^{it L_{K}} \bm{\Phi}^{\hbar}_{\sigma_t,(\cdot)}\big)(\xi)\big\|_{L^2(\GG)}^2 \\
& \leq 2\, \big\|\mathcal{U}^{-}_{t}\, \widetilde{\psi}^{\hbar}_{\sigma_0,\xi} - e^{\frac{i}{\hbar}\Szero_{t}} 
\widetilde{\psi}^{\hbar}_{\sigma_t, \xi_t}\big\|_{L^2(\RE_{+})}^2
+ 2 \sum_{\ell = 1}^{n} |\MC_{\ell 1}|^2\, \big\|\mathcal{U}^{+}_{t}\, \widetilde{\psi}^{\hbar}_{\sigma_0,\xi} - e^{\frac{i}{\hbar}\Szero_{t}} \widetilde{\psi}^{\hbar}_{\sigma_t,- \xi_t}\big\|_{L^2(\RE_{+})}^2
\leq 2 \, e^{-\frac{q^{2}}{2\hbar \sigma_{0}^{2}}}\,.
\end{align*}
Let us finally consider the third term in Eq. \eqref{proof1}. Recalling again the identity \eqref{semiclassdyn}, we obtain
\begin{align*}
& \big( e^{it L_{K}} \bm{\Phi}^{\hbar}_{\sigma_t,(\cdot)}\big)(\xi) - \big( e^{it L_{K}} \bm{\Sigma}^{\hbar}_{\sigma_t,(\cdot)}\big)(\xi) \\
& = \left(\!\!\begin{array}{c} \big(e^{it L_{0}} (0\!\oplus\! \widetilde{\phi}^{\hbar}_{\sigma_t,(\cdot)})\big)(\xi) - \big(e^{it L_{0}} (0\!\oplus\! \widetilde{\Sigma}^{\hbar}_{\sigma_t,(\cdot)})\big)(\xi) \\ 0 \\\vdots \\ 0 \end{array}\!\!\right) \!
- {\MC} \!\left(\!\!\begin{array}{c} \big(e^{it L_{0}} (0\!\oplus\! \widetilde{\phi}^{\hbar}_{\sigma_t,(\cdot)})\big)(-\xi) - \big(e^{it L_{0}} (0\!\oplus\! \widetilde{\Sigma}^{\hbar}_{\sigma_t,(\cdot)})\big)(-\xi) \\ 0 \\ \vdots \\ 0 \end{array}\!\!\right).
\end{align*}
From the above identity, by arguments similar to those employed previously we get
\begin{align*}
& \big\|\big( e^{it L_{K}} \bm{\Phi}^{\hbar}_{\sigma_t,(\cdot)}\big)(\xi) - \big( e^{it L_{K}} \bm{\Sigma}^{\hbar}_{\sigma_t,(\cdot)}\big)(\xi)\big\|_{L^2(\GG)}^2 \\
& \leq 2\,\big\|\widetilde{\psi}^{\hbar}_{\sigma_t,\xi_t} - \Xi^{\hbar}_{\sigma_t,\xi_t}\big\|_{L^2(\RE_{+})}^2
+ 2 \sum_{\ell = 1}^{n} |\MC_{\ell 1}|^2\, \big\|\widetilde{\psi}^{\hbar}_{\sigma_t,-\xi_t} - \Xi^{\hbar}_{\sigma_t,-\xi_t}\big\|_{L^2(\RE_{+})}^2 \leq 4C_0^2\,e^{- \,2\eps\,{q_t^2 \over \hbar |\sigma_t|^2}}\,.
\end{align*}
Summing up, the above bounds imply the thesis.
\end{proof}

\subsection{Comparing the wave and scattering operators. Proof of Theorem \ref{t:waveoperators}}

\begin{proof}[{\bf Proof of  Theorem \ref{t:waveoperators}}]Note the identity $\MC\,(1,0,\dots,0)^T = (1-2/n,-2/n,\dots,-2/n)^T $ and recall the expression of $\OmQ^{\pm}$ given  in Eq. \eqref{house}. Then, by simple computations we get ($\ell = 1,\dots,n$)
\[
( \OmQ^{\pm}\, \bm{\Xi}^{\hbar}_{\sigma_0,\xi})_\ell = \left\{\begin{aligned}
&\bigg(1-\frac1n \,(1 \mp \mathcal F_c^*\mathcal F_s)\bigg) \Xi^{\hbar}_{\sigma_0,\xi}&\quad \mbox{if\, $\ell = 1$}\,, \\
&-\frac1n\, (1 \mp \mathcal F_c^*\mathcal F_s)\, \Xi^{\hbar}_{\sigma_0,\xi}&\quad \mbox{if\, $\ell \neq  1$}\,.
\end{aligned}\right. 
\]
Additionally, recalling  the expression of $\OmC^{\pm}$ given  in Eq. \eqref{Wclexp},  we get
\[\begin{aligned}
( \OmC^{\pm}\, \bm{\Sigma}^{\hbar}_{\sigma_0,x})_\ell (\xi) 
& = \left\{\begin{aligned}
&\bigg(1-\theta(\mp p)\,\frac2n\bigg)\Sigma^\hbar_{\sigma_0,x}(\xi) &\quad \mbox{if\, $\ell = 1$}\,, \\
&-\theta(\mp p)\,\frac2n\,\Sigma^\hbar_{\sigma_0,x}(\xi) &\quad \mbox{if\, $\ell \neq 1$}\,.
\end{aligned}\right. 
\end{aligned}\]
In view of these results  together with the identity \eqref{CS1} we derive
\[\begin{aligned}
\big\|\OmQ^{\pm}\, \bm{\Xi}^{\hbar}_{\sigma_0,\xi} - (\OmC^{\pm}\bm{\Sigma}^{\hbar}_{\sigma_0,(\cdot)})(\xi)\big\|_{L^2(\GG)} 
& =  \frac{1}{\sqrt n}\,
\big\|\big(1 \mp \mathcal F_c^*\mathcal F_s - 2 \theta(\mp p) \big)\,\Xi^{\hbar}_{\sigma_0,\xi}\big\|_{L^2(\RE_{+})} \\ 
& =  \frac{1}{\sqrt n}\,
\big\|\big((1 - 2 \theta(\mp p) ) \mathcal F_c \mp \mathcal F_s \big)\,\Xi^{\hbar}_{\sigma_0,\xi}\big\|_{L^2(\RE_{+})} \,.
\end{aligned}
\]
By the bound $\|(1 - 2 \theta(\mp p) ) \mathcal F_c \mp \mathcal F_s\| \leq |1 - 2 \theta(\mp p) | \|\mathcal F_c \| + \|\mathcal F_s \| \leq 2 $ and by Eq. \eqref{COS}, we infer 
\[
\big\|\big((1 - 2 \theta(\mp p) ) \mathcal F_c \mp \mathcal F_s \big)\Xi^{\hbar}_{\sigma_0,\xi}\big\|_{L^2(\RE_{+})} \leq 2 C_0\,  e^{- \,\eps\,{q^2 \over \hbar \sigma_0^2}} + 
\big\|\big((1 - 2 \theta(\mp p) ) \mathcal F_c \mp \mathcal F_s \big)\widetilde{\psi}^{\hbar}_{\sigma_0,\xi} \big\|_{L^2(\RE_{+})}.
\]

In what follows we prove the following upper bound  which concludes the proof of the theorem
\begin{equation}\label{tillerman}
\big\|\big((1 - 2 \theta(\mp p) ) \mathcal F_c \mp \mathcal F_s \big)\widetilde{\psi}^{\hbar}_{\sigma_0,\xi} \big\|_{L^2(\RE_{+})} 
\leq \sqrt{2}\, \Big(e^{-\frac{\sigma_0^2p^2}{\hbar }}+e^{-\frac{q^2}{4\hbar \sigma_0^2}}\Big)\,. 
\end{equation}

We start with the identity 
\[
\big\|\big((1 - 2 \theta(\mp p) ) \mathcal F_c \mp \mathcal F_s \big)\widetilde{\psi}^{\hbar}_{\sigma_0,\xi} \big\|_{L^2(\RE_{+})}^2 
=
\frac2\pi \int_0^\infty\!\! dk \left| \int_0^\infty\!\! dx \, \bigg(\big(1 - 2 \theta(\mp p)\big) \cos (kx) \pm i \sin(kx) \bigg)   {\psi}^{\hbar}_{\sigma_0,\xi}(x) \right|^2 . 
\]
Considering separately the cases $p>0$ and $p<0$ for the two possible choices of the signs, it is easy to convince oneself that 
\[
\big\|\big((1 - 2 \theta(\mp p) ) \mathcal F_c \mp \mathcal F_s \big)\widetilde{\psi}^{\hbar}_{\sigma_0,\xi} \big\|_{L^2(\RE_{+})}^2 
=\frac2\pi\left\{
\begin{aligned}
& \int_0^\infty\!\! dk \left| \int_0^\infty\!\! dx \; e^{ikx}  \,{\psi}^{\hbar}_{\sigma_0,\xi}(x) \right|^2 &\quad \mbox{if\, $p>0$}\,, \\ 
& \int_0^\infty\!\! dk \left| \int_0^\infty\!\! dx \; e^{- ikx}  \,  {\psi}^{\hbar}_{\sigma_0,\xi}(x) \right|^2 & \quad \mbox{if\, $p<0$}\,.
\end{aligned}\right.
\]
Recall that the Fourier transform of ${\psi}^{\hbar}_{\sigma_0,\xi}$ is given by
\[
\big(\mathcal {F}{\psi}^{\hbar}_{\sigma_0,\xi}\big) (k )
 : = \frac1{\sqrt{2\pi}} \int_\RE\! dx \; e^{-ikx} \,  \psi^{\hbar}_{\sigma_0,\xi} (x)= \sqrt{\sigma_0}  \left({2 \hbar \over \pi}\right)^{\!\!1/4} e^{-{\hbar \sigma_0^2}(k- p/\hbar)^2 -  i k q}\,. 
 \]
Let us assume $p>0$, we have the chain of inequalities/identities  
\[\begin{aligned}
\left(\frac2\pi\int_0^\infty\!\! dk \left| \int_0^\infty\!\! dx\; e^{ikx}\,  {\psi}^{\hbar}_{\sigma_0,\xi}(x) \right|^2   \right)^{1/2}
& = 2\;\big\| \big(\mathcal F\theta {\psi}^{\hbar}_{\sigma_0,\xi}\big)(-\,\cdot)\big\|_{L^2(\RE_+)}\\ 
& \leq 2\,\big\|\big(\mathcal F{\psi}^{\hbar}_{\sigma_0,\xi}\big)(-\,\cdot)\big\|_{L^2(\RE_+)} + 2\,\big\| \big(\mathcal F(1-\theta) {\psi}^{\hbar}_{\sigma_0,\xi}\big)(-\,\cdot)\big\|_{L^2(\RE_+)}\,.
\end{aligned}\]
Reasoning like for the bound in Eq. \eqref{gallery}, we obtain 
\[
\big\|\big(\mathcal F{\psi}^{\hbar}_{\sigma_0,\xi}\big)(-\,\cdot)\big\|_{L^2(\RE_+)} = \left(\int_0^\infty\!\! dk \left| \big(\mathcal F{\psi}^{\hbar}_{\sigma_0,\xi}\big)(-k ) \right|^2 \right)^{1/2}  
\leq {1 \over \sqrt{2}}\; e^{-\frac{\sigma_0^2p^2}{\hbar }}
\]
and 
\[\begin{aligned}
\big\| \big(\mathcal F(1-\theta) {\psi}^{\hbar}_{\sigma_0,\xi}\big)(-\,\cdot)\big\|_{L^2(\RE_+)} 
& \leq \big\| \mathcal F(1-\theta) {\psi}^{\hbar}_{\sigma_0,\xi}\big\|_{L^2(\RE)} \\ 
& = \big\| (1-\theta) {\psi}^{\hbar}_{\sigma_0,\xi}\big\|_{L^2(\RE)} =\left( \int_{-\infty}^0\!\! dx\, \left|  {\psi}^{\hbar}_{\sigma_0,\xi} (x)\right|^2\right)^{1/2} 
\leq {1 \over \sqrt{2}}\; e^{-\frac{q^2}{4\hbar \sigma_0^2}},
\end{aligned}\]
which conclude the proof of the bound \eqref{tillerman} for $p>0$. The proof of the bound for $p<0$ is identical and we omit it.

Identity \eqref{th2so} follows immediately from Eqs. \eqref{whatis1} and \eqref{whatis2}.
\end{proof}

\section{Final remarks\label{s:5}}

\subsection{A comparison with the standard approach to the definition of a classical dynamics on the graph}

There is no  trajectory of  a classical particle which is the semiclassical limit of $e^{-i {t \over \hbar} \HK} \bm{\Xi}^{\hbar}_{\sigma_0,\xi}$. As a consequence, the semiclassical  dynamics is not described by the Hamilton equations. One way to overcome this difficulty is to  assign a probability to every possible path on the graph. Typically the probability of a certain path is postulated, and given in terms of  the square modulus of the   quantum transition (or stability) amplitudes (see, e.g.,  \cite[Sec. II.A]{BG-pre01_1} or \cite[Sec. 6.1]{BK}). For a star-graph the latter coincide with the elements of the (vertex) scattering matrix, defined for generic boundary conditions, e.g.,  in \cite[Thm. 2.1]{KS-jpa99} or \cite[Lem. 2.1.3]{BK}. For  Kirchhoff boundary conditions the elements of the  scattering matrix are   given by $\frac{2}{n} - \delta_{\ell,\ell'}$, $\ell,\ell' = 1, \dots,n$,
see \cite[Eq. (1)]{BG-pre01_1} (for the star-graph $C_{bb'}=1$), and \cite[Ex. 2.1.7, p. 41]{BK}. This is the approach used (for compact graphs) by Kottos and Smilansky in \cite{KS-prl97} and in several other works, see, e.g., \cite{BG-pre01_1}, the review \cite{GS-ap06}, and the monograph \cite{BK}. We have already noted that, up to a sign, the coefficients $\frac{2}{n} - \delta_{\ell,\ell'}$  coincide with the elements of the matrix $\MC$ identifying both the classical and quantum scattering operators. 

In our paper we followed a different train of thought. We wanted to recover the limiting classical dynamics on the star-graph starting from the trivial dynamics of classical particle on the half-line with elastic collision in the origin. To do so we made use of a  Kre\u{\i}n's formula  to find and classify singular perturbations of self-adjoint operators, see \cite{P01} and  \cite{P08}. We remark that, in a similar way, one can reconstruct the Hamiltonian $H_K$ starting from the free Hamiltonian of a quantum particle on the half-line with Dirichlet conditions in the origin. 

We defined the generator of the  trivial dynamics  on the half-line through Eq. \eqref{group}.  Note that if $f \in \dom(\LD)$, then $f_t = e^{-i t \LD} f$ satisfies the Liouville equation 
\[
i\pd{}{t} f_t = \LD f\,,
\]
but the action of the group can be extended in a natural way to any bounded function. 

Since the evolution is unitary in $L^2(\RE_+\times \RE)$, if $\|f\|_{L^{2}(\RE_{+}\times\RE)} =1$, we can interpret 
\begin{equation*}
\rho_t(q,p) := \big|\big(e^{-i t \LD}  f\big)(q,p)\big|^2
\end{equation*}
as a density in the phase space $\RE_+\times\RE$. Setting $\rho(q,p) := | f(q,p)|^2$, for all $t\in\RE$ one has that 
\begin{equation}\label{rhot}
\rho_t(q,p) = \left\{ 
\begin{aligned} 
&\rho\Big(q-\frac{pt}m,p\Big) \qquad &\text{if }\,q-\frac{pt}m >0\,, \\ 
& \rho\Big(\!-q+\frac{pt}m,-p\Big) \qquad &\text{if }\,q-\frac{pt}m <0 \,.
\end{aligned}
\right.
\end{equation}
and it  satisfies the equation 
\[
i\pd{}{t}\rho_t = - \,i\,X_{0}\cdot\nabla\rho_t\,,
\]
for all $f\in C^\infty_c(\RE_+\times\RE)$ and $q-\frac{pt}m \neq 0$. 

We remark that, assuming elastic collision at the origin, a classical particle moving on the half-line follows a simple, though possibly discontinuous, trajectory in the phase space: at any time $t\in\RE $ any  initial state $(q,p) \in \RE_+\times \RE$ is mapped into
\begin{equation}\label{varphit}
\varphi_t(q,p) := \left\{ 
\begin{aligned} 
&\Big(q+\frac{pt}m,p\Big) \qquad &\text{if }\,q+\frac{pt}m >0\,, \\ 
& \Big(\!-q-\frac{pt}m,-p\Big) \qquad &\text{if }\,q+\frac{pt}m <0 \,.
\end{aligned}
\right.
\end{equation}
Hence, given a density $\rho : \RE_+\times \RE \to \RE_+$ in the phase space, one has  the identity $\rho_t(q,p) = \rho(\varphi_{-t}(q,p))$ (see  Eqs. \eqref{rhot} and \eqref{varphit}). In this sense,  the group $e^{-i t \LD} $ given in Eq. \eqref{group} describes a classical particle on the half-line. The function $f_t:=e^{-i t \LD}f$ should be interpreted as a classical wave function, with associated probability density function in the phase space given by $\rho_t(q,p) := \big|f_t (q,p)\big|^2$. 

On a graph this interpretation fails when the generator of the dynamics is $e^{-i t L_K}$. In particular, from Eq. \eqref{expLKgroup}, for all $t\in\RE$ it would follow 
\begin{equation*}
\rho_{\ell,t}(q,p) = \big|\big(e^{-i t L_K} \fv\big)_\ell(q,p)\big|^2 =
\left\{ 
\begin{aligned} 
&\bigg|f_\ell\Big(q-\frac{pt}m,p\Big)\bigg|^2 \qquad &\text{if }\,q-\frac{pt}m >0\,, \\ 
&\bigg| \sum_{\ell'}(\MC)_{\ell,\ell'} f_{\ell'}\Big(\!-q+\frac{pt}m,-p\Big)\bigg|^2 \qquad &\text{if }\,q-\frac{pt}m <0 \,,
\end{aligned}
\right.
\end{equation*}
but this ``density'' cannot be understood in terms of a trajectory of the classical particle as $\rho_t(q,p) = \rho(\varphi_{-t}(q,p))$. Also, in general, it   does not coincide with the time evolution of the initial density $|\fv|^2$ as prescribed by Barra and Gaspard, see \cite[Eq. (10)]{BG-pre01_1}. The latter, adapted to our setting and notation, and taking into account the fact that we are considering a non-compact graph,  would give (for all $t \in \RE$)
\begin{equation*}
\rho_{\ell,t}^{BG}(q,p)=
\left\{ 
\begin{aligned} 
& \Big|f_\ell\Big(q-\frac{pt}m,p\Big)\Big|^2 \qquad &\text{if }\,q-\frac{pt}m >0\,, \\ 
& \sum_{\ell'}|(\MC)_{\ell,\ell'}|^2\, \Big|f_{\ell'}\Big(\!-q+\frac{pt}m,-p\Big)\Big|^2 \qquad &\text{if }\,q-\frac{pt}m <0\,.
\end{aligned}
\right.
\end{equation*}
Nevertheless, our definition of the classical dynamics in terms of a classical wave function  turns out to be useful to study the semiclassical limit of the quantum evolution of approximated coherent states. In many ways our approach is similar to the one of Hagedorn \cite{Hag1}. 

In general, a coherent state (on the real-line) is the wave function $\psi^{\hbar}:\RE\to\CO$ defined as 
\[
\psi^{\hbar}(x) = \psi^{\hbar}(\sigma,\sigmap,q,p;x) :=\frac{1}{(2\pi\hbar)^{1/4}\sqrt{\sigma}}\ \exp\left({-\,\frac{\sigmap}{4\hbar \sigma}\,(x-q)^{2}+\frac{i}{\hbar}\,p(x-q)}\right)\,, 
\]
with $p,q\in\RE$ and $\sigma,\sigmap \in\CO\backslash\{0\}$ such that $\Re(\sigmap\sigma^{-1}) = |\sigma|^{-2} >0$. 

In his seminal paper \cite{Hag1},  Hagedorn provides the semiclassical evolution of a coherent state in the presence of  a regular $(C^2(\RE))$ potential. One main observation in \cite{Hag1} is that the quantum evolution of an initial datum of the same form  as above,   is approximated (in $L^2(\RE)$-norm for $\hbar$ small enough) by the wave-function $e^{\frac{i}{\hbar}A_t}\psi^{\hbar}(\sigma_t,\sigmap_t, q_t,p_t;x)$ where: the pair   $(q_t,p_t)$ is determined by the Hamilton equations
\[
\dot p_t=-V'(q_t)\;,\qquad \dot q_t=\frac1{m}\,p_t\,,
\]
here $V$ is the interaction potential and the  initial datum is $(q_0,p_0) = (q,p)$; $\sigma_t$ and $\sigmap_t$ are given by 
\[
\sigma_t = \pd{q_t}{q} \,\sigma + \frac{i}{2} \,\pd{q_t}{p}\, \sigmap\;,\qquad \sigmap_t = -2i\,\pd{p_t}{q}\,\sigma + \pd{p_t}{p}\, \sigmap\,;
\]
and $A_t$ is given by 
\[A_t = \int_0^t ds\,\Big(\frac{p_s^2}{2m} -V(q_s)\Big)\;.\]

In our notation, one can think to a classical function given by $\phi^{\hbar}(\sigma,\sigmap,x;q,p) = \psi^{\hbar}(\sigma,\sigmap, q,p;x)$. With this identification  $\psi^{\hbar}(\sigma_t,\sigmap_t, q_t,p_t;x) = \phi^{\hbar}(\sigma_t,\sigmap_t, x;q_t,p_t) =e^{iL_V t} \phi^{\hbar}(\sigma_t,\sigmap_t,x)(q,p)$, with $L_V$ the Liouville operator associated to the vector field of the classical Hamiltonian $\frac{p^2}{2m} + V(q)$. This is analogous to $ e^{it L_{K}} \bm{\Sigma}^{\hbar}_{\sigma_t,x}(\xi)$. 

\subsection{Coherent states on a star-graph with an even number of edges}

Obviously, by superposition, one could consider initial states of the form 
\begin{equation*}
\bm{\Xi}^{\hbar}_{(\sigma_1,\xi_1),\dots,(\sigma_n,\xi_n)} \equiv 
\begin{pmatrix}\Xi^{\hbar}_{\sigma_1,\xi_1} \\ 
\vdots \\
\Xi^{\hbar}_{\sigma_n,\xi_n}
\end{pmatrix},
\end{equation*}
with $\sigma_\ell>0$ and $\xi_\ell = (q_\ell,p_\ell)\in \RE_+\times \RE$, and results similar to the ones stated in Theorems \ref{t:dynamics} and \ref{t:waveoperators} hold true (with additive error terms). For $n$ even it is possible to  construct states  that propagate exactly like coherent states on the real line.  Suppose $n$ even and  consider a state $ \bm{\mathring{\psi}^\hbar_{\sigma,\xi}}$ defined component-wisely by
\[
 (\bm{\mathring{\psi}^\hbar_{\sigma,\xi}})_{\ell} : = \left\{\! \begin{array}{ll}
\psi^\hbar_{\sigma,\xi} & \qquad \mbox{if }\, \ell = 1,\dots,n/2\,, \\ 
\psi^\hbar_{\sigma,-\xi} & \qquad \mbox{if }\,\ell = n/2 + 1,\dots, n\,.
 \end{array} 
\right. 
\]
It is easy to check that such states belong to the domain of $H_K$. Next, consider the state $ e^{\frac{i}{\hbar}A_t} \bm{\mathring{\psi}^\hbar_{\sigma_t,\xi_t}} $. Taking the time derivative component by component one has 
\[
i\hbar\,\frac{\partial}{\partial t} \, e^{\frac{i}{\hbar}A_t}(\bm{\mathring{\psi}^\hbar_{\sigma_t,\xi_t}})_{\ell} = -\, \frac{\hbar^2}{2m}\,  e^{\frac{i}{\hbar}A_t}(\bm{\mathring{\psi}^\hbar_{\sigma_t,\xi_t}})_{\ell}''\,, 
\]
by the definition of coherent states, see Eq. \eqref{initial}. Since $ e^{\frac{i}{\hbar}A_t}\bm{\mathring{\psi}^\hbar_{\sigma_t,\xi_t}} \in \dom(H_K)$ the latter is equivalent to 
\[
i\hbar\frac{\partial}{\partial t}  e^{\frac{i}{\hbar}A_t}\bm{\mathring{\psi}^\hbar_{\sigma_t,\xi_t}} =H_K\, e^{\frac{i}{\hbar}A_t}\bm{\mathring{\psi}^\hbar_{\sigma_t,\xi_t}}\,. 
\]
Moreover $e^{\frac{i}{\hbar}A_t}\bm{\mathring{\psi}^\hbar_{\sigma_t,\xi_t}} \big|_{t=0} =  \bm{\mathring{\psi}^\hbar_{\sigma_0,\xi}} $. Hence, $e^{-i\frac{t}{\hbar}H_{K}} \bm{\mathring{\psi}^\hbar_{\sigma_0,\xi}} = e^{\frac{i}{\hbar}A_t} \bm{\mathring{\psi}^\hbar_{\sigma_t,\xi_t}} $.  On the other hand, define the classical state $\bm{\mathring{\phi}^\hbar_{\sigma,x}}$, component-wisely, by $(\bm{\mathring{\phi}^\hbar_{\sigma,x}})_\ell(\xi) := (\bm{\mathring{\psi}^\hbar_{\sigma,\xi}})_\ell(x)$. Noting the identity $\sum_{\ell'=1}^n\big(\frac{2}{n} - \delta_{\ell,\ell'}\big) (\bm{\mathring{\phi}^\hbar_{\sigma,x}})_{\ell'}(- \xi) = (\bm{\mathring{\phi}^\hbar_{\sigma,x}})_\ell(\xi)$, and by using Eq. \eqref{expLKgroup}, it is easy to infer the identity $\big(e^{itL_{K}} \bm{\mathring{\phi}^\hbar_{\sigma_t,(\cdot)}}\big)(\xi) = \bm{\mathring{\phi}^\hbar_{\sigma_t,(\cdot)}}(\xi_t) = \bm{\mathring{\psi}^\hbar_{\sigma_t,\xi_t}}$, so that 
\[
e^{-i\frac{t}{\hbar}H_{K}}\,\bm{\mathring{\psi}^\hbar_{\sigma_0,\xi}}  = e^{\frac{i}{\hbar}\Szero_{t}} \big(e^{itL_{K}} \bm{\mathring{\phi}^\hbar_{\sigma_t,(\cdot)}}\big)(\xi)\,, 
\]
which is equivalent to \eqref{exact}. In this sense, up to a normalization factor $\sqrt{2/n}$, states of the form $\mathring{\bm{\psi}}^\hbar_{\sigma,\xi}$ are coherent states on a star-graph with an even number of edges.

\appendix
\section{Wave operators for the pair (Dirichlet Laplacian, Neumann Laplacian) on the half-line\label{app}}

\begin{proposition}\label{p:WND} Let  $\Omega^{\pm}_{ND}$ be the wave operators for the pair (Dirichlet Laplacian, Neumann Laplacian) in $L^2(\RE_+)$, defined by 
\[
\Omega^{\pm}_{ND}:=\slim_{t \to \pm \infty} \UN_{-t}\, \UD_t \,.
\]
There holds true: 
\[
\Omega^{\pm}_{ND} =  \pm \mathcal F_c^* \mathcal F_s \;. 
\]
\end{proposition}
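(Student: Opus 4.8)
The plan is to diagonalize both Hamiltonians by the very transforms appearing in the statement and thereby reduce the problem to a single oscillatory-integral limit. First I would record that, with the normalizations in Eqs.~\eqref{F-sine}--\eqref{F-cosine}, the maps $\mathcal F_s,\mathcal F_c:L^2(\RE_{+})\to L^2(\RE_{+})$ are unitary and coincide with the restriction to $k>0$ of the full Fourier transform $\mathcal F$ of, respectively, the odd and the even extension of $\psi$. Since $\mathcal F$ diagonalizes $-\partial_x^2$ and the odd (resp. even) sector selects the Dirichlet (resp. Neumann) realization, the two transforms diagonalize the corresponding half-line Laplacians, so that $\UD_t=\mathcal F_s^*\,e^{-i(t/\hbar)\omega}\,\mathcal F_s$ and $\UN_{-t}=\mathcal F_c^*\,e^{i(t/\hbar)\omega}\,\mathcal F_c$ with $\omega(k):=\hbar^2k^2/(2m)$ (denoting also by $\omega$ the associated multiplication operator). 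Consequently
\[
\UN_{-t}\,\UD_t = \mathcal F_c^*\,\big(e^{i(t/\hbar)\omega}\,W\,e^{-i(t/\hbar)\omega}\big)\,\mathcal F_s\,,\qquad W:=\mathcal F_c\,\mathcal F_s^*\,,
\]
and it suffices to prove $\slim_{t\to\pm\infty}e^{i(t/\hbar)\omega}\,W\,e^{-i(t/\hbar)\omega} = \pm\,\unit$, whence $\Omega^{\pm}_{ND}=\mathcal F_c^*(\pm\unit)\mathcal F_s=\pm\,\mathcal F_c^*\mathcal F_s$ follows immediately.

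Next I would compute the integral kernel of $W$. Inserting the kernels of $\mathcal F_c$ and $\mathcal F_s^*$ and using $\cos(kx)\sin(k'x)=\tfrac12[\sin((k'+k)x)+\sin((k'-k)x)]$ together with the distributional identity $\int_0^\infty\sin(as)\,ds=\mathrm{p.v.}\,a^{-1}$ (as distributions in $a$), one finds
\[
(W\phi)(k)=\frac{i}{\pi}\int_0^\infty\!\Big(\frac{1}{k'+k}+\mathrm{p.v.}\,\frac{1}{k'-k}\Big)\phi(k')\,dk'\,.
\]
Thus $W$ is a bounded (indeed unitary) Hilbert-transform-type singular integral operator, with a genuine pole on the diagonal $k'=k$ and a smooth part $1/(k'+k)$.

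The core of the proof is then the strong limit $\slim_{t\to\pm\infty}e^{i(t/\hbar)\omega}\,W\,e^{-i(t/\hbar)\omega}$, which I would establish first on the dense set $C^\infty_c(0,\infty)$ and then extend to all of $L^2(\RE_{+})$ by the uniform bound $\|\UN_{-t}\UD_t\|=1$ via the standard $\varepsilon/3$ argument. For $\phi\in C^\infty_c(0,\infty)$ the operator acts as $\int_0^\infty e^{i\Phi_t(k,k')}\,W(k,k')\,\phi(k')\,dk'$ with $\Phi_t(k,k')=(\hbar t/2m)(k^2-k'^2)$. Since $\mathrm{supp}\,\phi\subset(0,\infty)$, the phase has no stationary point on the support ($\partial_{k'}\Phi_t\propto k'\neq0$) and the endpoint $k'=0$ is irrelevant; hence, away from the diagonal $k'=k$ both the smooth kernel $1/(k'+k)$ and the regular part of $\mathrm{p.v.}\,1/(k'-k)$ give, by non-stationary phase (repeated integration by parts), a contribution that is $O(|t|^{-N})$ for every $N$. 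The whole limit comes from the pole: with $s=k'-k$ one has $\Phi_t=-(\hbar t k/m)\,s+O(s^2)$, and the Sokhotski--Plemelj identity $\mathrm{p.v.}\!\int s^{-1}e^{-iTs}\,ds=-i\pi\,\sgn(T)$ with $T=\hbar t k/m$ (so $\sgn T=\sgn t$ since $k>0$) yields $\tfrac{i}{\pi}\,(-i\pi\,\sgn t)\,\phi(k)=\sgn(t)\,\phi(k)$. Hence the limit equals $\pm\,\unit$, which proves the claim.

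The main obstacle will be the rigorous evaluation of the pole contribution in the last step: one must confine the analysis to a neighbourhood of $k'=k$, control the quadratic remainder $e^{-i(\hbar t/2m)s^2}$ (negligible on the effective scale $s\sim m/(\hbar t k)$), and apply Sokhotski--Plemelj against the smooth factor $\phi(k+s)$ to extract precisely $\phi(k)$ with the correct sign $\sgn(t)$, while showing that the cutoff tails and the off-diagonal parts vanish. Once the limit is known on $C^\infty_c(0,\infty)$, upgrading pointwise convergence to $L^2$-strong convergence is routine thanks to the uniform operator bound, completing the identification $\Omega^{\pm}_{ND}=\pm\,\mathcal F_c^*\mathcal F_s$.
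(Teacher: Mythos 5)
Your argument is sound and arrives at the correct operators, but it follows a genuinely different route from the paper's. You conjugate everything into the momentum representation, isolating the time\nobreakdash-independent singular integral operator $W=\mathcal F_c\,\mathcal F_s^{*}$ with kernel $\tfrac{i}{\pi}\big((k'+k)^{-1}+\mathrm{p.v.}\,(k'-k)^{-1}\big)$, and extract the limit $\pm\,\unit$ of $e^{i(t/\hbar)\omega}\,W\,e^{-i(t/\hbar)\omega}$ from the diagonal pole via Sokhotski--Plemelj; your sign bookkeeping ($\sgn T=\sgn t$ for $k>0$) is correct, and the off-diagonal and quadratic-phase remainders are indeed negligible on the scale $s\sim m/(\hbar t k)$. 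The paper instead stops the conjugation one step earlier: it writes $\|(\UN_{-t}\UD_t\mp\mathcal F_c^{*}\mathcal F_s)\varphi\|=\|(\UD_t\,\mathcal F_s^{*}\mp\UN_t\,\mathcal F_c^{*})\mathcal F_s\varphi\|$ and observes that $\big((\UD_t\mathcal F_s^{*}\mp\UN_t\mathcal F_c^{*})\mathcal F_s\varphi\big)(x)=\mp\tfrac{2}{\sqrt{2\pi}}\int_0^\infty dk\,e^{\mp ikx-ik^2t}(\mathcal F_s\varphi)(k)$ is an oscillatory integral with a \emph{smooth} amplitude whose phase has no stationary point for $x,k>0$ and the relevant sign of $t$; a rescaling $k\mapsto kt^{1/2}$, $x\mapsto x t^{-1/2}$ plus one integration by parts then gives the explicit $L^2$ decay rate $Ct^{-1/4}$ for $\varphi\in C_0^\infty(\RE_+)$, and the density argument finishes. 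The trade-off is clear: the paper's version never meets a principal-value kernel and delivers a quantitative rate almost for free, whereas your version buys structural insight --- the scattering content is packaged in the unitary Hilbert-transform-type operator $W$ and the $\pm$ sign emerges transparently from the Fourier transform of $\mathrm{p.v.}\,1/s$ --- at the price of the more delicate singular-integral/stationary-phase analysis near $k'=k$ that you correctly flag as the main remaining work. Both routes also need the same tail control in the output variable $k$ to upgrade pointwise statements to $L^2$ convergence, which in your scheme requires checking that the non-stationary-phase bounds are square-integrable in $k$ over $(0,\infty)$; this is routine for $\phi\in C_c^\infty(0,\infty)$ but should be said explicitly.
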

\begin{proof}
We use a density argument. Let $\psi\in L^2(\RE_+)$.  For any $\eps>0$ there exists $\varphi \in C_0^\infty(\RE_+) $ such that $\|\psi - \varphi\|_{L^2(\RE_+)} \leq \eps/4$. Recalling the trivial bounds $\|U^{D/N}_{t}\| \leq 1$ and $\|\mathcal F_{s/c} \| \leq 1$, we infer 
\[
\|(\UN_{-t}\,\UD_t \mp \mathcal F_c^* \mathcal F_s)\psi\|_{L^2(\RE_+)} \leq \frac{\eps}{2} +  \|(\UN_{-t}\, \UD_t \mp \mathcal F_c^* \mathcal F_s)\varphi\|_{L^2(\RE_+)}\,. 
\]

Hence, it is enough to prove that 
\begin{equation}\label{gilmour}
\lim_{t\to \pm\infty } \|(\UN_{-t}\, \UD_t \mp \mathcal F_c^* \mathcal F_s)\varphi\|_{L^2(\RE_+)} = 0 
\qquad \forall \varphi \in C_0^\infty(\RE_+) \,.
\end{equation}

Note that
\[
\|(\UN_{-t}\, \UD_t \mp \mathcal F_c^* \mathcal F_s)\varphi\|_{L^2(\RE_+)} = \|( \UD_t \mp\UN_{t} \mathcal F_c^* \mathcal F_s)\varphi\|_{L^2(\RE_+)}  = \|( \UD_t \mathcal F_s^*\mp\UN_{t} \mathcal F_c^* )\mathcal F_s\varphi\|_{L^2(\RE_+)} \,.
\]

Moreover, for all $t \in \RE$ the following identities hold true:
\begin{gather*}
(\UD_t \psi)(x) = \frac{2i}{\sqrt{2\pi}} \int_0^\infty\!\! dk \; e^{-ik^2 t }\sin (kx)\, (\mathcal F_s \psi)(k)\,;\\
(\UN_t \psi)(x) = \frac{2}{\sqrt{2\pi}} \int_0^\infty\!\! dk \; e^{-ik^2 t }\cos (kx)\, (\mathcal F_c \psi)(k)\,.
\end{gather*}
Hence,
\begin{gather*}
(\UD_t \mathcal F_s^*\psi)(x) = \frac{2i}{\sqrt{2\pi}} \int_0^\infty\!\! dk \; e^{-ik^2 t }\sin (kx)\, \psi(k)\,,\\
(\UN_t \mathcal F_c^*\psi)(x) = \frac{2}{\sqrt{2\pi}} \int_0^\infty\!\! dk \; e^{-ik^2 t }\cos (kx)\, \psi(k)\, ,
\end{gather*}
which give
\[\begin{aligned}
\big(( \UD_t \mathcal F_s^*\mp\UN_{t} \mathcal F_c^* )\mathcal F_s\varphi \big) (x) = &   \frac{2}{\sqrt{2\pi}} \int_0^\infty\!\! dk\; (i\sin (kx)  \mp \cos(kx))\, e^{-ik^2 t }\,(\mathcal F_s \varphi)(k)  \\ 
= &   \mp \frac{2}{\sqrt{2\pi}} \int_0^\infty\!\! dk\; e^{\mp ikx-ik^2 t }\,(\mathcal F_s \varphi)(k) \,. 
\end{aligned}\]

We have obtained the following explicit formula for the quantity we are interested in 
\[
\|(\UN_{-t} \UD_t \mp \mathcal F_c^* \mathcal F_s)\varphi\|_{L^2(\RE_+)}^2 =
\frac{2}{\pi}\int_0^\infty\!\! dx \left|\int_0^\infty\!\! dk\;   e^{\mp ikx-ik^2 t }\,(\mathcal F_s \varphi)(k) \right|^2 .
\]

We note that, to prove the statement  for $W_{+}^{ND}$ we have to study  the limit $t\to +\infty$ of  
\begin{equation}\label{david}
\|(\UN_{-t} \UD_t -  \mathcal F_c^* \mathcal F_s)\varphi\|_{L^2(\RE_+)}^2 =
\frac{2}{\pi}\int_0^\infty\!\! dx \left|\int_0^\infty\!\! dk\;  e^{-  ikx-ik^2 t }\,(\mathcal F_s \varphi)(k) \right|^2 ,
\end{equation}
while, to prove the statement for $W_{-}^{ND}$ we have to study  the limit $t\to -\infty$ of  
\[
\|(\UN_{-t} \UD_t + \mathcal F_c^* \mathcal F_s)\varphi\|_{L^2(\RE_+)}^2 =
\frac{2}{\pi}\int_0^\infty\!\! dx \left|\int_0^\infty\!\! dk\;   e^{ ikx+ik^2 |t| }\,(\mathcal F_s \varphi)(k) \right|^2 . 
\]

In what follows we focus the attention on the limit $t\to+\infty$. The other limit is obtained with trivial modifications. Hence, from now on we assume $t>0$. From Eq. \eqref{david},  changing variables  $k\to \eta  = k\,t^{1/2}$ and $x\to y = x/t^{1/2}$ , we obtain 
\[\begin{aligned}
\|(\UN_{-t} \UD_t - \mathcal F_c^* \mathcal F_s)\varphi\|_{L^2(\RE_+)}^2 
& = \frac{2}{\pi\,t^{1/2}} \int_0^\infty\!\! dy \left|\int_0^\infty\!\! d\eta \;   e^{- i\eta y -i\eta^2 }\,(\mathcal F_s \varphi)(\eta /t^{1/2}) 
 \right|^2 \\
& = \frac{2}{\pi\,t^{1/2}} \int_0^1\! dy \left|F(y, t)\right|^2  + \frac{2}{\pi} \frac1{t^{1/2}}\int_1^\infty\!\! dy \left|F(y, t)\right|^2 ,
\end{aligned}\]
with  
\[
F(y, t): =\int_0^\infty\!\! d\eta \;   e^{-i\eta y -i\eta^2 }\,(\mathcal F_s \varphi)(\eta /t^{1/2}) \;. 
\]

For  any $\varphi \in C_0^\infty(\RE_+) $, $\mathcal F_s \varphi$ belongs to $C^\infty(\RE_+)$, it decays at infinity  faster than any polynomial  in  $1/k$,   $(\mathcal F_s \varphi)(0) = 0$, moreover 
\[ 
|(\mathcal F_s \varphi)(k)| \leq  \frac{2 k}{\sqrt{2\pi}} \int_0^\infty\!\! dx\; x\, |\varphi (x)| \,.
\]
Hence, 
\begin{equation}\label{Cvarphi1}
\int_0^\infty\!\! dk\; \frac1{k^\delta}\, |\mathcal F_s \varphi (k) |< \infty \qquad \text{for all $\delta < 2 $}\,. 
\end{equation}
 
Additionally,  for  any $\varphi \in C_0^\infty(\RE_+) $, $\mathcal F_c \varphi$ belongs to $C^\infty(\RE_+)$, it   decays at infinity  faster than any polynomial  in  $1/k$, and  
$|(\mathcal F_c \varphi)(k)| \leq  \frac{2 }{\sqrt{2\pi}} \int_0^\infty\! dx\, |\varphi (x)|$.
Hence, 
\begin{equation}\label{Cvarphi2}
\int_0^\infty\!\! dk\; \frac1{k^\delta}\, |\mathcal F_c \varphi(k)| < \infty \qquad \text{for all $\delta <1 $}\,. 
\end{equation}

Starting from the identity $e^{-i\eta y -i\eta^2 } = \frac{i}{y+2\eta}\frac{d}{d\eta}e^{-i\eta y -i\eta^2 }$, by integration by parts, we obtain 
\[
F(y, t) =F_1(y, t) + F_2(y,t) \,,
\]
with 
\[
F_1(y, t) : = \int_0^\infty\!\! d\eta \;   e^{-i\eta y -i\eta^2 } \frac{2i}{(y+2\eta)^2}\,(\mathcal F_s \varphi)(\eta /t^{1/2})
\] 
and 
\[ 
F_2(y, t) : =\frac1{t^{1/2}} \int_0^\infty\!\! d\eta \;   e^{-i\eta y -i\eta^2 } \frac{1}{i(y+2\eta)}\,\big(\mathcal F_c((\cdot) \varphi)\big)(\eta /t^{1/2})\,.
\]

Since $y$ and $\eta$ are both positive, from the trivial inequality $\frac{1}{(y+2\eta)^a} \leq \frac{1}{y^b} \frac{1}{(2\eta)^c}$  for all $y, \eta>0$ and for all $a,b,c>0$ such that  $a = b+c$, we deduce that 
\[
|F_1(y, t)| \leq \frac{C}{y^{1/4}} \int_0^\infty\!\!  d\eta\; \frac{1}{\eta^{7/4}}\, |(\mathcal F_s \varphi)(\eta /t^{1/2})| =  \frac{C}{t^{3/8}\,y^{1/4}} \int_0^\infty\!\!  d\xi\;\frac{1}{\xi^{7/4}}\, |(\mathcal F_s \varphi)(\xi )|  =  \frac{C}{t^{3/8}\,y^{1/4}}\,;
\]
here and in the following $C$ denotes a generic positive constant whose value may depend only on integrals of the sine (or cosine) Fourier transform of $\varphi$ (or $(\cdot)\varphi$), as in Eqs. \eqref{Cvarphi1} and \eqref{Cvarphi2}; the value of $C$ may change from line to line and precise values for the constants can be obtained. In a similar way we infer, 
\[ 
|F_2(y, t)| \leq \frac{C}{t^{1/2}\, y^{1/4}} \int_0^\infty\!\! d\eta \;    \frac{1}{\eta^{3/4}}\,\big|\big(\mathcal F_c((\cdot) \varphi)\big)(\eta /t^{1/2})\big| = 
\frac{C}{t^{3/8}\, y^{1/4}} \int_0^\infty\!\! d\xi \;    \frac{1}{\xi^{3/4}}\,\big|\big(\mathcal F_c((\cdot) \varphi)\big)(\xi )\big| = \frac{C}{t^{3/8}\,y^{1/4}} \,. 
\]

Hence, 
\[
\frac{2}{\pi\,t^{1/2}}\int_0^1\! dy\, \left|F(y, t)\right|^2\leq \frac{C}{t^{5/4}} \int_0^1\! dy\; \frac{1}{y^{1/2}} \leq \frac{C}{t^{5/4}}\, .
\]

On the other hand, for $y>1$, 
\[
|F_1(y, t)| \leq C\frac{1}{y} \int_0^\infty\!\!  d\eta\; \frac{1}{\eta} \,|(\mathcal F_s \varphi)(\eta /t^{1/2})| = \frac{C}{y} \int_0^\infty\!\!  d\xi\;\frac{1}{\xi}\, |(\mathcal F_s \varphi)(\xi )|  =  \frac{C}{y} 
\]
and 
\[ 
|F_2(y, t)| \leq \frac{C}{t^{1/2}\,y} \int_0^\infty\!\! d\eta \; \big|\big(\mathcal F_c((\cdot) \varphi)\big)(\eta /t^{1/2})\big| = 
\frac{C}{y}  \int_0^\infty\!\! d\xi\; \big|\big(\mathcal F_c((\cdot) \varphi)\big)(\xi )\big| = \frac{C}{y} \,.
\]
So, we obtain
\[
\frac{2}{\pi\,t^{1/2}}\int_1^\infty\!\! dy\, \left|F(y, t)\right|^2\leq \frac{C}{t^{1/2}} \int_1^\infty\!\!  dy\; \frac{1}{y^2 }\leq \frac{C}{t^{1/2}} \,.
\]

In this way we have proved that, for all $\varphi \in C_0^\infty(\RE_+)$  there exists a constant $C$ such that
\[
\|(\UN_{-t}\, \UD_t - \mathcal F_c^* \mathcal F_s)\varphi\|_{L^2(\RE_+)} \leq \frac{C}{t^{1/4}} \qquad \mbox{for all $t>1$}\,,
\]
and the latter claim  gives the limit in Eq. \eqref{gilmour} for $t\to+\infty$. 
\end{proof}


\begin{thebibliography}{99}
\bibitem{ACFN11} R. Adami, C. Cacciapuoti, D. Finco, D. Noja: Fast solitons on star graphs. \textit{Rev. Math. Phys.} \textbf{23}(04) (2011), 409--451.

\bibitem{BG-pre01_2} F. Barra, P. Gaspard: Transport and dynamics on open quantum graphs. \textit{Phys. Rev. E} \textbf{65} (2001),  016205 (21 pages).

\bibitem{BG-pre01_1} F. Barra, P. Gaspard: Classical dynamics on graphs. \textit{Phys. Rev. E} \textbf{63} (2001),  066215 (22 pages).

\bibitem{BK}  G. Berkolaiko, P. Kuchment: {\it Introduction to quantum graphs.} Mathematical Surveys and Monographs Vol. {\bf 186}, American Mathematical Society, Providence, RI (2013). 

\bibitem{CFP19} C. Cacciapuoti, D. Fermi, A. Posilicano: The semi-classical limit with delta potentials, arXiv:1907.05801 [math-ph] (2019), 31 pages.

\bibitem{C-psim10} V. L. Chernyshev: Time-dependent Schr\"odinger equation: Statistics of the distribution of Gaussian packets on a metric graph. (Russian) Tr. Mat. Inst. Steklova {\bf 270}  (2010) 249--265. Translation in Proc. Steklov Inst. Math. {\bf 270}(1) (2010), 246--262.

\bibitem{CS-mz07} V. L. Chernyshev, A. I. Shafarevich: The semiclassical spectrum of the Schr\"odinger operator on a geometric graph. (Russian) Mat. Zametki {\bf 82}(4) (2007), 606--620. Translation in Math. Notes {\bf 82}(3-4) (2007), 542--554.

\bibitem{CS-rjmp08} V. L. Chernyshev, A. I. Shafarevich:   Semiclassical asymptotics and statistical properties of Gaussian packets for the nonstationary Schr\"odinger equation on a geometric graph. Russ. J. Math. Phys. {\bf 15}(1) (2008),  25--34.

\bibitem{CS-ptrs14}  V. L. Chernyshev, A. I. Shafarevich: Statistics of Gaussian packets on metric and decorated graphs. Philos. Trans. R. Soc. Lond. Ser. A Math. Phys. Eng. Sci. {\bf 372}(2007)  (2014), 20130145 (11 pages).

\bibitem{EN} K.-J. Engel, R. Nagel: {\it One-Parameter Semigroups for Linear Evolution Equations.} Springer, Berlin (2000).

\bibitem{GOO} B. Gaveau, M. Okada, T. Okada: Explicit heat kernels on graphs and spectral analysis, in J. E. Fornaess (Ed.), {\it Several complex variables} (Proceedings of the Mittag-Leffler Institute, Stockholm, 1987-1988), Princeton Math. Notes Vol. {\bf 38}, Princeton University Press (1993).

\bibitem{Hag1} G. A. Hagedorn: Semiclassical Quantum Mechanics I. The $\hbar \to 0$ Limit for Coherent States. {\it Comm. Math. Phys.} {\bf 71} (1980), 77--93.

\bibitem{HS2012} P. R. Halmos, V. S. Sunder: {\it Bounded Integral Operators on $L^2$ Spaces}. Springer-Verlag (1978).

\bibitem{GS-ap06} S. Gnutzmann, U. Smilansky: Quantum graphs: Applications to quantum chaos and universal spectral statistics. {\it Adv. Phys.} {\bf 55} (2006), 527--625.

\bibitem{KosSch} V. Kostrykin, R. Schrader: Laplacians on metric graphs: eigenvalues, resolvents and semigroups, pp. 201--226 in G. Berkolaiko, R. Carlson, S. Fulling and P. Kuchment (Eds.), {\it Quantum Graphs and Their Applications}. Contemporary Math. Vol. {\bf 415}, American Math. Society, Providence, RI (2006).

\bibitem{KS-jpa99}  V. Kostrykin, R. Schrader: Kirchhoff's rule for quantum wires. {\it J. Phys. A} {\bf 32}(4)  (1999), 595--630. 

\bibitem{KS-prl97} T. Kottos, U. Smilansky: Quantum Chaos on Graphs. {\it Phys. Rev. Lett.} {\bf 79}(24) (1997), 4794--4797.

\bibitem{KS-prl00} T. Kottos, U. Smilansky: Chaotic Scattering on Graphs. {\it Phys. Rev. Lett.} {\bf 85}(5) (2000), 968--971.

\bibitem{Maslov} V. P. Maslov: {\it The complex WKB method for nonlinear equations I. Linear theory}.  Birkh\"auser Verlag, Basel
(1994).

\bibitem{P01} A. Posilicano: A Kre\u\i n-like formula for singular perturbations of self-adjoint operators and applications. {\it J. Funct. Anal.} {\bf 183} (2001),  109--147.

\bibitem{P08} A. Posilicano: Self-adjoint extensions of restrictions. {\it Oper. Matrices} {\bf 2}(4) (2008), 483--506.

\bibitem{Weder15} R. Weder: Scattering theory for the matrix Schrödinger operator on the half line with general boundary conditions. {\it  J. Math. Phys.} {\bf56}(9) (2015), 092103 (24 pages). Erratum {\it J. Math. Phys.} {\bf 60}(1) (2019), 019901 (1 page). 
\end{thebibliography}
\end{document}